\newcommand{\newtext}[1]{{\color{black} #1}}
\definecolor{darkgreen}{rgb}{0,0.5,0}
\crefname{theorem}{Theorem}{Theorems}
\Crefname{lemma}{Lemma}{Lemmas}
\Crefname{invariant}{Invariant}{Invariants}
\Crefname{claim}{Claim}{Claims}
\Crefname{observation}{Observation}{Observations}
\Crefname{algorithm}{Algorithm}{Algorithms}
\Crefname{figure}{Figure}{Figures}
\newtheorem{theorem}{Theorem}[section]
\newtheorem{lemma}[theorem]{Lemma}
\newtheorem{definition}[theorem]{Definition}
\newtheorem{invariant}[theorem]{Invariant}
\newtheorem{observation}[theorem]{Observation}
\newtheorem*{remark*}{Remark}
\newcommand{\eps}{\varepsilon}
\newcommand{\eqdef}{\stackrel{\text{\tiny\rm def}}{=}}
\newcommand{\rb}[1]{\left( #1 \right)}
\newcommand{\astar}{{a^{\star}}}
\newcommand{\afirst}{a_{\rm{first}}}
\newcommand{\htau}{\hat{\tau}}
\newcommand{\taustar}{\tau^{\star}}
\newcommand{\jbig}{j_{\rm{big}}}
\newcommand{\jstar}{j^\star}
\newcommand{\Path}[1]{P_{\rm{#1}}}
\newcommand{\Vbase}{V_{\rm{base}}}
\newcommand{\Vnew}{V_{\rm{new}}}
\DeclareRobustCommand*{\leftarc}[1]{\overleftarrow{#1}}
\newcommand{\structure}{\ensuremath{\mathcal{S}}\xspace}
\newcommand{\maxlen}{\ensuremath{\ell_{\textrm{max}}}\xspace}
\newcommand{\maxtau}{\tau_{\textrm{max}}}
\newcommand{\hH}{{\hat{H}}}
\newcommand{\tO}{\tilde{O}}
\newcommand{\cA}{\mathcal{A}}
\newcommand{\cY}{\mathcal{Y}}
\newcommand{\cYstar}{\mathcal{Y}^{\star}}
\newcommand{\true}{\textsc{true}\xspace}
\newcommand{\false}{\textsc{false}\xspace}
\newcommand{\size}{\textsc{size}\xspace}
\newcommand{\AlgPhase}{\textsc{Alg-Phase}\xspace}
\newcommand{\AlgExtendStructures}{\textsc{Extend-Active-Paths}\xspace}
\newcommand{\AlgEdgeMerge}{\textsc{Check-for-Edge-Augmentation}\xspace}
\newcommand{\AlgAugmentStructures}{\textsc{Include-Unmatched-Edges}\xspace}
\newcommand{\AlgMerge}{\textsc{Augment-and-Clean}\xspace}
\newcommand{\AlgBacktrack}{\textsc{Backtrack-Stuck-Structures}\xspace}
\newcommand{\AlgOvertake}{\ensuremath{\textsc{Reduce-Label-and-Overtake}}\xspace}
\newcommand{\PassBundle}{\ensuremath{\textsc{Pass-Bundle}}\xspace}
\newcommand{\storage}{{\textsc{Storage}}\xspace}
\DeclareMathOperator{\RD}{\textsc{RD}\xspace}
\DeclareMathOperator{\poly}{poly}
\newcommand{\sizelimit}{\textsc{limit}\xspace}
\newcommand{\phases}{T}
\newcommand{\Ssize}{\Delta}
\newcommand{\eat}[1]{}
\newcommand{\CONGEST}{\textsf{CONGEST}}
\begin{document}

\title{Deterministic $(1+\eps)$-Approximate Maximum Matching with $\poly(1 / \eps)$ Passes in the Semi-Streaming Model and Beyond}

\author{
	 Manuela Fischer\\
   ETH Zurich \\
   manuela.fischer@inf.ethz.ch\and
	 Slobodan Mitrovi\' c
	\thanks{This work was supported by the Swiss NSF grant No.~P400P2\_191122/1,  MIT-IBM Watson AI Lab and research collaboration agreement No. W1771646, NSF award CCF-1733808, and FinTech@CSAIL. Most of this work was done while the author was affiliated with MIT.}
	\\
   UC Davis \\
	 smitrovic@ucdavis.edu
	\and
	 Jara Uitto \thanks{This work was supported in part by the Academy of Finland, Grant 334238.}\\
   Aalto University \\
	 jara.uitto@aalto.fi
 }
 
\date{}

\maketitle

\begin{abstract}
We present a deterministic $(1+\varepsilon)$-approximate maximum matching algorithm in $\poly 1/\varepsilon$ passes in the semi-streaming model, solving the long-standing open problem of breaking the exponential barrier in the dependence on $1/\varepsilon$. Our algorithm exponentially improves on the well-known randomized $(1/\varepsilon)^{O(1/\varepsilon)}$-pass algorithm from the seminal work by McGregor~[APPROX05], the recent deterministic algorithm by Tirodkar with the same pass complexity~[FSTTCS18]. Up to polynomial factors in $1/\varepsilon$, our work matches the state-of-the-art deterministic $(\log n / \log \log n) \cdot (1/\varepsilon)$-pass algorithm by Ahn and Guha~[TOPC18], that is allowed a dependence on the number of nodes $n$. Our result also makes progress on the Open Problem~60 at sublinear.info\footnote{For details, please visit \href{https://sublinear.info/index.php?title=Open_Problems:60}{https://sublinear.info/index.php?title=Open\_Problems:60}.}.

Moreover, we design a general framework that simulates our approach for the streaming setting in other models of computation. This framework requires access to an algorithm computing an $O(1)$-approximate maximum matching and an algorithm for processing disjoint $(\poly 1 / \varepsilon)$-size connected components.
Instantiating our framework in \textsf{CONGEST} yields a $\poly(\log{n}, 1/\varepsilon)$ round algorithm for computing $(1+\varepsilon$)-approximate maximum matching. In terms of the dependence on $1/\varepsilon$, this result improves exponentially state-of-the-art result by 
Lotker, Patt-Shamir, and Pettie~[LPSP15]. Our framework leads to the same quality of improvement in the context of the Massively Parallel Computation model as well.

\end{abstract}

\newpage

\section{Introduction}
\emph{Maximum Matching} is one of the most fundamental problems in combinatorial optimization and has been extensively studied in the classic centralized model of computation for almost half a century. We refer to \cite{schrijver2003combinatorial} for an overview. In particular, several exact polynomial-time deterministic maximum matching algorithms are known \cite{edmonds1965maximum,hopcroft1973n,micali1980v,gabow1990data}. 
Due to the quickly growing data sets naturally arising in many real-world applications (see \cite{drake2003improved} for an overview), there has been an increasing interest in algorithm design for huge inputs. 
For massive graphs the classical matching algorithms are not only prohibitively slow, but also space complexity becomes a concern. If a graph is too large to fit into the memory of a single machine, all the classical algorithms---which assume random access to the input---are not applicable. 
This demand for a more realistic model for processing modern data sets has led to the proposal of several different computing models that address this shortcoming. 
One that has attracted a lot of attention, especially in the past decade, is the \emph{graph stream model}, which was introduced by Feigenbaum et al.~\cite{feigenbaum2005graph,feigenbaum2005graphDistances,muthukrishnan2005data} in 2005.   
In this model, the edges of the graph are not stored in the memory but appear in an arbitrary (that is, adversarially determined) sequential order, a so-called \emph{stream}, in which they must be processed. The goal is to design algorithms that require little space and ideally only a small constant number of \emph{passes} over the stream.
In particular, it is desirable that the number of passes is independent of the input graph size.
We call an algorithm a $k$-pass algorithm if the algorithm makes $k$ passes over the edge stream, possibly each time in a different order \cite{munro1980selection,feigenbaum2005graphDistances}. 
This model is not only interesting for massive data sets but also whenever there is no random access to the input, for instance, if the input is only defined implicitly. 
Moreover, many insights and techniques from this model naturally carry over to a variety of areas in theoretical computer science, including communication complexity and approximation algorithms.

In the \emph{semi-streaming model}, which is the most commonly established variant of the graph stream model, the algorithm is given $\widetilde{O}(n)$\footnote{The $\widetilde{O}$ hides poly-logarithmic terms, thus $\widetilde{O}(n)=n \cdot \poly( \log n)$.} space for input graphs with $n$ nodes. This has turned out to be the sweet spot since even basic graph problems such as connectivity become intractable with less space \cite{feigenbaum2005graphDistances}. Moreover, note that often even just storing a solution requires $\Omega(n \log n)$ memory.

It is known that finding an exact matching requires linear space in the size of the graph and hence it is not possible to find an exact maximum matching in the semi-streaming model~\cite{feigenbaum2005graph}, at least for sufficiently dense graphs. Nevertheless, this result does not apply to computing a good approximation to the maximum matching in this model. We call an algorithm an \emph{$\alpha$-approximation} if the matching has a size at least $1/\alpha$ times the optimum matching. 

More than 15 years ago, in his pioneering work, McGregor \cite{mcgregor2005finding} initiated the study of arbitrarily good matching approximation algorithms --- that is, $(1+\eps)$-approximate for an arbitrarily small $\eps>0$ --- in the semi-streaming model. He presented a randomized algorithm that needs $\left(1/\eps\right)^{O(1/\eps)}$ passes. The same asymptotic number of passes was recently achieved by a deterministic algorithm by Tirodkar~\cite{tirodkar2018deterministic}. Improving the dependance on $1/\eps$ is left as an open problem in \cite{mcgregor2005finding}:
\\\\
		\begin{minipage}{1\linewidth}
			\begin{mdframed}[backgroundcolor=white, linecolor=red!40!black]
	\emph{
	 ... a weaker dependence [on $1/\eps$] would be
desirable.}
			\end{mdframed}
		\end{minipage}
\\

Summarized, all previously known $(1+\eps)$-approximation algorithms, whether deterministic or randomized, need exponentially in $1/\eps$ many passes. Moreover, while for the special case of bipartite graphs $\poly (1/\eps)$-pass algorithms are known, see \cite{ahn2011linear,Eggert2012}, it is not clear at all how to extend their applicability to general graphs without exponentially increasing the number of passes.

However, to be considered an efficient approximation algorithm in theory, ideally the dependence on all relevant parameters should be polynomial. Indeed, this has been a key property in the qualification of efficiency in parametrized complexity. The question whether there is a $(1+\eps)$-approximate matching algorithm for general graphs with $\poly(1/\eps)$ passes, even if randomness is allowed, thus is arguably one of the most central ones in the area. \cite{Eggert2012} leave this as the most intriguing open problem:
\\\\
		\begin{minipage}{1\linewidth}
			\begin{mdframed}[backgroundcolor=white, linecolor=red!40!black]
	\emph{
	The most intriguing question is ... to find $(1 + \eps)$
approximations of a maximum matching in general graphs, while keeping the bound
on the number of passes independent of $n$ and maintaining a substantially lower number
of passes than McGregor's algorithm needs.}
			\end{mdframed}
		\end{minipage}

\subsection{Our Results}
\begin{table}[h]
\centering
\begin{tabular}{|l||l|l|}
\hline
\backslashbox{Model}{Running Time}  & Our Work & Prior Work \\ \hline
Semi-Streaming & $\poly 1/\eps$ & \makecell{$\exp 1/\eps$ \\ \cite{tirodkar2018deterministic,mcgregor2005finding}} \\ \hline
Linear-Memory MPC & $O(\log \log n \cdot \poly 1/\eps)$ & \makecell{$O(\log \log n \cdot \exp 1/\eps)$,\\ \cite{czumaj2019round,ghaffari2018improved,assadi2019coresets,behnezhad2019exponentially}} \\  \hline
Sublinear-Memory MPC & $\tO(\sqrt{\log n} \cdot \poly 1/\eps)$ & \makecell{$\tO(\sqrt{\log n} \cdot \exp 1/\eps)$ \\ \cite{ghaffari2019sparsifying,onak2018round}} \\ \hline
\textsf{CONGEST} & $O(\log n \cdot \poly 1/\eps)$ & \makecell{$O(\log n \cdot \exp 1/\eps)$ \\ \cite{lotker2015improved}} \\ \hline
\end{tabular}

\caption{\label{table:runningtimes} A summary of the running times in several different models, compared to the previous state-of-the-art, for computing a $(1 + \eps)$-approximate maximum matching. In the distributed setting, ``running time'' refers to the round complexity, while in the streaming setting it refers to the number of passes.}
\end{table}
We break this exponential barrier on the number of passes, and thus solve this long-standing open problem, by devising an algorithm whose pass complexity has polynomial dependence on $1/\eps$. 
In the theorem statement, we assume that each edge can be stored with one \emph{word} of memory.

\begin{restatable}{theorem}{main}\label{thm: main}
	Given a graph on $n$ vertices, there is a deterministic $(1+\eps)$-approximation algorithm for maximum matching that runs in $\poly(1/\eps)$ passes in the semi-streaming model. 
	Furthermore, the algorithm requires $n \cdot \poly(1/\eps)$ words of memory.
\end{restatable}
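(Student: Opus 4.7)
The plan is to build on the augmenting-paths framework for matching: the classical structural theorem says that a matching $M$ is a $(1+\eps)$-approximation of the maximum matching exactly when $M$ admits no augmenting path of length at most $\maxlen = O(1/\eps)$. So it suffices to design a semi-streaming procedure that, given any matching $M$ admitting such a short augmenting path, finds a large batch of vertex-disjoint short augmenting paths and augments along them; iterating this brings $M$ to a $(1+\eps)$-approximation. The whole algorithm will consist of $\phases = \poly(1/\eps)$ outer phases, and each phase will use $\poly(1/\eps)$ passes, so the global pass count is $\poly(1/\eps)$.

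Within one phase, I would maintain across passes a structure $\structure$ consisting of alternating forests rooted at the currently unmatched vertices, extended one ``layer'' at a time. Each invocation of \AlgExtendStructures would use a constant (or $\poly(1/\eps)$) number of passes to push each frontier forward by one alternating edge, and \AlgEdgeMerge would detect when two different alternating paths meet through an unmatched edge, producing a candidate augmenting path (\Pblue, \Pred, \Pyellow presumably mark different roles of subpaths during this merging, and \Pbad the configurations that must be discarded or rerouted). Once the structures are grown up to depth $\maxlen$, the found augmenting paths are committed in \AlgAugmentStructures and the matching is updated via \AlgUpdate. The progress lemma I would aim for says: in each phase, either the matching grows by a $1/\poly(1/\eps)$ fraction of the slack to the optimum, or we can certify that no augmenting paths of length $\leq \maxlen$ remain; this forces the algorithm to terminate in $\poly(1/\eps)$ phases by a standard potential argument.

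The main obstacle, and the reason all prior general-graph algorithms incurred $(1/\eps)^{O(1/\eps)}$ passes, is odd-length augmenting structures: blossoms. In semi-streaming one cannot afford the ``shrink, recurse, unshrink'' recursion of Edmonds, since each level of recursion historically costs a multiplicative $\poly(1/\eps)$ passes, and with $O(1/\eps)$ levels this explodes exponentially. My plan is to sidestep the recursion by treating blossom-like configurations combinatorially inside one pass bundle (\PassBundle): use the three-coloring of partial paths to detect when extending a structure would create an odd closed alternating walk, reroute via the alternate color (red/yellow swap) so that the extension still contributes one new matched edge toward reaching a free vertex, and charge each such ``bad'' rerouting to a distinct matched edge so the process terminates within $\poly(1/\eps)$ pass bundles per phase. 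The crux of the proof will be to argue that this coloring-based extension simulates a blossom contraction for the purpose of augmenting-path discovery without ever explicitly contracting, and that it preserves enough progress for the potential argument in the previous paragraph.

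The remaining steps are then to bound the total memory ($n \cdot \poly(1/\eps)$ words, since $\structure$ stores $O(\poly(1/\eps))$ edges per vertex), to verify that each of \AlgExtendStructures, \AlgEdgeMerge, \AlgAugmentStructures, \AlgUpdate can be implemented deterministically in the semi-streaming setting (straightforward, once the structure is maintained), and to compose the phase guarantee with the augmenting-path characterization to conclude the $(1+\eps)$-approximation. I expect the blossom-handling via coloring, and its progress analysis, to dominate the technical difficulty; the outer framework and memory accounting should then follow in a fairly routine manner.
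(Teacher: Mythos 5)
Your outline correctly reproduces the standard outer framework (iterate $\poly(1/\eps)$ phases; in each phase grow alternating structures from free vertices and hunt for augmenting paths of length at most $\maxlen = O(1/\eps)$; conclude via the Hopcroft--Karp-style progress lemma), and you correctly identify blossoms/odd alternating structures as \emph{the} obstacle that made previous general-graph algorithms blow up exponentially. But precisely at that point the proposal stops being a proof and becomes a wish: you propose a ``three-coloring of partial paths'' with a ``red/yellow swap'' rerouting that is supposed to ``simulate a blossom contraction for the purpose of augmenting-path discovery without ever explicitly contracting,'' and you acknowledge that proving this works ``will dominate the technical difficulty.'' No rule is given for how the coloring is assigned, why a red/yellow swap always produces a valid alternating path, why the swaps don't cascade or collide across different free vertices scanning the same stream, or why the charging ``each bad rerouting to a distinct matched edge'' terminates. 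This is exactly the content that the paper supplies, and it is not a coloring argument at all. (As a side note, the macros you latched onto, such as the red/blue/yellow path names, are defined in the preamble but never used in the paper; they are not evidence that a color-based scheme is viable.)

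The paper's actual mechanism is quite different and substantially more delicate than a layered BFS. Each \emph{matched arc} (directed, not undirected edge --- the paper stresses this asymmetry as essential for general graphs) carries a label that upper-bounds its alternating distance from a free vertex. Instead of synchronizing layers, free vertices compete: a free vertex may \emph{overtake} part of another structure when it can shorten a label, with a precise rule for which arcs migrate and a proof that overtaking preserves connectivity, disjointness, and bounded active-path length. Odd cycles are handled not by contraction or rerouting, but by a notion of \emph{settled} arcs (both an arc and its reverse have finite labels) together with a \emph{jumping invariant}: the algorithm is allowed to jump over a settled stretch and shown never to jump over an unsettled one. Correctness then reduces to showing that every short augmenting path is either found, destroyed by another augmentation, or still touches an active vertex. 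Finally, the $n \cdot \poly(1/\eps)$ memory bound is not automatic as you suggest: overtaking can let structures grow in a chain, and the paper needs an explicit ``on hold'' cap plus a chain-length argument to keep structure size at $\poly(1/\eps)$. None of these ideas --- arc labels, overtaking, settled/tagging, the jumping invariant, the on-hold cap --- appear in your proposal, so the core of the theorem remains unproved in your sketch.
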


This is the first (deterministic or randomized) algorithm with polynomially in $1/\eps$ many passes. 
It not only improves exponentially on the randomized $\left(1/\eps\right)^{O(1/\eps)}$-pass algorithm in the seminal work of McGregor \cite{mcgregor2005finding}, but also removes the need for randomness. 
On the deterministic side, an algorithm with the same pass complexity as McGregor's, was only recently found \cite{tirodkar2018deterministic}.
Our result is also an exponential improvement on the pass complexity of this deterministic result.
Moreover, a result by Ahn and Guha showed that by allowing a slightly superlinear memory of $n^{1 + 1/p}$, one can achieve a $(1 + \eps)$-approximation with $p/\eps$ passes~\cite{Ahn2018}.
This also implies a linear memory result with a slightly sublogarithmic number of passes by plugging $p \approx \log n / \log \log n$.

In the special case of bipartite graphs, the deterministic algorithms by Ahn and Guha \cite{ahn2011linear}, Eggert et al.~\cite{Eggert2012}, as well as Assadi et al.~\cite{assadi2022semi} obtain a runtime of $\poly(1/\eps)$ passes.
The first algorithm can also be adapted to the case of general graphs, but the required number of passes loses the independence on the graph size and becomes $\log n \cdot \poly(1/\eps)$.
It is not known whether the latter two approaches can be modified to work in case of general graphs.
Our results apply to general graphs and achieve a pass complexity of $\poly(1/\eps)$ without any dependency on $n$.

In very recent developments, lower bounds for the problem have been studied. 
One line of work focused on showing that for a certain class of deterministic algorithms $\Omega(n)$ passes are needed to improve upon the approximation ratio $1/2$~\cite{khalilK20}; this class of algorithms in each pass runs the simple greedy algorithm on a vertex-induced subgraph. 
More generally, Kapralov showed that any single-pass streaming algorithm for matching with an approximation ratio better than $0.59$  requires $n^{1+\Omega(1/\log \log n)}$ space~\cite{kapralov2021space}. 
This work subsumed earlier works by Kapralov and Goel et al.~\cite{goel2012communication, kapralov2013better}.
In another work~\cite{assadi2021GraphStreamingLowerBounds}, space-pass tradeoffs in graph streaming are analyzed, proving that an $(1+\eps)$-approximation needs either $n^{\Omega(1)}$ space or $\Omega(1/\eps)$ passes, even for restricted graph families. In \cite{assadi2022two}, Assadi provides a bound on the approximation ratio for two-pass semi-streaming algorithms.  

Furthermore, we show how to extend our main algorithm to a general framework, proving the following statement.
\newcommand{\Tmatching}{T_{\rm{matching}}}
\newcommand{\Amatching}{A_{\rm{matching}}}
\newcommand{\Texplore}{T_{\rm{explore}}}
\newcommand{\Aexplore}{A_{\rm{explore}}}
\begin{theorem}[Restatement of \cref{theorem:framework}]
\label{theorem:framework-restated}
Let $G$ be a graph on $n$ vertices and let $\eps \in (0, 1/2)$ be a parameter. Let $\Amatching$ be an algorithm that finds an $O(1)$-approximate maximum matching in time $\Tmatching$. Let $\Aexplore$ be an algorithm that in time $\Texplore$ processes any number of disjoint components of $G$ each of size $\poly 1/\eps$. Then, there is an algorithm that computes a $(1 + \eps)$-approximate maximum matching in $G$ in time $O((\Tmatching + \Texplore) \cdot \poly 1/\eps)$.
Furthermore, the algorithm requires access to $\poly(1/\eps)$ words of memory per each vertex.
\end{theorem}

Instantiating our framework with state-of-the-art results for computing an $O(1)$-approximate maximum matching in \textsf{CONGEST} and MPC, we obtain the results outlined in \cref{table:runningtimes}. In particular, our framework exponentially improves the dependence on $1/\eps$ in these models, hence resolving an open problem posed in \cite{lotker2015improved}:
\\\\
		\begin{minipage}{1\linewidth}
			\begin{mdframed}[backgroundcolor=white, linecolor=red!40!black]
			\emph{
	For unweighted graphs, it is interesting to see whether there exists a $(1+\eps)$-approximation
for general graphs, using small messages, with time complexity polynomial in $1/\eps$
and $\log n$.}
			\end{mdframed}
		\end{minipage}

\subsection{Other Related Work}

In this section, we give a brief overview of related work with slightly different problem formulations or underlying models. For a more thorough overview, we refer to \cite{mcgregor2014graph}. 
One line of research is centered around finding a constant-approximate, as opposed to arbitrarily good, maximum matching in as few passes as possible in the semi-streaming model \cite{feigenbaum2005graph, mcgregor2005finding, ahn2011linear, epstein2011improved, Eggert2012, konrad2012maximum, zelke2012weighted,esfandiari2016finding,kale2017maximum,paz20172+,GhaffariW19,feldman2021maximum}. 
Note that while a trivial greedy maximal matching algorithm yields a $2$-approximation in a single pass, it is a major open question whether there is a one-pass algorithm with approximation ratio smaller than $2$.

Another sequence of results is dedicated to devising streaming algorithms that estimate the size of the maximum matching \cite{kapralov2014approximating,bury2015sublinear,assadi2017estimating,esfandiari2018streaming,kapralov2020space}.
The problem of finding an arbitrarily good approximation has been studied in the streaming model~\cite{bertsekas1988auction,assadi2021auction,konrad2021two} on bipartite graphs as well as various related models that deal with non-random access to the input.
For instance, there are works in the setting of dynamic streams where edges can be added and removed \cite{konrad2015maximum,assadi2016maximum,chitnis2016kernelization}, in the random streaming model where edges or vertices arrive in a random order \cite{Assadi2021, Bernstein2020, gamlath2019weighted, konrad2012maximum, mahdian2011online}, and in models with vertex (instead of edge) arrival \cite{karp1990optimal,epstein2013,chiplunkar2015randomized,buchbinder2019online,GamlathKMSW19}. 
This problem has also been studied in the dynamic \cite{bernstein2016faster,solomon2016fully, Bhattacharya2019, Behnezhad2019, Behnezhad2022} (see~\cite{Hanauer2021} for a survey) as well as the classical centralized model \cite{Kalantari1995,preis1999linear,drake2003improved,duan2014linear}.

\subsection{Roadmap}In \cref{sec:overview,subsec:algorithm}, we introduce the terminology and present our algorithm. 
Furthermore, we make some important observations about invariants that are preserved by operations of our algorithm which we will use later. 
In \cref{sec: correctness}, we prove the correctness of our algorithm. The approximation analysis as well as the proof of the pass complexity can be found in \cref{sec:pass}. In \cref{section:general-framework} we provide details about our general framework for finding approximate maximum matching.

\subsection{Informal Outline and Challenges}

In this section, we give a brief outline of our approach and discuss the challenges we overcome.
As the basic building block, we follow the classic approach by Hopcroft and Karp~\cite{hopcroft1973n} of iteratively finding short augmenting paths to improve a $2$-approximate matching that can easily be found by a greedy algorithm.
To find these augmenting paths, we perform a depth first search (DFS) style truncated search from each free vertex in parallel and, once a sufficient amount of disjoint augmenting paths have been found, we augment the current matching over these augmenting paths.
The search scans over alternating paths of length roughly $1/\eps$ starting from the free (i.e., unmatched) vertices.
Our truncated DFS search method is inspired by~\cite{Eggert2012}, who designed a similar search for bipartite graphs.

As the context to one of our challenges, for $k \leq 1/\eps$, consider an alternating path $P = (b_1, a_1, b_2, a_2, \ldots,$ $a_{k}, b_{k + 1})$ between two free vertices $\alpha$ and $\beta$, where $a$ denotes matched and $b$ denotes unmatched edges. The free vertex $\alpha$ is an endpoint of $b_1$ and the free vertex $\beta$ is an endpoint of $b_{k+1}$.
In the case of bipartite graphs, the following key properties holds.
First, for a bipartite graph $G = (U \cup V, E)$ if $\alpha \in U$, then $\beta \in V$.
Second, if there is an alternating path $P'$ of length $h$ to the edge $a_i$ from some free vertex $\gamma \in U$, then edges $b_{i + 1}$ and $a_{i + 1}$ can be appended to $P'$ to create an alternating path of length $h + 2$ to $a_{i + 1}$ from $\gamma$.
This is true unless $P'$ contains $a_{i + 1}$, but in this case, $P'$ already is an even better (shorter) alternating path to $a_{i + 1}$.

The second property can be leveraged as follows:
Suppose that the DFS search is only performed by the nodes in $U$ and that the DFS search by $\alpha$ is scanning over path $P$.
If no DFS search other than by $\alpha$ is performed over $P$, $\alpha$ will eventually find $\beta$ and we have found a short augmenting path as desired.
However, it can be the case that the DFS search by another free vertex $\gamma$ has already scanned over an edge $a_i$.
If the alternating path $P_\gamma$ starting from $\gamma$ was of length $i' > i$, then it could be that $\gamma$ did not find $\beta$ since we truncate the DFS at length $1/\eps$.
In this case, $\alpha$ continues its search over $a_i$ since there is still hope to find free vertices that were not found by $\gamma$.

If it was the case that $P_\gamma$ was of length $i' \leq i$, then the second property mentioned above guarantees that $\gamma$ can find $\beta$ along the suffix path $a_i, \ldots, a_{k - 1}, b_{k}$.
Importantly, this allows $\alpha$ to ignore $a_i$ with its current search and continue to the next branch of the DFS.
This second property is leveraged in the algorithm by~\cite{Eggert2012}.

\paragraph{The first challenge.}
In the case of general graphs, the second property mentioned above that allows us to complete alternating paths does not hold.
In particular, consider an edge $a_j = (u, v), j > i$ that is ``after'' an edge $a_i$ in the path $P = (b_1, a_1, \ldots)$.
It can be the case that $a_j \in P'$ and the DFS search scans path $P'$ along the edge $a_j$ in the opposite direction as the DFS along path $P$.
See the red path in \Cref{fig:incompletable} for an illustration.

\begin{figure}
	\centering
	\includegraphics[width=0.85\linewidth]{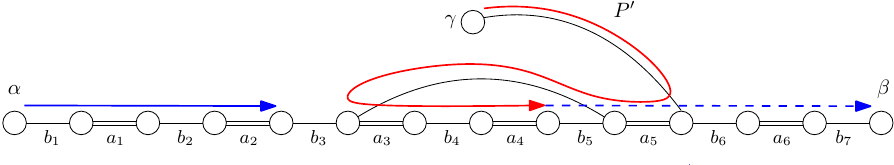}
	\caption{
		Nodes $\alpha$, $\beta$, and $\gamma$ are free. The black single-segments are unmatched and black (full) double-segments are matched edges. The path $P'$  corresponding to a DFS branch of $\gamma$ is shown by the red solid spline. Since the edge $a_5$ is part of the path, the current DFS branch of $\gamma$ cannot be extended up to the free node $\beta$ along the dashed blue line. Furthermore, the path from $\gamma$ to the edge $a_3$ can potentially ``block" a longer DFS search path of $\alpha$ illustrated with a solid blue line. However, the edges along the DFS searches of $\alpha$ and $\gamma$ can be combined to find an augmenting path between $\alpha$ and $\gamma$.
	}
	\label{fig:incompletable}
\end{figure}

Here, we make the observation that by combining the prefixes of $P$ and $P'$ until  the edge $a_j$, we obtain an augmenting path.
On a high level, our approach is to show that $\alpha$ can stop exploring the path $P$ further and our algorithm will, eventually, either find an augmentation between $\alpha$ and $\gamma$, or it will find some other ``good'' augmentation that \emph{intersects} $P$.

\paragraph{The second challenge.}
In the algorithm of~\cite{Eggert2012}, the algorithm only needs to remember the path that is currently active in the DFS search.
For the rest of the graph, \cite{Eggert2012} show that it is enough to store the length of the shortest alternating path that has reached each matched edge. This length is called \emph{label}.
In the first challenge, we considered the possibility that a vertex $\gamma$ ``blocks'' the DFS exploration of $\alpha$ and discussed how this implies an augmenting path between $\gamma$ and $\alpha$.
Unfortunately, it can also be the case that $\alpha$ blocks itself.
Now, if we simply had these distance labels on the edges, $\alpha$ might stop its DFS search due to distance labels it has set itself.
See the red path in \Cref{fig:jumping-example} for an illustration.
This, in turn, brings us to trouble since we cannot use the observation from the first challenge (in which $\alpha$ and $\gamma$ could augment), as there might not be any other free vertex to find an augmentation to.

To overcome this, the free vertices keep their DFS search tree history in memory.
We show that whenever $\alpha$ blocks itself, it must be the case that the current head of the DFS search has discovered an odd cycle.
In a sense and informally, we then implement a variant of the blossom contraction idea by Edmonds~\cite{edmonds_blossom}, and show that if another DFS search by another free vertex ever finds this cycle, then it also found an augmentation.
While we do not explicitly contract the blossom, we continue the search over the blocked edge as if it was not yet searched by a shorter path.

Since the tree history of a free vertex $\alpha$ is tied to $\alpha$, when our algorithm finds an augmentation containing $\alpha$ it removes all the vertices currently kept in the $\alpha$'s search tree, even those not being part of the augmentation. We do so as it is not clear how to deal with the state of vertices in the $\alpha$'s search tree after $\alpha$ is removed. This however poses additional challenges, with
the main difficulty lying in finding a consistent implementation of the DFS search and keeping the sizes of the search trees small. The latter is particularly important as our algorithm removes the entire search tree of a free vertex that gets augmented.

Below, we elaborate on two of the (perhaps) most crucial steps to overcome this challenge.
First, we show that for each found augmentation between free vertices $\alpha$ and $\beta$ the algorithm can afford to \emph{entirely remove} search trees of $\alpha$ and $\beta$ from the graph for the rest of the current DFS search. This removal affects the approximation guarantee only by $\poly 1/\eps$ factor, and by repeating the DFS search $\poly 1/\eps$ times from scratch leads to the desired $(1+\eps)$-approximation.
In contrast to the prior works, our algorithm is allowed to augment over paths much longer than $1/\eps$ (nevertheless, lengths are still polynomial in $1/\eps$).

Second, consider one iteration of finding short augmenting paths in the Hopcroft-Karp framework.
Our algorithms ``puts on hold'' (or pauses) DFS over search trees that become too large. Note that pausing DFS execution of some search trees increases the time required to explore the entire graph. Nevertheless, we show how to set parameters so that putting on hold DFS over large trees increases the number of passes only by a $\poly 1/\eps$ factor.

\paragraph{A Comparison with the Algorithm of Tirodkar~\cite{tirodkar2018deterministic}.}
Tirodkar's algorithm finds a $(1 + \eps)$-approximation to maximum matching in $(1/\eps)^{O(1/\eps)}$ passes.
Similar to our algorithm, it starts by finding a maximal matching and iteratively finding augmenting paths and thereby improving on the approximation guarantee.
The basic building block in the search for augmenting paths is to find \emph{semi-matchings} between the vertices and their matched neighbors such that each vertex has a small amount of neighbors in the semi-matching.
In the case of bipartite graphs, they show that their method of searching for augmenting paths in a graph defined by the semi-matchings finds a significant, but only an exponentially small in $\eps$, fraction of all possible augmenting paths.
Hence, after an exponentially many applications of augmentations, they obtain a $(1 + \eps)$-approximation.
To account for odd cycles in the input graph, they study \emph{directed} semi-matchings that, in a sense, corresponds to our idea of a directed DFS search.
In both works, a crucial ingredient is to store the right edges of the input graph to make sure that the augmenting paths yielded by the search do not contain the same node twice.
One of our main contributions is to show that a \emph{structure} (defined in \Cref{sec:overview} and maintained by each free vertex) contains sufficient information to perform augmentations, while at the same time fitting in only $\poly 1/\eps$ memory.
Our DFS search approach guarantees that we find a $\poly \eps$ fraction of all possible augmentations, giving rise to an algorithm that in $\poly 1/\eps$ passes finds a $(1+\eps)$-approximate maximum matching.

\section{Definitions, Notations, and Preliminaries}\label{sec:overview}
In the following, we will first introduce all the terminology, definitions, and notations.
Throughout the paper, if not stated otherwise, all the notation is implicitly referring to a currently given matching $M$ which we aim to improve.

\begin{definition}[An Unmatched Edge and a Free Node]
We say that an edge $\{u,v\}$ is \emph{matched} iff  $\{u,v\} \in M$, and \emph{unmatched} otherwise. We call a node $v$ \emph{free} if it has no incident matched edge, i.e., if $\{u,v\}$ are unmatched for all edges $\{u,v\}$. Unless stated otherwise, $\alpha, \beta, \gamma$ are used to denote free nodes. 
\end{definition}

\begin{definition}[Alternating Path, Alternating Length, Alternating Distance]
\label{definition:alternating-paths}
An \emph{alternating path} $P = (u_1, \ldots, u_k)$ is a path that consists of a sequence of alternatingly matched and unmatched arcs that is disjoint. 
A disjoint sequence means that it must not contain a vertex twice.
The \emph{alternating length} $|P|$ of an alternating path $P$ is defined as the number of \emph{matched} arcs involved in $P$. 
\end{definition}

\paragraph{Notation of alternating paths.}
Let $P = (u_1, v_1, \ldots, u_k, v_k)$ be an alternating path such that $(u_i, v_i)$ are matched arcs and $(v_i, u_{i + 1})$ are unmatched ones. Let $a_i = (u_i, v_i)$. We often use $(a_1, \ldots, a_k)$ to refer to $P$, i.e., we \emph{omit} specifying unmatched arcs. Nevertheless, it is guaranteed that the input graph contains the unmatched arcs $(v_i, u_{i + 1})$, for each $1 \le i < k$.
If $P$ is an alternating path that starts and/or ends with unmatched arcs, e.g., $P = (x, u_1, v_1, \ldots, u_k, v_k, y)$ where $(x, u_1)$ and $(v_k, y)$ are unmatched while $a_i = (u_i, v_i)$ for $i = 1 \ldots k$ are matched arcs, we use $(x, a_1, \ldots, a_k, y)$ to refer to $P$. In our case, very frequently $x$ and $y$ will be free nodes, usually $x = \alpha$ and $y = \beta$.

\begin{definition}[Concatenation of Alternating Paths]
\label{definition:concatenation}
	Let $P_1 = (a_1, a_2, \ldots, a_k)$ and $P_2 = (b_1, b_2, \ldots, b_h)$ be alternating paths.
	We use $P_1 \circ P_2 = (a_1, a_2, \ldots, a_k, b_1, b_2, \ldots, b_h)$ to denote their concatenation.
	Note that the alternating path $P_1 \circ P_2$ also contains the unmatched edge between $a_k$ and $b_1$.
\end{definition}

\begin{definition}[Path Reverse]
Let $P = (a_1, a_2, \ldots, a_k)$ be an alternating path.
We denote the \emph{reverse} of path $P$ by $\overleftarrow{P} = (\overleftarrow{a_k}, \overleftarrow{a_{k - 1}}, \ldots, \overleftarrow{a_1})$.
\end{definition}

\begin{definition}[The Label of an Arc]
	Each \emph{matched} arc $a$ is assigned a \emph{label} $\ell(a)$, with $\ell(a)$ being a finite positive integer or the value $\infty$ which is larger than any finite integer.
\end{definition}
In \cref{fig:using-arc-labels} we justify why we assign labels per each arc and hence two labels per edge, as opposed to a single label per edge. Intuitively, this enables us to handle odd cycles. 
It is instructive to think of a label $\ell(a)$ as the alternating length of an alternating path between some free node and the arc $a$. Saying it differently, $a$ is the $\ell(a)$-th matched arc along an alternating path originating at some free node $\alpha$. The value of $\ell(a)$ changes throughout the algorithm and at different times of the algorithm can refer to distances to different free nodes. Our algorithm is designed so that the label of an arc never increases.

\begin{figure}
	\centering
	\includegraphics[width=0.35\linewidth]{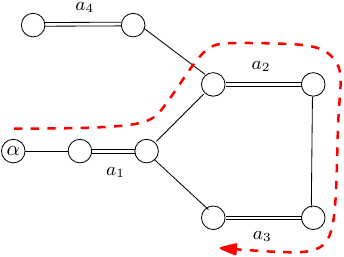}
	\caption{
		Assume that we first perform a DFS style search over the red (dashed) path from $\alpha$ to $a_3$. 
		Moreover, assume that the algorithm maintains the (shortest) path length of the DFS as labels on edges, but not of arcs. In that case, this red path sets labels $\ell(a_1) = 1$, $\ell(a_2) = 2$ and $\ell(a_3) = 3$. 
		Eventually, the search backtracks to $a_1$ and continues over $a_3$ setting the label to $\ell(a_3) = 2$.
		However, this DFS branch cannot continue to $a_4$, even though the path exists, due to the distance label $\ell(a_2) < \ell(a_3) + 1$.
		By considering labels on arcs, we allow $\alpha$ to extend its DFS search over $(\alpha, a_1, a_3, a_2, a_4)$.
	}
	\label{fig:using-arc-labels}
\end{figure}

\begin{definition}[Structure of a Free Node]\label{definition:structure}
	A structure $\structure_{\alpha}$ of a free (i.e., unmatched) node $\alpha$ is a set of arcs with the following properties:
	\begin{enumerate}
		\item\label{prop: disjoint} \textbf{Disjointness}: $\structure_{\alpha}$ is vertex-disjoint from all other structures.
		\item\label{prop:alternating-path} \textbf{Alternating-paths to matched arcs}: The matched arcs of $\structure_{\alpha}$ are \emph{connected} via in-structure alternating paths from $\alpha$, i.e., there is an alternating path within $\structure_\alpha$ from $\alpha$ to each matched arc of $\structure_{\alpha}$.\footnote{$\structure_{\alpha}$ is allowed to contain \emph{unmatched} arcs to which there is no alternating path from $\alpha$.}
		\item\label{prop:unmatched-arcs} \textbf{Endpoints of unmatched arcs:} Each endpoint of an unmatched arc of $\structure_{\alpha}$ is $\alpha$ or an endpoint of a matched arc of $\structure_{\alpha}$.
		\item \textbf{Active path}: The structure is associated with an \emph{active path} $P_\alpha$, where one of the following holds.
		(1) $P_\alpha = (\alpha, a_1, a_2, \ldots, a_k) \subseteq \structure_\alpha$ is alternating;
		or (2) $P_\alpha = \emptyset$.
		\\
		Note that in Case (1) either $P_\alpha = (\alpha)$, or $P_\alpha$ ends with a matched arc.
	\end{enumerate}
\end{definition}

\begin{invariant}[Upper bound on lengths of active paths]
\label{invariant:active-path-length}
	For each active path $P_\alpha = (\alpha, a_1, \ldots, a_k)$, it holds $\ell(a_i) \leq i$, for all $1 \leq i \leq k$ and $k \leq \maxlen$. The value $\maxlen$ is a parameter that we will fix later.
\end{invariant}

\begin{definition}[Active and Inactive Arcs and Vertices]
We say that an arc or a vertex is active if it belongs to an active path. See \cref{definition:structure} for a definition of active paths. If a vertex or an arc is not active, we say that it is inactive.
\end{definition}

\begin{definition}[Removing-directions operator]
\label{definition:RD-operator}
	Given a set of arcs $A$, we use $\RD(A)$ to refer to the set of \emph{edges} obtained by ignoring directions of the arcs in $A$. Formally, $
	\RD(A) = \{\{u, v\}\ |\ (u, v) \in A {\rm\ or\ } (v, u) \in A\}$.
\end{definition}

\begin{definition}[Reachable Arcs]\label{def: reachable}
	We say that an arc $a \in \RD(\structure_\alpha)$ is \emph{reachable}, if there is alternating path in $\RD(\structure_\alpha)$ from the free vertex $\alpha$ to $a$.
	Notice that it can be the case that $a \in \structure_\alpha$ and $\overleftarrow{a} \not \in \structure_\alpha$ but $\overleftarrow{a}$ is reachable.
\end{definition}

\begin{remark*}
In our algorithm descriptions, we mainly consider directed edges, i.e., arcs. The structures are used to keep track of the directed alternating paths found by the algorithm.
Intuitively, upon discovering an odd cycle along a certain direction, we \emph{implicitly} learn of the existence of alternating paths along both directions around the cycle since the underlying graph is undirected.
These implicitly learned alternating paths will come in handy for finding augmenting paths between free vertices. We also note that our definition of reachable arcs (\Cref{def: reachable}) captures these implicitly learned alternating paths to certain arcs.
\end{remark*}

\section{Our Algorithm}\label{subsec:algorithm}

In the following, we will sketch our algorithm, incrementally providing more details. 

\paragraph{Initialization and Outline.}

In the first pass, we apply a simple greedy algorithm to find a maximal matching, hence a $2$-approximation. This $2$-approximate maximum matching is our starting matching. The rest of our algorithm is divided into multiples \emph{phases}. In each phase, we iteratively improve the approximation ratio of our current matching $M$ by finding a set of disjoint $M$-augmenting paths (and performing the augmentations accordingly). We stop the algorithm after certain number of phases to be fixed later (see \Cref{alg:main}). 
In \cref{sec:pass} we show that executing our algorithm for that number of phases yields a $(1 + \eps)$-approximate maximum matching.
  
\begin{algorithm}[h]
	\SetKwProg{function}{function}{}{\KwRet}
	\SetKw{continue}{continue}

  \Input{$G$: a graph \\
		$\eps$: approximation parameter}
	
		Compute, in a single pass, a $2$-approximate maximum matching $M$. \label{line:main-alg-maximal-matching}

		\For{$\poly 1/\eps$ phases \tcc*{Nothing stored from the previous phase.}} {
			$\mathcal{P} \coloneqq \AlgPhase(G, \eps, M)$

			Augment the current matching $M$ using the disjoint augmenting paths in $\mathcal{P}$.

			\tcc*{Augmenting requires no extra passes.}
		}

  \caption{The highest level algorithm description.}
	\label{alg:main}
\end{algorithm}

Next, we outline what the algorithm does in a single phase; this method is called \AlgPhase (see \Cref{alg:phase}). The set $M$ in our description refers to the current matching in this phase. 

\paragraph{A single phase (\AlgPhase).}
The description of executing a single phase, \cref{alg:phase}, is given next.
A phase consists of $\maxtau \eqdef 1/\eps^6$ many $\PassBundle$s executed by the loop starting at \cref{line:phase-loop-pass-bundle}.

\begin{algorithm}[h]
	\SetKwProg{function}{function}{}{\KwRet}
	\SetKw{continue}{continue}

  \Input{$G$: a graph \\
		$\eps$: approximation parameter\\
		$M$: the current matching}
  \Output{A set $\mathcal{P}$ of \textit{disjoint} augmenting paths that augment $M$}

	$\mathcal{P} \gets \emptyset$
	
	$\ell(a) = \infty$, for each arc $a \in M$ \label{line:reset-label}
	
	For each free node $\alpha$, create a structure $\structure_\alpha$ with its active path $(\alpha)$. \label{line: active}
		
		\For{$\PassBundle$ $\tau=1 \ldots \maxtau$ \label{line:phase-loop-pass-bundle}}{
			For each free node $\alpha$, if $\structure_\alpha$ has less than $\sizelimit$ vertices, mark $\structure_\alpha$ as \textbf{not} ``on hold''.\label{line:on-hold}

			$\AlgExtendStructures(\mathcal{P})$ \tcc*{see \cref{sec:extend-structures}}
			
			For each free node $\alpha$, if $\structure_\alpha$ has at least $\sizelimit$ vertices, mark $\structure_\alpha$ as \textbf{``on hold''}.\label{line:not-on-hold}
			
			\AlgBacktrack \tcc*{see \cref{sec:backtracking}}
			
			$\AlgEdgeMerge(\mathcal{P})$ \tcc*{see \cref{sec:edge-merging}}
			
			\AlgAugmentStructures \tcc*{see \cref{sec:augment-structures}} \label{line:AlgAugmentStructures}
		}
		
		\Return 
	  
  \caption{\AlgPhase: the execution of a single phase. The subroutines that might add augmenting paths to the set $\mathcal{P}$ are highlighted by showing the input parameter explicitly. The other parameters are left out for clarity.}
 \label{alg:phase}
\end{algorithm}

\paragraph{A single bundle of passes (\PassBundle).}
Our algorithm executes several methods (invoked within the loop starting at \cref{line:phase-loop-pass-bundle} of \cref{alg:phase}), and for most of them it makes a \emph{fresh} pass over the edges. The term $\PassBundle$ refers to multiple passes during which those routines are executed. Precisely, the routines are: (1) extend structures along active paths (\AlgExtendStructures), (2)  check for edge augmentations (\AlgEdgeMerge), and (3) include (additional) unmatched edges to each structure (\AlgAugmentStructures). Each of these routines is performed in a separate pass over the edges. The \AlgBacktrack method backtracks active paths that were not extended, but does not require a fresh pass. In total, a $\PassBundle$ requires $3$ passes.

\PassBundle $p$ refers to \emph{the entire} execution of the $p$-th iteration of the loop of \cref{alg:phase}. In our proofs, we will use terminology such as ``at the beginning of \PassBundle $p$''. This refer to the state of our algorithm at the very beginning of the $p$-th iteration, before it sees any edge on the stream. Likewise, ``at the end of \PassBundle $p$'' refers to the state of our algorithm after \PassBundle $p$ has executed. In particular, note that ``at the beginning of \PassBundle $p$'' and ``at the end of \PassBundle $p - 1$'' essentially refer to the same state of our algorithm. 
When we say that an event happened \emph{during} \PassBundle $p$, without specifying that it was at the beginning or at the end, the event could have happened at any point during that bundle.



	\subsection{Marking a structure on hold}
	\label{section:on-hold}
Let $\sizelimit \eqdef 1 / \eps^{4}$. If a structure $\structure_\alpha$ contains at least $\sizelimit$ vertices, then we mark $\structure_\alpha$ \emph{on hold}; see \cref{line:on-hold} of \cref{alg:phase}. This operation plays a crucial role in our analysis of pass-complexity of our algorithm, e.g., \cref{lemma:activenodes,lemma: runtime}.


\subsection{Extending Active Paths (\AlgExtendStructures)}
\label{sec:extend-structures}
Before we describe \AlgExtendStructures, we will define several operations that the method uses.

\subsubsection{Description of $\AlgMerge(\mathcal{P}, g)$}
\label{operation:merging}
		\begin{minipage}{0.95\linewidth}
			\begin{mdframed}[backgroundcolor=gray!15, linecolor=red!40!black]
			\emph{Informal description}: This operation finds an augmenting path containing a given unmatched arc $g$ (without seeing any new edge on the stream) and removes the vertices contained in the structures affected by this augmentation.
			\end{mdframed}
		\end{minipage}
		\\\\
	Given an unmatched arc $g$, let $\structure_\alpha$ and $\structure_\beta$ be two structure $g$ is incident to; we assume that such two distinct structures exist. Let $P_A$ be an alternating path between $\alpha$ and $\beta$ in $\RD(\structure_\alpha \cup \structure_\beta \cup \{g\})$, that we also assume exists. (As a reminder, the $\RD$ operator is described in \cref{definition:RD-operator}.)
	Then, $\AlgMerge(\mathcal{P}, g)$ stores the augmenting path $P_A$ in $\mathcal{P}$ and removes from the graph \emph{all} vertices from $\structure_{\alpha}$ and from $\structure_{\beta}$. These vertices remain removed until the end of this phase and our algorithm adds them back when it starts the next phase.
	This guarantees that the paths in $\mathcal{P}$ remain disjoint.	
	Note that since structures are vertex disjoint, the augmenting path $P_A$ contains $g$.
%


	\subsubsection{Description of $\AlgOvertake(P_u, g, \astar)$}
	\label{section:AlgOvertake}
	
	\label{operation:overtaking}
		\begin{minipage}{0.95\linewidth}
			\begin{mdframed}[backgroundcolor=gray!15, linecolor=red!40!black]
			\emph{Informal description}: This operation reduces the label of a matched arc $\astar$ via ``a short'' alternating path $P_u$ originating at a free node $\alpha$. Also, $\astar$ is added to $\structure_\alpha$ and removed from another structure $\structure_\beta$ it might belong to. The removal operation, called \emph{overtaking}, might cause structure $\structure_\beta$, that contains $\astar$, to become disconnected. To ensure that $\structure_\beta$ remains connected, all the arcs of $\structure_\beta$ that become disconnected from $\beta$ via alternating paths are also overtaken, i.e., removed from $\structure_\beta$ and added to $\structure_\alpha$. If some of the arcs removed from $\structure_\beta$ are active, their labels get updated as well.
			\end{mdframed}
		\end{minipage}
		\\\\

\noindent	The method $\AlgOvertake(P_u, g, \astar)$ gets $P_u$, $g$ and $\astar$ as input, whose details are elaborated below:
		\begin{itemize}
			\item $P_u = (\alpha, a_1, \ldots, a_k)$ is an alternating path in a structure $\structure_{\alpha}$ such that $|P_u| < \maxlen$, and either $k = 0$ or $a_k$ is a matched arc. Moreover, $P_u$ as a prefix contains the current active path of $\structure_\alpha$.
			\item $\astar$ is a matched arc and $g = (x, y)$ is an unmatched arc such that: $|P_u| + 1 < \ell(\astar)$ and $P_u \circ (\astar)$ is alternating (see \cref{definition:concatenation} for the definition of operator $\circ$); and, $x$ is the head of $P_u$ and $y$ is the tail of $\astar$.
			\item If $\astar \in \structure_\beta$ for $\beta \neq \alpha$, then there is no alternating path between $\alpha$ and $\beta$ in $\RD(\structure_\alpha \cup \structure' \cup \{g\})$. In particular, we cannot invoke $\AlgMerge(\mathcal{P}, g)$ to find an augmenting path between $\alpha$ and $\beta$.
		\end{itemize}
		
		Next, we define $P_{\astar}$:
		\begin{itemize}
			\item If $\astar$ is not an active arc, let $P_\astar = (\astar)$.
			\item If $\astar$ is an active arc, in which case it belongs to $\structure_\beta$, let $P_{\astar}$ be the suffix of the active path of $\structure_\beta$ starting at $\astar$.
		\end{itemize}
		Then, \AlgOvertake performs a sequence of updates outlined next.
		\begin{enumerate}
			\item \textbf{Add $g$ and $\leftarc{g}$ to $\structure_\alpha$.}
			\item \textbf{Move alternating paths from $\structure_\beta$ to $\structure_\alpha$:}
			 If $\astar \in \structure_\beta$ for $\beta \neq \alpha$, let $\cA$ be the collection of arcs in $S_{\beta}$ such that for each $b \in \cA$, every alternating path in $S_\beta$ from $\beta$ to $b$ goes through at least one of the arcs in $P_{\astar}$; otherwise, let $\cA = \emptyset$. Add $P_{\astar}$ (together with unmatched arcs) to $\structure_\alpha$. Also, add to $\structure_\alpha$ each alternating path (together with unmatched arcs) of $\structure_\beta$ that originates at $P_{\astar}$ and ends in $\cA$.
			If $\astar \in \structure_\beta$, remove from $\structure_\beta$ all matched arcs in $P_\astar \cup \cA$.
			\item\label{item:clean-unmatched} \textbf{Clean unmatched arcs from $\structure_\beta$:} If $\astar \in \structure_\beta$ for $\beta \neq \alpha$, remove from $\structure_\beta$ each unmatched arc $(u, v)$ such that $u$ or $v$ is different than $\beta$ and is not endpoint of any matched arc of $\structure_\beta$. (After this, each endpoint of each unmatched arc of the updated $\structure_\beta$ is $\beta$ or is an endpoint of a matched arc of $\structure_\beta$, as desired by \cref{definition:structure}-\ref{prop:unmatched-arcs}.)
			\item	\textbf{Update active paths:} Set the new active path of $\structure_\alpha$ to $P_u \circ P_{\astar} = (\alpha, c_1, \ldots, c_t)$. Update the active path of $\structure_\beta$ by removing from it any arc in $P_\astar$ (note that if $\astar$ is inactive, then the active path of $\beta$ remains the same).
			\item\label{item:update-labels}	\textbf{Update labels:} Set $\ell(c_i) := \min\{i, \ell(c_i)\}$, for each $i = 1 \ldots t$. Observe that $\ell(c_i)$ after this update is \emph{at most} the number of matched arcs on the path from $\alpha$ to $c_i$ along $P_u \circ P_{\astar}$.
		\end{enumerate}
In \cref{sec:proof-of-lemma-overtaking} we show that overtaking is a sound operation, as made formal with the following claim.
\begin{restatable}[\AlgOvertake Preserves Structures]{lemma}{lemmaPreserve} \label{lemma: overtaking}
Let $P_u$, $g$ and $\astar$ be the input (as describe in \cref{section:AlgOvertake}) to an invocation of \AlgOvertake. 
After the invocation, the updated $\structure_\alpha$ is a structure per \cref{definition:structure}. If $\astar \in \structure_\beta$ for $\alpha \neq \beta$, after the same invocation $\structure_\beta$ is also a structure per \cref{definition:structure}. Moreover, if \Cref{invariant:active-path-length} holds before this invocation, it holds after the invocation as well.
\end{restatable}


\subsubsection{Description of \AlgExtendStructures}
\label{sec:description-of-extend-structures}

\begin{algorithm}[h]
	\SetKwProg{function}{function}{}{\KwRet}
	\SetKw{continue}{continue}

  \Input{$G$: a graph \\
		$\eps$: approximation parameter\\
		$M$: the current matching\\
		$\structure_\alpha$ for each free node $\alpha$\\
		$\mathcal{P}$: a set of disjoint augmenting paths
		}

		\tcc{A single pass over an arbitrary ordered stream of edges/arcs. When an edge $\{u, v\}$ arrives on the stream, we pass to our algorithm two arcs $(u, v)$ and $(v, u)$, one after the other.}
		\For{each arc $g = (u, v)$ on the stream} {
			\If{$u$ or $v$ was removed by \AlgMerge in this phase \label{line:ES-check-augmented}} {
				Continue with next arc.
			}
			\If{$g$ is matched, or $u$ does not belong to a structure, or $v$ is a free node belonging to the same structure as $u$\label{line:ES-basic-test}} {
				Continue with next arc.
			}
			\If{$u$ belongs to a structure that has already been extended via \cref{line:ES-AlgOvertake} in \textbf{the current} invocation of \AlgExtendStructures \label{line:ES-check-if-already-extended}} {
				Continue with next arc.
			}
			\If{$\RD(\structure_\alpha \cup \structure_\beta \cup g)$ contains an $\alpha$-$\beta$ alternating path for some $\alpha \neq \beta$\label{line:ES-check-alpha-beta-merge}} {
				$\AlgMerge(\mathcal{P}, g)$ \label{line:ES-alpha-beta-merge} \tcc*{Store augmentation, remove $\structure_\alpha$ and $\structure_\beta$.}
				
				Continue with next arc.
			}			
			{Let $P_u = P_\alpha \circ (b_1, \ldots, b_h)$ be an alternating path s.t.: {\tcc*{$h$ can be $0$} \label{line:extend-structure-P_u}}
			\begin{itemize}
				\item $\alpha$ is a free node
				\item $\structure_\alpha$ is \textbf{not} on hold and $P_\alpha$ is the active path of $\structure_\alpha$,
				\item $P_u$ is in $\structure_\alpha$,
				\item $b_i$ is a matched arc and $\ell(b_i) \leq |P_\alpha| + i$, for $i = 1 \ldots h$,
				\item $u$ is the ending vertex of $P_u$.
			\end{itemize}
			}
			
			
			\If{$P_u$ does not exist} {
				Continue with next arc. \tcc*{No active path can ``extend'' along $g$.}
			}
			
			
				%
			
			$\astar = (v, x)$ is a matched arc\label{line:ES-define-astar} \label{line: astar} \tcc*{$v$ is not free due to checks in \cref{line:ES-basic-test,line:ES-check-alpha-beta-merge}.}

			\If{$\ell(\astar) > |P_u| + 1$ and $|P_u| + 1 \le \maxlen$ \label{line:ES-check-astar-label}} {
				$\AlgOvertake(P_u, g, \astar)$ \label{line:ES-AlgOvertake} \label{line:ES-overtake}
					%
%
			}
		}
		
  \caption{The execution of \AlgExtendStructures (\cref{sec:description-of-extend-structures}).}
  \label{algo-extend-structures}
\end{algorithm}

In \Cref{algo-extend-structures}, we provide a high-level pseudo-code of \AlgExtendStructures, and below provide more details.
An invocation of \AlgExtendStructures takes a single pass over the edges.
The task of \AlgExtendStructures is to extend/grow the active path of each free node $\alpha$ by adding at least one additional \newtext{matched} arc $\astar$ to the active path, while respecting the labels. Since active paths are alternating, this means that for each added matched arc the method adds an unmatched arc to the active path as well. 
We fix the maximum number of matched arcs in an active path to be $\maxlen \eqdef \lceil 1/\eps \rceil$.
In \cref{algo-extend-structures} we provide a pseudo-code of this method, and in the rest of this section we describe that pseudo-code in more detail.

\AlgExtendStructures scans over unmatched edges; note that matched edges are in the algorithm's memory, so there is no new information to be gained by seeing a matched edge on the stream. Let $\{u, v\}$ be the current unmatched edge on the stream. Then, the algorithm considers separately $g = (u, v)$ and $\leftarc{g} = (v, u)$. We describe the algorithm's steps for $g$.

If $u$ does not belong to any structure, the algorithm does nothing with $g$ and continues to the next arc. Also, if $v$ equals free node $\alpha$ and $u \in \structure_\alpha$, then our algorithm also has not use of $g$ and continues to the next arc.

So, assume that $u$ belongs to $\structure_\alpha$ and $v \neq \alpha$. The goal of \AlgExtendStructures now is, if possible, to find an augmentation over $g$ or to use $g$ to reduce the label of the matched arc incident to $g$ while extending the active path of $\structure_\alpha$.
If there is an augmenting path between two free nodes over $g$, the algorithm invokes $\AlgMerge(g)$ (\cref{line:ES-alpha-beta-merge}). (As a reminder, our algorithm applies an augmentation whenever there is one in the set of arcs it has in its memory. Hence, upon receiving a new arc $g$, it first verifies whether it is possible to augment (\cref{line:ES-check-alpha-beta-merge}).) Note that since structures are vertex-disjoint (see \cref{definition:structure}), if it is possible to find an augmentation when $g$ is added to the memory of our algorithm, then it implies the augmentation contains $\RD(g)$.

Assume that it is not possible to augment over $g$. Let $P_\alpha = (\alpha, a_1, \ldots, a_k)$ be the active path of $\alpha$, where each $a_i$ is a matched arc. (Recall that in our notation we do not explicitly list unmatched arcs in alternating paths, see the paragraph below \cref{definition:alternating-paths}.)
Let $P_u = P_\alpha \circ (b_1, \ldots, b_h)$ be an alternating path that ends at vertex $u$ such that $\ell(b_i) \leq k + i$ for $i = 1 \ldots h$, where it is allowed that $h = 0$, i.e., $(b_1, \ldots, b_h)$ is allowed to be empty. \cref{line:extend-structure-P_u} lists certain properties that $P_u$ should satisfy. We note that the property that $\alpha$'s active path has not been extended so far (imposed by \cref{line:ES-check-if-already-extended}) is a mere technicality; we use that property only to simplify certain simulation of our general framework presented in \cref{section:general-AlgExtendStructures}.
If $P_u$ does not exist, the algorithm proceeds with the next arc as it is unable to use $g$ to reduce the label of a matched arc.

Note that $b_i$ arcs are allowed to be used \textbf{only} if their labels are small enough so that they cannot be reduced along $P_u$. If $\ell(b_i) > k + i$ for any $i$, then \AlgExtendStructures would have a chance to reduce $\ell(b_i)$ in the same pass, but across an arc other than $g$; for instance, along the unmatched arc adjacent to $b_{i - 1}$ and $b_i$. We say that \AlgExtendStructures ``\emph{jumps}'' over the arcs $b_i$.
In \cref{fig:jumping-example} we illustrate why allowing active paths to jump over arcs is crucial for our algorithm to make progress.


\begin{definition}[Jumping over arcs]
\label{definition:jumping}
If the alternating path $(b_1, \ldots, b_h)$ defined at \cref{line:extend-structure-P_u} of \AlgExtendStructures is non-empty, we say that our algorithm \emph{jumps} over the arcs $(b_1, \ldots, b_h)$.
\end{definition}

\noindent Recall that $g = (u, v)$ and $u$ is the ending vertex of $P_u$.
Finally, we discuss the matched arc $\astar$ that can be reached via $P_u$, i.e., $\astar \not \in P_u$ and $v$ is the tail of $\astar$ as described in \Cref{line: astar}.
Note that $v$ cannot be a free vertex, as otherwise \cref{line:ES-basic-test} or \cref{line:ES-check-alpha-beta-merge} would evaluate to true. 
The algorithm now checks whether it is possible to reduce the label of $\astar$ (\cref{line:ES-check-astar-label}) by an alternating path whose length does not exceed $\maxlen$. If it is possible, the label is reduced by invoking $\AlgOvertake(P_u, g, \astar)$.

\begin{figure}
	\centering
	\includegraphics[width=0.85\linewidth]{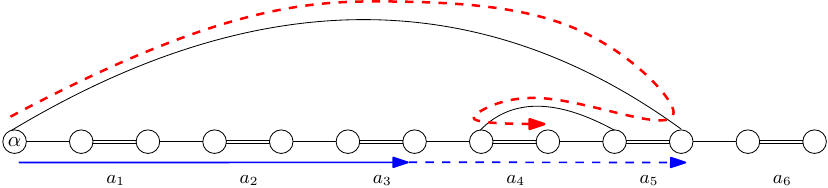}
	\caption{
		In this example, $\alpha$ is a free node, black (full) single-segments are unmatched and black (full) double-segments are matched edges. Assume that the algorithm first explores the red (dashed) path from $\alpha$ to $a_4$. This red path sets labels $\ell(\leftarc{a_5}) = 1$ and $\ell(a_4) = 2$. After this exploration, in the next two passes $\alpha$ backtracks along the red path and sets the active path to be $(\alpha)$. 
Then, $\alpha$ continues extending its active path (over three $\PassBundle$s) along $a_1$, $a_2$ and $a_3$, setting labels $\ell(a_1) = 1$, $\ell(a_2) = 2$ and $\ell(a_3) = 3$, illustrated with the blue solid line. 
However, notice that $\ell(a_4) < 4$ and hence $\alpha$ does not extend the active path to become $(\alpha, a_1, a_2, a_3, a_4)$. On the other hand, $\ell(a_5) = \infty$ and our algorithm should enable $\alpha$ to reach $a_5$. This is achieved by enabling $\alpha$ to ``jump'' over $a_4$ and let the active path become $(\alpha, a_1, a_2, a_3, a_4, a_5)$, illustrated with the blue dashed line. We also set $\ell(a_5) \coloneqq 5$. (In this example, the arc $a_4$ plays role of $b_1$ in the description of \AlgExtendStructures.) Subsequently, in the next $\PassBundle$ $\alpha$ will extend its active path to $(\alpha, a_1, a_2, a_3, a_4, a_5, a_6)$. 
	}
	\label{fig:jumping-example}
\end{figure}



\subsection{$\AlgBacktrack$}
\label{sec:backtracking} 
		\begin{minipage}{0.95\linewidth}
			\begin{mdframed}[backgroundcolor=gray!15, linecolor=red!40!black]
			\emph{Informal description}: \AlgExtendStructures can be seen as performing a Depth First Search (DFS) along active paths. When an active path does not get extended in a pass then, just like in DFS, \AlgBacktrack backtracks on this active path (in our case by one matched and one unmatched arc) and continues the DFS from that shorter active path. The backtracking is not applied to the structures that are on hold, as those structures did not attempt extending their active paths in the corresponding streaming pass. 
			\end{mdframed}
		\end{minipage}
		\\\\
	Let $\alpha$ be a free node with active path $P_\alpha = (\alpha, a_1, \ldots, a_{k - 1}, a_k)$ such that:
	\begin{enumerate}[(1)]
		\item In the last invocation of \AlgExtendStructures the node $\alpha$ did not extend $P_\alpha$.
		\item\label{item:no-overtake} No other free node $\beta$ overtook any arc of $P_\alpha$, i.e., \cref{line:ES-overtake} of \AlgExtendStructures was not invoked on $\astar$ belonging to $P_\alpha$.
		\item $\structure_\alpha$ was \emph{not} on hold in that invocation of \AlgExtendStructures.
	\end{enumerate}
	Then, the algorithm $\AlgBacktrack$ \emph{backtracks} on $P_\alpha$, meaning that the new active path of $\structure_\alpha$ becomes $(\alpha, a_1, \ldots, a_{k - 1})$ and the arc $a_k$ is marked as inactive. If $P_\alpha = (\alpha)$, then the new active path of $\structure_\alpha$ becomes $\emptyset$ and the free node $\alpha$ is marked as inactive.
It trivially holds that $\structure_\alpha$ is still a structure per \Cref{definition:structure} and that \Cref{invariant:active-path-length} is maintained after an execution of $\AlgBacktrack$.

We now illustrate why the Condition~\eqref{item:no-overtake} above is important, i.e., why the algorithm does not backtrack on an active path that was shrunk due to invoking \AlgOvertake. Let $A$ be an active path at the beginning of the current pass.
Assume that during the pass $A$ was reduced to $A'$ during an invocation of \AlgOvertake. Let arc $a'$ be the head of $A'$. Let the $i$-th arc on the stream was the one which lead to $A$ getting reduced to $A'$. Some of the arcs on the stream appearing before the $i$-th one could be extensions of $A'$. However, since the head of $A$ is not $a'$, our algorithm did not have a chance to explore extensions of $A'$ preceding the $i$-th arc; recall that \AlgExtendStructures attempts to extend active paths, not its prefixes. That is why the algorithm does not backtrack on $a'$, but rather attempt to extend it in the next pass. This property is leveraged in our proofs, in particular in the proof of \cref{obs: no new active}.

\subsection{Check for Edge Augmentations (\AlgEdgeMerge)}
\label{sec:edge-merging}
	In a new pass, for each edge $e = \{u, v\}$ in the stream, the algorithm checks whether the structure containing $u$ and the structure containing $v$, if such structures exist, can augment over $e$. If it is possible, via \AlgMerge the algorithm augments the two corresponding free nodes and removes their structures, as described in \cref{operation:merging}. Note that if an augmentation exists, then due to \cref{definition:structure}-\ref{prop:unmatched-arcs} the edge $e$ is \emph{unmatched}. The goal of this operation is to ensure that our algorithm detects an augmentation even if the corresponding structures did not extend, which we use to show some arguments in \cref{lemma: augment} and prove that \cref{obs: opposing arcs} holds. 

\begin{algorithm}[h]
	\SetKwProg{function}{function}{}{\KwRet}
	\SetKw{continue}{continue}

  \Input{$G$: a graph \\
		$M$: the current matching\\
		$\mathcal{P}$: the current set of augmenting paths \\
		$\structure_\alpha$ for each free node $\alpha$}

		\For{each arc $g = (u, v)$ on the stream} {
			\If{$u$ or $v$ was removed by \AlgMerge in this phase} {
				Continue with next arc.
			}

			\If{$\RD(\structure_\alpha \cup \structure_\beta \cup g)$ contains an $\alpha$-$\beta$ alternating path for some $\alpha \neq \beta$}{
				$\AlgMerge(\mathcal{P}, g)$ \tcc*{Store augmentation, remove $\structure_\alpha$ and $\structure_\beta$.}
			}
		}

  \caption{The execution of \AlgEdgeMerge (\cref{sec:edge-merging}).}
\end{algorithm}
\subsection{Augment Structures with Unmatched Edges (\AlgAugmentStructures)}
\label{sec:augment-structures}
	In a new pass, for each \emph{unmatched} edge $e = \{u, v\}$, if $u$ and $v$ are in the same structure $\structure$, then add $e$ to $\structure$. This algorithm is outlined as \cref{algo-augment-structures}.
	Observe that \cref{line:AugmentStructures-unmatched-if} of the algorithm enforces property \cref{definition:structure}-\ref{prop:unmatched-arcs}. It now trivially holds that $\structure_\alpha$ is still a structure per \cref{definition:structure} and that \cref{invariant:active-path-length} is maintained after an execution of $\AlgAugmentStructures$.
\begin{algorithm}[h]
	\SetKwProg{function}{function}{}{\KwRet}
	\SetKw{continue}{continue}

  \Input{$G$: a graph \\
		$M$: the current matching\\
		$\structure_\alpha$ for each free node $\alpha$}
	  
		\For{each edge $e = \{u, v\}$ on the stream} {
			\If{$u$ or $v$ was removed by \AlgMerge in this phase} {
				Continue with next edge.
			}

			\If{$e$ is \emph{unmatched}, and $u$ and $v$ belong to the same structure $\structure_\alpha$ \label{line:AugmentStructures-unmatched-if}} {
				Add $e$ to $\structure_\alpha$.
			}			
		}
		
  \caption{The execution of \AlgAugmentStructures (\cref{sec:augment-structures}).}
  \label{algo-augment-structures}
\end{algorithm}

	Later in our proofs we point out where the existence of method \AlgAugmentStructures is used, and here we illustrate its importance in the example in \cref{fig:jumping-example}. Recall that in that example $\alpha$ ``jumps'' over $a_4$ to reach $a_5$. Here, jumping actually means that, upon seeing on the stream the unmatched edge between $a_4$ and $a_5$, $\alpha$ is aware of the existence of the alternating paths $(\alpha, a_1, a_2, a_3, a_4, a_5)$. For that, $\alpha$ has to be aware of the existence of the unmatched edge between $a_3$ and $a_4$ as well; notice that, however, this unmatched edge was never explored via an active path. Here is why \AlgAugmentStructures is helpful, as it makes sure that in the $\PassBundle$ where the active path is $(\alpha, a_1, a_2, a_3)$ the algorithm will store the unmatched edge between $a_3$ and $a_4$ (recall that $a_4$ was previously explored by $\alpha$ via the red (dashed) path). Consequently, this ensures that $\alpha$ can jump over $a_4$. Hence, informally speaking, \AlgAugmentStructures can be thought as of a way of ensuring that our algorithm will have sufficient information to perform jumping.

	


\section{Correctness}\label{sec: correctness}
Consider one phase of our algorithm.
Notice that according to the description of \AlgPhase, the phase is executed for a fixed amount of iterations.
However, ignore that detail for the moment and suppose that the phase is executed until each free vertex becomes inactive.
We will handle this detail in the approximation analysis.
For our approximation guarantees, the crucial property that we need from our algorithm is that, for each $\alpha - \beta$ alternating path $(\alpha, a_1, \ldots, a_{k}, \beta)$ of length at most $\maxlen$, at least one of the following happens in the phase.
\begin{itemize}
	\item The matching is improved by augmenting over the augmenting path $\alpha - \beta$.
	\item At least one vertex is removed from the $\alpha - \beta$ path due to an augmentation over another augmenting path.
\end{itemize}
This property is shown as the following claim.

\begin{restatable}[Proof for finding short augmenting paths]{lemma}{lemmaaugment}
\label{lemma: augment}
Let $P = (\alpha, a_1, \ldots, a_k, \beta)$ be an augmenting path of length $k \le \maxlen$ that so far was not discovered by our algorithm. Then, $P$ contains an active vertex.
\end{restatable}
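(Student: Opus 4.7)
The plan is to argue by contradiction. Suppose $P = (\alpha, a_1, \ldots, a_k, \beta)$ is an augmenting path of length $k \le \maxlen$ that has not been discovered and contains no active node. I would exhibit an operation of the algorithm---a jumping extension, an overtake, or an edge-merging---that must already have acted on $P$, contradicting the ``undiscovered'' hypothesis.

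First I would use the inactivity of the free endpoints to bound the labels of the two endmost arcs of $P$. Since $\alpha$ is an inactive free node, there was a pass at which $\alpha$'s active path was $(\alpha)$ and \AlgExtendStructures failed to extend, causing $\alpha$ to backtrack and go inactive. The extension rule requires $\ell(a_1) > 0$ for a direct extension to succeed (taking $k = h = 0$), while the overtake rule requires $\ell(a_1) > 1$ if $a_1$ lies in some other structure $S_\gamma$ (because the condition $|P_\alpha| < \ell(a_1) - 1$ with $|P_\alpha| = 0$ becomes $\ell(a_1) > 1$). Since the extension to $a_1$ failed, necessarily $\ell(a_1) \le 1$ at that pass, and since labels only decrease this bound persists. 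A symmetric argument on $\beta$'s failed extension to $\overleftarrow{a_k}$ gives $\ell(\overleftarrow{a_k}) \le 1$.

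Next I would propagate these bounds along the interior arcs of $P$. For each $a_i$ with $1 < i < k$, both endpoints are inactive, so $a_i$ lies on no active path. By \cref{obs: inactive}, any $a_i$ that was ever placed on an active path has since been backtracked or overtaken, and by \cref{lemma: overtaking} such movements preserve valid structures. Tracking the evolving ownership of each $a_i$, I would derive a telescoping bound on the labels $\ell(a_i)$ (and on the reverses $\ell(\overleftarrow{a_i})$) along $P$. Combined with the endpoint bounds and $k \le \maxlen$, this forces the structures currently covering segments of $P$ to glue together along $P$ from $\alpha$ to $\beta$, so that either a jumping extension from $\alpha$ or $\beta$ reaches the opposite free endpoint---triggering a merge within \AlgExtendStructures---or an unmatched edge of $P$ connects two mergeable structures and is caught by \AlgEdgeMerge. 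Either outcome contradicts $P$ being undiscovered.

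The main obstacle, I expect, is handling arcs of $P$ that lie inside structures put on hold (of size $\ge \sizelimit$), which \AlgExtendStructures skips. For these, the extension/overtake analysis does not apply directly; instead one must argue that \AlgEdgeMerge---which examines all unmatched edges regardless of whether the incident structures are on hold---still detects the merge, or that an internal augmentation within a hold structure removes a vertex of $P$ before the phase ends. Weaving together the label propagation, the hold mechanism, and the role of \AlgEdgeMerge is the technical heart of the argument.
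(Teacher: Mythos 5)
Your high-level plan---derive label bounds from the inactivity of the free endpoints, then propagate inward until structures must merge---is a reasonable first guess, and the endpoint observation ($\ell(a_1)\le 1$, $\ell(\leftarc{a_k})\le 1$ after $\alpha,\beta$ backtrack at trivial paths) is essentially sound. But the proposal has a genuine gap at its center: the ``telescoping bound'' step, which is where all the actual difficulty of the general (non-bipartite) case lives, is asserted rather than argued. The paper's roadmap names the obstruction explicitly: when an active path has reached $a_i$ with a short prefix, there are two ways the extension to $a_{i+1}$ can be blocked. Either some other free vertex already holds a short path to $a_{i+1}$ (the easy case, which your label propagation would handle), or $\leftarc{a_{i+1}}$ already lies on the \emph{same} active path---the odd-cycle situation. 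In the second case the label of $a_{i+1}$ need not decrease at all, and no edge-merge fires, so your inward propagation simply stalls. The entire settled/tagging machinery (\cref{def: settled}, \cref{obs: opposing arcs}, \cref{lemma: sameStr}, \cref{lemma: non-active cycle}, \cref{lemma: tagging}, \cref{lemma: upd-active}, \cref{lemma: settle before passive}), together with the jumping invariant (\cref{invariant: jump}) and \cref{lemma: jump-proof}, exists precisely to show that when this blocking happens repeatedly, the blocked arcs become \emph{settled}---meaning the reverse arcs are also reachable in the same structure---and settled arcs thereafter glue into a single structure that behaves like a discovered odd cycle, so any later free node reaching it triggers a merge. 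Without some substitute for this, the claim that ``the structures currently covering segments of $P$ glue together along $P$'' has no support.

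Two smaller misalignments are worth noting. First, you identify structures on hold as the main technical obstacle; in the paper's correctness argument the hold mechanism plays essentially no role---it is a device for the pass-complexity and approximation accounting in \cref{sec:pass}, not for \cref{lemma: augment}. Second, the paper does not argue by a static contradiction snapshot but maintains, by induction over $i$, a time-sensitive invariant (for every $j\le i$, either $\ell(a_j)\le j$ or a vertex of the prefix is active), and separately conditions on no vertex of $P$ having been removed by a merge; your contradiction framing would also need to fix a time point and handle the removal caveat, neither of which the proposal addresses.
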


\paragraph{Roadmap of the proof of \Cref{lemma: augment}.}
The proof of \Cref{lemma: augment} is based on several structural properties that the algorithm maintains during its execution.
The first simple observation is that if $\alpha$ has found an alternating path of length $i$ to some arc $a_i$, there are two options that can prevent $\alpha$ from extending its path to $a_{i + 1}$.
One option is that $\leftarc{a_{i + 1}}$ is part of the active path.
The other option is that some other free vertex has an active path of length at most $i + 1$ to $a_{ i + 1}$.
Informally speaking, the key observations are that in the former case, by \Cref{lemma: lastleft}, (a suffix of) the active path must form an odd cycle.
A very convenient property of odd cycles is that as soon as they are discovered by the algorithm, their arcs can never belong to two distinct structures of the free vertices.
Otherwise, we will find an augmentation and we have that an augmenting path satisfying one of the two desired properties has been found.
This property is formalized in \Cref{obs: opposing arcs} and the process for finding these odd cycles is formalized in \Cref{def: settled} and \Cref{lemma: sameStr}.

Our main challenge is that on the path $\alpha - \beta$, there can be many events by active paths of many distinct free vertices, where some active paths are blocked by other active paths and others form odd cycles.
Our main technical contribution is to sort this mess and show that certain positive properties are maintained throughout the phase.
Suppose that our algorithm, at some point during the execution of a phase, has found an alternating path of length at most $j$ to each $a_j$ for $1 \leq j < k$.
Here, finding corresponds to having an active path of length $j$ to $a_j$ (at some point in the phase).
We show that before the paths to all $\{ a_1, \ldots, a_{j} \}$ have been found and the corresponding active paths have backtracked without finding an alternating path of length at most $j + 1$ to $a_{j + 1}$, it must be the case that they belong to an odd cycle (this is somewhat imprecise, we actually show that they satisfy a slightly different property, which can be used as if they were in the same odd cycle, cf. \Cref{def: settled}).
Importantly, they will all be in the same structure for the remainder of the phase.

Then, we argue that eventually, the odd cycle formed by $\{ a_1, \ldots, a_{j} \}$ can be used to extend a short active path to $a_j$, either by $\alpha$ or some other free vertex.
These statement are formalized in \Cref{lemma: tagging} and \Cref{lemma: upd-active}.
From this, we can inductively derive that eventually, either all $\{ a_1, \ldots, a_k \}$ form an odd cycle or an augmentation has been found involving some of these arcs.
Once this (large) odd cycle is found, $\alpha$ learns about an augmenting path to $\beta$, and we have again reached one of the desirable properties.
We begin our formal proof by showing the structural properties related to discovered odd cycles.

\begin{observation}\label{obs: opposing arcs}
	Let $a$ and $b$ be matched arcs such that there is an alternating path $(a, g, b)$ in the graph, where $a$ and $b$ are matched arcs and $g$ is an unmatched arc. 
	Assume that $a$ and $\leftarc{b}$ are not removed (due to a found augmentation) in a given phase.
	Let $p$ be a \PassBundle of the phase in which both $a$ and $\overleftarrow{b}$ are reachable. Then, from the end of $p$ until the end of the phase, $a$ and $\leftarc{b}$ cannot belong to different structures.
\end{observation}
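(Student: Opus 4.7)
Writing $a=(u_1,v_1)$ and $b=(u_2,v_2)$, the hypothesis that $(a,b)$ is a $3$-arc alternating path tells us that the unmatched edge $e=\{v_1,u_2\}$ exists in the input graph, and hence appears on the stream during every pass of the phase. Note also that $\overleftarrow{b}=(v_2,u_2)$, so the heads of $a$ and $\overleftarrow{b}$ are precisely the two endpoints of $e$. The plan is to argue by contradiction: show that if at the end of some \PassBundle $p'\ge p$ the arcs $a$ and $\overleftarrow{b}$ lie in two distinct structures, then \AlgEdgeMerge of $p'$ would have used $e$ to find an augmentation, which would remove both arcs from the graph and contradict the hypothesis of the observation.

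To carry this out, fix the smallest $p'\ge p$ in which $a\in\structure_\alpha$, $\overleftarrow{b}\in\structure_\beta$, and $\alpha\ne\beta$ hold at the end of $p'$ (if no such $p'$ exists, the observation is immediate). By the connectivity property of structures (\cref{definition:structure}), there is an alternating path $Q_\alpha\subseteq\structure_\alpha$ from $\alpha$ to $v_1$ whose last matched arc is $a$, and symmetrically an alternating path $Q_\beta\subseteq\structure_\beta$ from $\beta$ to $u_2$ whose last matched arc is $\overleftarrow{b}$. The vertex-disjointness of distinct structures (also from \cref{definition:structure}) guarantees that $Q_\alpha$ and $Q_\beta$ share no vertices, so the concatenation $Q_\alpha\circ(e)\circ\overleftarrow{Q_\beta}$ is a simple alternating path from $\alpha$ to $\beta$ whose alternation pattern is easy to check: $Q_\alpha$ ends on the matched arc $a$, then $e$ is unmatched, then $\overleftarrow{Q_\beta}$ begins with $b$ (matched) and proceeds to $\beta$ via an alternation ending in an unmatched arc. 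Since $\alpha$ and $\beta$ are both free, this is an $M$-augmenting path whose arcs lie entirely in $\structure_\alpha\cup\structure_\beta\cup\{e\}$ — exactly the situation \AlgEdgeMerge is designed to catch. When \AlgEdgeMerge of $p'$ scans $e$, the endpoints $v_1$ and $u_2$ are found in distinct structures, the corresponding merge/augmentation is performed, and both $a$ and $\overleftarrow{b}$ are removed, contradicting the standing assumption of the observation.

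To upgrade the above into the co-location property at every pass bundle $p'\ge p$, I would do a simple induction on $p'$, reapplying the contradiction at each step. Between consecutive \AlgEdgeMerge calls only \AlgAugmentStructures (which merely attaches unmatched edges to an existing structure, never migrating a matched arc) and the next pass bundle's \AlgExtendStructures (which can shuffle matched arcs between structures via overtaking, but only in the disciplined way of \cref{lemma: overtaking}) intervene; any separation they introduce is caught and resolved by the next \AlgEdgeMerge. The main delicate point I expect is the base case: one has to verify that once $\ell(a)<\infty$ and $\ell(\overleftarrow{b})<\infty$ both hold within pass bundle $p$, each arc is already a member of some structure by the time \AlgEdgeMerge of $p$ runs. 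For this I would rely on the fact that labels are set to finite values only when arcs are attached to an active path or overtaken into another structure — so finite label and structure membership are acquired together — and on the fact that \AlgEdgeMerge is the last routine in a pass bundle that can change the structure membership of a matched arc.
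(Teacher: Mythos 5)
Your argument covers the base case correctly: at the end of \PassBundle $p$, if $a$ and $\leftarc{b}$ lay in distinct structures, the vertex-disjointness and connectivity of structures would produce an $M$-augmenting path through the unmatched edge $e=\{v_1,u_2\}$, which \AlgEdgeMerge of $p$ would catch, contradicting the assumption that $a$ and $\leftarc{b}$ survive the phase. This matches the first half of the paper's proof.

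The gap is in the inductive step for $p'>p$. You rely on \AlgEdgeMerge of $p'$ to catch any separation introduced during that pass bundle, but \AlgEdgeMerge runs \emph{after} \AlgExtendStructures. If an overtake during \AlgExtendStructures of $p'$ splits $a$ and $\leftarc{b}$ into two structures, your argument only shows the split is repaired by the end of $p'$; it does not exclude the mid-pass interval in which the two arcs sit in different structures. But the observation asserts co-location continuously, ``from the end of $p$ until the end of the phase,'' and this continuous form is exactly what \cref{lemma: sameStr} and its downstream uses (e.g.\ \cref{obs: no new active}, where the argument concludes that $a_r$ and $c$ ``cannot be in different structures'' at any moment) depend on. The paper closes this window with a mechanism you do not invoke: once $a$ and $\leftarc{b}$ are in the same structure at the end of $p$, \AlgAugmentStructures of $p$ (\cref{line:AlgAugmentStructures}) writes $e$ into $\Estored$, where it persists until the phase ends. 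From then on, the instant any stream operation would separate the two arcs — in particular an overtake during \AlgExtendStructures — the algorithm can assemble the augmentation from the two structure paths and the stored edge $e$, removing both arcs and yielding the contradiction \emph{immediately}, not at the next \AlgEdgeMerge. Without appealing to $\Estored$ and \AlgAugmentStructures, you have proved a weaker, ``checkpointed'' version of the observation that does not suffice for the later lemmas.
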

\begin{proof}
	Towards a contradiction, assume that $a$ and $\overleftarrow{b}$ are reachable at the end of \PassBundle $p$, but there is \PassBundle after the $p$-th one during which $a$ and $\overleftarrow{b}$ belong to different structures.
	
	By construction, $a$ will belong to some (potentially different ones during the phase) structure at any point in the phase. 
	Hence, $a$ and $b$ are reachable, which implies that $a \in \structure_1$ and $\leftarc{b} \in \structure_2$ after step \AlgExtendStructures of the $p$-th \PassBundle. 
	If $\structure_1 \neq \structure_2$, then \AlgEdgeMerge will find an augmentation between the free nodes of $\structure_1$ and $\structure_2$, which would contradict our assumption that $a$ and $\leftarc{b}$ are not removed in the phase. So, $\structure_1$ and $\structure_2$ are the same structure.
	
	Further, since $a$ and $\leftarc{b}$ belong to the same structure, \AlgAugmentStructures will in the $p$-th \PassBundle add to the memory of our algorithm the unmatched edge $e$ connecting these two arcs. 
	The edge $e$ will remain in the memory of our algorithm until the end of the phase. 
	Suppose that after an edge from the stream is processed the arcs $a$ and $\leftarc{b}$ do not belong to the same structure anymore, e.g., due to an invocation of $\AlgOvertake$.
	Now since the structures containing $a$ and $\leftarc{b}$ are connected and disjoint, our algorithm will find an augmentation consisting of the paths to $a$ and $\leftarc{b}$ in their corresponding structures and the edge $e$.
	This would again contradict our assumption that $a$ and $\leftarc{b}$ are not removed until the end of the phase.
\end{proof}


\begin{definition}[Settled]\label{def: settled}
	We say that an alternating path $P = (a_1, a_2, \ldots, a_k)$ is \emph{settled} if for all $1 \leq j \leq k$, both $a_j$ and $\leftarc{a_j}$ are reachable.
\end{definition}
Let $P$ be an alternating path belonging to $\structure_\alpha$. A convenient property of having $P$ settled is that, once $P$ becomes settled, we show that all the arcs in $P$ at any point belong to the same structure (this is made formal in \cref{lemma: sameStr}). Notice that invoking the method \AlgAugmentStructures ensures that our algorithm, in its memory, also stores unmatched arcs belonging to $P$. Having all this, it enables us to think of $P$ as an odd cycle. In particular, if any free node $\beta \neq \alpha$ reaches $P$, then the algorithm augments between $\alpha$ and $\beta$, which makes progress. Crucially, and deviating from standard approaches to finding augmenting paths, we \emph{allow} our augmenting paths to be much longer than $1/\eps$. (In fact, we do not guarantee that augmenting via settled paths will result in augmenting paths of length at most $1/\eps$.) Nevertheless, in \cref{sec:pass} we show how to handle these longer augmenting paths by executing additional phases.

\begin{remark*}
	For technical reasons and for convenience, we may also say that sets containing arcs and their reverses are settled, e.g., $\{ a_1, \leftarc{a_1}, \leftarc{a_2} \}$ is the same as path $(a_1, a_2)$ being settled.
\end{remark*}

\begin{lemma}\label{lemma: sameStr}
	Consider an alternating path $(a_1, a_2, \ldots, a_{k + 1} )$ and suppose that $S = \{a_2, \ldots, a_k\}$ is already settled at the beginning of $\PassBundle$ $\tau$.
	Moreover, assume that no $\{a_2, \ldots, a_k\}$ is removed (due to a found augmentation) in a given phase.
	Then, the following holds from the beginning of $\tau$ until the end of the phase:
	\\
	No two arcs in $S$ can belong to different structures.
	Furthermore, if $\ell(a_1) < \infty$ already at the beginning of $\tau$, then $a_1$ belongs to the same structure as $S$, and the same claim holds for $\leftarc{a_{k + 1}}$ as well.
	Notice that different structures might hold $S$ during the phase.
\end{lemma}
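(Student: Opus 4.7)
The plan is to chain Observation~\ref{obs: opposing arcs} along consecutive pairs of arcs in the settled portion of the path. A settled path gives us, by definition, that $\ell(a_j) < \infty$ and $\ell(\leftarc{a_j}) < \infty$ for every $a_j \in S$, which is exactly what the observation needs as input on both sides of any adjacent pair $(a_i, a_{i+1})$ inside $S$. The two endpoint claims (about $a_1$ and $\leftarc{a_{k+1}}$) then fall out by applying the same observation one more time to $(a_1, a_2)$ and to $(a_k, a_{k+1})$ respectively, using the extra hypothesis that the relevant endpoint label is already finite at the beginning of $\tau$.

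\textbf{Chaining inside $S$.} I would first note that once a label is set finite in a phase it stays finite (labels are written during extension and overtaking, and tags are permanent), so the precondition ``$\ell(\cdot) < \infty$'' obtained from $S$ being settled at the start of $\tau$ persists until the end of the phase. Combining this monotonicity with the assumption that no arc of $S$ is removed by a merge, each consecutive pair $(a_i, a_{i+1})$ with $2 \le i \le k-1$ satisfies the hypotheses of Observation~\ref{obs: opposing arcs}: the alternating path $(a_i, a_{i+1})$ exists in the input (by the convention on alternating-path notation), $\ell(a_i) < \infty$ because $a_i \in S$, and $\ell(\leftarc{a_{i+1}}) < \infty$ because $a_{i+1} \in S$, and neither $a_i$ nor $\leftarc{a_{i+1}}$ is removed in the phase. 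The observation therefore places $a_i$ and $\leftarc{a_{i+1}}$ in the same structure from some PassBundle no later than $\tau-1$ and onward, and since an arc and its reverse always live in the same structure, $a_i$ and $a_{i+1}$ share a structure from the beginning of $\tau$ until the end of the phase. Transitivity across $i = 2, \ldots, k-1$ then gives the first conclusion. The case $k \le 2$ is vacuous: a singleton subset of arcs trivially lives in one structure.

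\textbf{Endpoints and main obstacle.} For $a_1$, the hypothesis $\ell(a_1) < \infty$ at the start of $\tau$ combined with $\ell(\leftarc{a_2}) < \infty$ (from $a_2 \in S$) lets us apply Observation~\ref{obs: opposing arcs} to the pair $(a_1, a_2)$, yielding that $a_1$ shares a structure with $a_2$, hence with the whole of $S$. Symmetrically, if $\ell(\leftarc{a_{k+1}}) < \infty$ at the start of $\tau$, applying Observation~\ref{obs: opposing arcs} to $(a_k, a_{k+1})$ with $a = a_k$, $b = a_{k+1}$ (using $\ell(a_k) < \infty$ from $a_k \in S$) places $\leftarc{a_{k+1}}$ in the same structure as $a_k$, and hence as $S$. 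The only delicate point I anticipate is verifying that the \emph{owner} of the common structure may change (through overtakes), yet the set $S$ is never split across two distinct structures; the remark in the lemma statement already anticipates this, and it follows from the fact that Observation~\ref{obs: opposing arcs} guarantees co-location pairwise at every later PassBundle, not just at the moment the labels first become finite. Once this monotone ``co-location'' property is articulated cleanly, the proof is essentially a one-line induction on $i$.
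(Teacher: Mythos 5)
Your proposal is correct and follows essentially the same route as the paper: chain \Cref{obs: opposing arcs} along each consecutive pair $(a_j, a_{j+1})$, using settledness of $S$ to supply the finite labels on the interior, and the extra hypotheses to extend the chain to $a_1$ and $\leftarc{a_{k+1}}$. The paper states this more tersely (applying the observation for all $1 \le j \le k$ at once), but your elaboration of label monotonicity and the transitivity of co-location is the same argument unpacked.
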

\begin{proof}
	By definition, $\ell(a_j) < \infty$ and $\ell(\leftarc{a_{j + 1}}) < \infty$ for all $2 \leq j \leq k - 1$ already after the execution of \AlgExtendStructures in $\PassBundle$ $\tau - 1$.
	Then, the lemma follows from \Cref{obs: opposing arcs} applied to all $a_j$ and $\leftarc{a_{j + 1}}$ for $1 \leq j \leq k$.
\end{proof}

\subsection{Jumping Invariant}
Let $P = (\alpha, a_1, \ldots, a_k, \beta)$ be an $\alpha-\beta$ augmenting path of length at most $1/\eps$. \Cref{lemma: augment} can be seen as a way of saying that there is no situation in which our algorithm blocks itself in a sense that (i) $P$ is not discovered by our algorithm so far; and (ii) $P$ will not be discovered even if the phase lasts indefinitely.
In an ideal situation, our algorithm visits $a_1$ from $\alpha$, then $a_2$ along $(\alpha, a_1)$ and so on, until it finally reaches $\beta$ to augment. However, it could be the case that some other free node already has $a_i$ in its active path by the time $\alpha$'s active path reaches $a_i$. Moreover, it could be the case that $\alpha$ attempts to extend to $a_i$ via an active path that \emph{already} contains $\leftarc{a_i}$, and hence this exploration/extension is impossible (as it would lead to the active path containing a cycle). This is somewhat undesirable but, nevertheless, in situation like these we will often argue that then something else good had happened, e.g., a set of arcs has become settled. In addition to settled, we will introduce the notion of \emph{jumping invariant} (as outlined next), and a significant part of our proofs will be devoted to showing that this invariant in fact holds throughout the entire execution. (As a reminder, the notion of jumping over arcs is defined in \cref{definition:jumping}.)

In particular, a crucial part of our proof is to show that whenever we perform a jump over a set of arcs, these arcs must already be settled.
Informally speaking, we can think that these arcs were already ``dealt with earlier'' and can be ignored in the future steps of the algorithm\footnote{This idea is somewhat similar to contracting blossoms but we need to keep track of the lengths of the paths.}.
\begin{invariant}[Jumping Invariant] \label{invariant: jump}
	Consider an active path $P$ and let $P(a)$ be the prefix of $P$ until some arc $a$.
	If $|P(a)| > \ell(a)$, then $\leftarc{a}$ is reachable.
\end{invariant}

\noindent Our proof of \cref{invariant: jump} is built on top of the following claim which will be proven in \Cref{sec: jump-proof}.
\begin{lemma}\label{lemma: jump-proof}
	Suppose that \Cref{invariant: jump} holds by the beginning of $\PassBundle$ $\tau$.
	Suppose that in $\PassBundle$ $\tau$, an active path $P_\alpha = (\alpha, a_1, \ldots, a_k)$ performs a jump, i.e., the active path is extended to $P_\alpha = (\alpha, a_1, \ldots, a_k) \circ (b_1, \ldots, b_h) \circ (b_{h + 1})$.
	Then, at the beginning of \PassBundle $\tau$ it holds that $\{b_1, \ldots, b_h\}$ is already settled.
\end{lemma}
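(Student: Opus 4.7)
The two parts of ``settled'' for each $b_i$ split cleanly. The claim $\ell(b_i) < \infty$ is immediate from the extension rule's hypothesis $\ell(b_i) \le k+i \le \maxlen$, so the content of the lemma is the second claim: $\ell(\leftarc{b_i}) < \infty$ at the beginning of $\PassBundle$ $\tau$, for every $1 \le i \le h$.

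My plan is to trace the history of each $b_i$ inside $\structure_\alpha$. A structure can only acquire an arc via an extension of its own active path, or via its free vertex overtaking an arc out of a different structure; in either case, the moment $\ell(b_i)$ is (re)set, $b_i$ sits on some active path of $\alpha$ at a position determined by the new label. Since $b_i$ is currently in $\structure_\alpha$ but \emph{not} on the current active path $(\alpha, a_1, \ldots, a_k)$ --- note that an overtake \emph{by} another structure would have removed $b_i$ from $\structure_\alpha$ altogether --- $b_i$ must have been removed from $\alpha$'s active path by a sequence of backtrackings. I would then rewind to the pass bundle $\tau_i < \tau$ at which $b_i$ was last at the tip of $\alpha$'s active path, call it $Q$, at a position equal to the current $\ell(b_i)$, and invoke \cref{invariant: jump} at that earlier bundle (legitimately available, since the invariant is assumed to hold up to the start of $\tau$ by the inductive setup around \cref{invariant: jump}). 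Combining this with the connectedness of $\structure_\alpha$ at the start of $\tau$ --- witnessed by the alternating path $(\alpha, a_1, \ldots, a_k, b_1, \ldots, b_i)$ of length $k+i$ --- my goal is to conclude either a direct labeling of $\leftarc{b_i}$ by some past active path that terminated in $\leftarc{b_i}$, or the existence of an alternating path inside $\structure_\alpha$ reaching $\leftarc{b_i}$ that activates the tagging rule from the remark preceding \cref{def: settled}; either way yields $\ell(\leftarc{b_i}) < \infty$.

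The main obstacle I anticipate is the tight case $\ell(b_i) = k + i$, where \cref{invariant: jump} applied at $\tau$ itself cannot be triggered (the hypothesis $|P(b_i)| > \ell(b_i)$ fails by one), and the argument must genuinely reach back to an earlier pass bundle where the strict inequality held. I would proceed by induction on $i$ so that the already-settled prefix $\{b_1, \ldots, b_{i-1}\}$, together with the settled prefix of the current active path that \cref{invariant: jump} guarantees at the start of $\tau$, supplies the labelled handles needed to invoke the invariant at time $\tau_i$ and propagate the conclusion forward. The finest-grained bookkeeping will come from the case split of whether $b_i$ was installed into $\structure_\alpha$ by a direct extension at $\tau_i$, or by $\alpha$ overtaking a suffix out of some other structure, since these two pathways produce slightly different shapes of the earlier active path $Q$ on which the invariant is ultimately applied.
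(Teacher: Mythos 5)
Your proposal has a genuine gap, and it is not a cosmetic one.

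The paper proves this lemma by a one-paragraph reduction to \Cref{lemma: settle before passive}: check that right before $\PassBundle$ $\tau$ none of $b_1,\dots,b_h$ is active, that $\ell(b_j)\le k+j$ for each $j$, and that $\ell(b_{h+1})>k+h+1$, and then invoke that lemma to conclude settledness. All the actual work is in the chain \Cref{obs: no new active} $\to$ \Cref{lemma: lastleft} $\to$ \Cref{lemma: non-active cycle} $\to$ \Cref{lemma: tagging}/\Cref{lemma: upd-active} $\to$ \Cref{lemma: settle before passive}. Your sketch neither invokes nor reconstructs this chain.

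Concretely, three steps in your plan fail. First, you assert that whenever $\ell(b_i)$ is (re)set, $b_i$ sits on an active path \emph{of $\alpha$}, and hence that $b_i$ was once the tip of $\alpha$'s active path and left it only by backtracking. Neither claim holds: an arc can enter $\structure_\alpha$ as part of the ``dependent'' set $\cA$ during an overtake, without ever having been on $\alpha$'s active path and without its label being touched; and its label may have been set while it belonged to a completely different free vertex's structure. Second, even where your rewinding does locate $b_i$ at the tip of some past active path $Q$ at position $\ell(b_i)$, \Cref{invariant: jump} is vacuous there since it needs the \emph{strict} inequality $|P(b_i)|>\ell(b_i)$; you flag this as ``the main obstacle'' but your proposed resolution (reach back to a bundle where strict inequality held) is not shown to exist and in the generic case simply does not: an arc can have its label set exactly once and be backtracked, never appearing on any active path at a position exceeding its label. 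Third, and most importantly, the engine that actually produces the tag $\ell(\leftarc{b_i})<\infty$ in the paper is \Cref{obs: no new active}: if the active path that labeled $b_i$ did \emph{not} already contain one of $\leftarc{b_{i+1}},\dots,\leftarc{b_{h+1}}$, then before backtracking $b_i$ the algorithm would have extended and set $\ell(b_{i+1})\le k+i+1$, eventually reaching $\ell(b_{h+1})\le k+h+1$, contradicting the hypothesis that the jump is happening. It is exactly this contrapositive, driven by the assumption $\ell(b_{h+1})>k+h+1$, that forces the reverse arcs to be on the active path and hence tagged. Your sketch never uses $\ell(b_{h+1})>k+h+1$ at all, and without it there is no mechanism that produces a path in $\structure_\alpha$ to $\leftarc{b_i}$; the connectedness path $(\alpha,a_1,\dots,a_k,b_1,\dots,b_i)$ you cite goes to $b_i$, not to $\leftarc{b_i}$, and does not activate the tagging rule.

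So the two proofs are not ``different routes to the same place'': your plan omits the key structural lemmas and the key use of the failed-extension hypothesis, and the specific mechanical claims it does make about arcs always arriving via $\alpha$'s own active path are false in the presence of overtaking.
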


\paragraph{Proof of the Jumping Invariant (\cref{invariant: jump}).}
Our next step is to prove that \cref{invariant: jump} holds at every point of the execution.
Clearly, the invariant holds in the beginning of the execution, since all active paths are trivial.
Suppose then that \cref{invariant: jump} holds by the beginning of some \PassBundle $\tau$.
Recall from \Cref{subsec:algorithm} the specifications of \AlgExtendStructures.
Consider an active path $P$ and let $P(a)$ be the prefix of $P$ until some arc $a$.
Notice that if $|P(a)| > \ell(a)$, for a prefix $P(a)$ of $P$, it must have been the case that \AlgExtendStructures jumped over arc $a$ when $a$ was added to the active path. 
In \Cref{lemma: jump-proof}, we show that for each path that is jumped over in \PassBundle $\tau$, the invariant holds, i.e., we show that such path is already settled.
Hence, the invariant also holds throughout \PassBundle $\tau$ (and hence by the beginning of $\PassBundle$ $\tau + 1$).
Then, the invariant holds for any \PassBundle by the principle of induction.
So, in the rest, we focus on showing \cref{lemma: jump-proof} and then applying those results to prove \Cref{lemma: augment}.

\subsection{Analysis of \cref{lemma: jump-proof}} \label{sec: jump-proof}
\newtext{
The following two lemmas essentially say that if a short active path $A$ reaches a sequence of settled arcs $\{a_f, \ldots, a_k\}$, then $A$ will also reach $a_{k + 1}$ unless $A$ has already visited an arc $\{\leftarc{a_f}, \ldots, \leftarc{a_{k}}\}$.
}
\begin{lemma}\label{obs: no new active}
	Consider an alternating path  $P = (a_{f - 1}, \ldots, a_{k + 1})$ and suppose that $S = \{ a_f, \ldots, a_k \}$ is already settled at the beginning of $\PassBundle$ $\tau$.
	Assume that 
	\begin{itemize}
		\item in $\PassBundle$ $\tau$ there is an active path (or its prefix) $P_\alpha(b_{k'}) = (\alpha, b_1, \ldots, b_{k'})$ such that $b_{k'} = a_{f - 1}$ and $k' \leq f - 1$
		\item there is no $b_h$ for $h < k'$ such that $b_h = \leftarc{a_j}$, for any $j$ such that $f \leq j \leq k + 1$.
	\end{itemize}
	Let $f - 1 \leq r \leq k + 1$ be the largest index such that $a_r \in P_\alpha(b_{k'})$.
	Then, before our algorithm backtracks arc $a_{r}$ in $\PassBundle$ $\tau$ or later, it holds that $\ell(a_{k  + 1}) \leq k + 1$.
\end{lemma}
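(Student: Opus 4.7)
The plan is to analyse the state of the algorithm at the latest moment strictly before $\alpha$ backtracks $a_{f-1}$; if backtracking never occurs during the phase, the conclusion is vacuous. At that moment, by the rules of \AlgExtendStructures, no valid extension of the active path $(\alpha, b_1, \ldots, b_{k'})$ (which, by the time of backtracking, has been trimmed back down to end at $a_{f-1}$) is available, and I will argue that this non-availability already forces $\ell(a_{k+1}) \le k$.

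The first step is to pin the settled arcs inside $\structure_\alpha$ for the entire interval during which $a_{f-1}$ remains in $\alpha$'s active path. Since $a_{f-1}$ lies on that active path we have $\ell(a_{f-1}) < \infty$ and $a_{f-1} \in \structure_\alpha$; and since $\{a_f, \ldots, a_k\}$ is settled, in particular $\ell(\leftarc{a_f}) < \infty$. Applying \Cref{obs: opposing arcs} to the three-arc alternating sub-path $(a_{f-1}, a_f)$ of $P$ places $a_{f-1}$ and $\leftarc{a_f}$ in a common structure; combining this with \Cref{lemma: sameStr}, which keeps the settled arcs co-located as a single structure, yields $a_f, \ldots, a_k \in \structure_\alpha$ throughout the relevant interval.

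The core is then a two-case analysis of the hypothetical extension to $\astar = a_{k+1}$ with $h = k - f + 1$ jumps over $a_f, \ldots, a_k$. The disjointness hypothesis $b_h \ne \leftarc{a_j}$ for $h < k'$ and $f \le j \le k+1$, combined with $P$ itself being an alternating path (which in particular gives $a_{f-1} \ne \leftarc{a_j}$ for $j \ge f$), ensures that the candidate $(\alpha, b_1, \ldots, b_{k'}, a_f, \ldots, a_k, a_{k+1})$ is an alternating path, and by the previous step its intermediate arcs all lie in $\structure_\alpha$. Hence this extension can fail only for one of two reasons. First, if $\ell(a_{k+1}) \le k' + (k - f + 1)$, then from $k' \le f - 1$ we conclude $\ell(a_{k+1}) \le k \le k+1$ and we are done. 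Otherwise there exists some $j \in \{f, \ldots, k\}$ with $\ell(a_j) > k' + (j - f + 1)$; letting $j^{\star}$ be the smallest such index, the shorter extension of $\alpha$'s active path directly to $a_{j^{\star}}$, jumping over only $a_f, \ldots, a_{j^{\star}-1}$, satisfies all conditions of \AlgExtendStructures, since by minimality the intermediate labels obey $\ell(a_{f+i-1}) \le k' + i$ for $i \le j^{\star} - f$, and $\ell(a_{j^{\star}}) > k' + (j^{\star} - f + 1) > k' + (j^{\star} - f)$. So a valid extension \emph{is} available at the moment just before backtracking, contradicting the no-extension assumption. Only the first case survives, giving the claim.

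The main obstacle I anticipate is making the first step airtight in the presence of overtakes: an overtake could in principle displace a settled arc from $\structure_\alpha$ without touching $a_{f-1}$, and I need to rule this out. The key point is that \Cref{obs: opposing arcs} binds $a_{f-1}$ and $\leftarc{a_f}$ to a single structure from the first pass bundle in which both carry finite labels onwards, so any overtake that moves one of them must move both; by iterating this observation along the path $(a_{f-1}, a_f, \ldots, a_k)$, the entire settled chain migrates between structures as a unit. Since the hypothesis keeps $a_{f-1}$ in $\alpha$'s active path throughout the interval of interest, the chain never leaves $\structure_\alpha$. A secondary, minor subtlety is verifying that the disjointness hypothesis transfers to the shorter extension candidate $a_{j^{\star}}$ used in the contradiction step, but this is immediate because that candidate uses only a prefix of the original jump sequence.
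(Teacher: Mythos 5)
Your proposal has a genuine gap in the disjointness argument. The lemma's hypothesis only forbids $b_h = \leftarc{a_j}$ for $h < k'$; it does \emph{not} forbid $b_h = a_j$ for some $f \le j \le k+1$. If such an $h$ exists, then your candidate $(\alpha, b_1, \ldots, b_{k'}, a_f, \ldots, a_k, a_{k+1})$ traverses the arc $a_j$ twice, so it is not an alternating path and \AlgExtendStructures cannot perform that extension. Your fall-back case (the shorter extension to $a_{j^\star}$) hits the same problem whenever $j \ge j^\star$. The paper's proof handles precisely this situation by first choosing the \emph{largest} index $r$ with $f-1 \le r \le k+1$ such that $a_r$ occurs on $P_\alpha(b_{k'})$, and then extending from the strictly shorter prefix $P_\alpha(a_r)$ rather than from $P_\alpha(a_{f-1})$.

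A second, more subtle issue is that you implicitly pin the active-path prefix at the state it had in \PassBundle $\tau$. But the conclusion of the lemma concerns the moment $a_{f-1}$ is backtracked, which may be several \PassBundle{}s later; by then the prefix may have been altered by overtakes, extensions, and trims, so neither $k'$ nor the $b_h$'s nor the disjointness hypothesis is guaranteed to persist. The paper's ``Third'' step addresses this head-on with a structural argument: no arc in $\{a_{j'}, \leftarc{a_{j'}} \mid r < j' \le k+1\}$ can be added to the prefix of $P_\alpha(a_r)$ before $a_r$ is backtracked, because that would require it to become active via a structure disjoint from the one holding $a_r$, contradicting \Cref{lemma: sameStr}. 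Your use of \Cref{obs: opposing arcs} and \Cref{lemma: sameStr} keeps the settled chain co-located in one structure, but by itself it does not rule out an arc of that chain being injected into the active-path prefix.
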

\begin{proof}
	Note first that an index $r$ as specified in the lemma statement exists since $a_{f - 1} \in P_\alpha(b_{k'})$.
	Let $P_\alpha$ be the active path of $\structure_\alpha$.
	Our goal is to show that eventually, before all arcs in $P_\alpha \cap P$ are inactive, there is an active arc $a_r$, for some $f - 1 \leq r \leq k + 1$, such that no arc $a_j$ or $\leftarc{a_j}$ belongs to the active path, for $r < j \leq k + 1$. We will show that it holds for an arc $a_r \in P_\alpha(b_{k'})$.	
	The claim that no such $\leftarc{a_j}$ exists if $a_r \in P_\alpha(b_{k'})$ is provided by the lemma statement.	
	It remains to show that no such $a_j$ exists. 

	Now, we show that the prefix $P_{\alpha}(a_r) = (\alpha, b_1, \ldots, b_{j - 1}, a_r)$ of the active path can be extended to $a_{k + 1}$ along $P$ before backtracking from $a_r$. We also show that the length of such extended path is at most $k + 1$.
	The claim about length is immediate as $r \geq f - 1$ and $|P_{\alpha}(a_r)| \leq r$.

	For the former claim, we make a few observations and conclude that there is a pass when $a_r$ is active and no $\{ a_{j'}, \leftarc{a_{j'}} \mid r < j' \leq k + 1 \} = S'$ is active and all arcs in $S'$ are in the same structure as $a_r$.
	
	\begin{itemize}
		\item By \Cref{lemma: sameStr}, no two arcs $\{ a_{f - 1}, \ldots, a_{k} \}$ can belong to different structures. This holds at latest starting from $\PassBundle$ $\tau$.
		\item Observe that an arc cannot become inactive unless backtracked or removed. This applies to the active arc $a_{r}$ as well.
		\item We show that no arc in $S'$ can be added to the prefix of $P_{\alpha}(a_r)$ before $a_r$ is backtracked.
	Suppose that there is such an arc $c \in S'$.
	Then, there must be a \PassBundle $\tau' \ge \tau$, where some $\structure_\gamma$, that does not contain $a_r$, makes $c$ active.
	Otherwise, $c$ cannot be added to the prefix of the active path containing $a_r$ in \PassBundle $\tau'$.
	Suppose without loss of generality that $a_r \in \structure_\alpha$.
	Notice that in \PassBundle $\tau'$, it cannot be the case that $\gamma$ overtakes\footnote{\emph{Overtaking} here refers to an invocation of \AlgOvertake in which the arc $c$ is added to $\structure_\gamma$.} $a_r$ since \AlgOvertake moves only those matched arcs from $\structure_\alpha$ to $\structure_\gamma$ for which all alternating paths from $\alpha$ contain $c$. Since $a_r$ is in the active path of $\structure_\alpha$, there is an alternating path from $\alpha$ to $a_r$ that does not contain $c$.	
	We concluded above that $a_r$ and $c$ are in the same structure and, hence, no such arc $c$ exists.	
	\end{itemize}

\noindent Thus, it must be the case that eventually arc $a_r$ is active and no  arc in $\{ a_j, \leftarc{a_j} \mid r < j \leq k + 1 \}$ is active.
	Furthermore, since $\{ a_{f}, \ldots, a_{k} \}$ are reachable, the structure whose active arc is $a_r$ (implicitly) contains all the arcs related to the path $(a_r, \ldots, a_k)$.
	Then, before backtracking $a_r$, our algorithm extends the active path to $a_{k + 1}$ and sets $\ell(a_{k + 1}) \leq k + 1$.
\end{proof}


\begin{lemma}\label{lemma: lastleft}
	Consider an alternating path  $P = (a_{f - 1}, \ldots, a_{k + 1})$ and suppose that $S = \{a_f, \ldots, a_k\}$ is already settled at the beginning of $\PassBundle$ $\tau$.
	Assume that in $\PassBundle$ $\tau$ there is an active path (or its prefix) $P_\alpha(b_{k'}) = (\alpha, b_1, \ldots, b_{k'})$ such that $b_{k'} = a_{x}$, for some $f - 1 \leq x \leq k$ and $k' \le x$.	
	Suppose that from $\PassBundle$ $\tau$ it holds that $\ell(a_{k + 1}) > k + 1$ at least while an arc in $\{a_{f - 1}, \ldots, a_{k}\}$ is active.
	Then, there is an index $s < k'$ such that $b_s = \leftarc{a_j}$ for $x  < j \leq k + 1$.
	Suppose that $s < k'$ is the smallest such index.
	Then, there is no $s' < s$ such that $b_{s'} = a_{j'}$ for some $x \leq j' \leq k + 1$.
\end{lemma}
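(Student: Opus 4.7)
The plan is to prove both parts by contradiction, each time reducing to a direct application of \cref{obs: no new active} on a progressively shorter suffix of $P$. The key observation driving the proof is that the given bound $k' \le x$, combined with the arc-disjointness of the active path $(\alpha, b_1, \ldots, b_{k'})$, pins the relevant indices into a narrow enough numerical range to satisfy the $k' \le f-1$ precondition of \cref{obs: no new active} when invoked on sub-paths.

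For the existence of $s$, I would suppose toward a contradiction that no $b_h$ with $h \le k'$ equals $\leftarc{a_j}$ for any $j \in \{x+1, \ldots, k+1\}$ (the case $h = k'$ is automatic since $b_{k'} = a_x$ is a forward arc). Then I would invoke \cref{obs: no new active} on the alternating sub-path $(a_x, a_{x+1}, \ldots, a_{k+1})$ with settled set $\{a_{x+1}, \ldots, a_k\} \subseteq S$ and active-path prefix $(\alpha, b_1, \ldots, b_{k'})$. The hypotheses $b_{k'} = a_x$, $k' \le x$, and the absence of blocking reverse arcs in the prefix are all met, so \cref{obs: no new active} forces $\ell(a_{k+1}) \le k+1$ at some moment before $a_x$ is backtracked. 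Since $a_x \in \{a_x, \ldots, a_k\}$ remains active throughout that window, the hypothesis of \cref{lemma: lastleft} yields the contradictory $\ell(a_{k+1}) > k+1$.

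For the minimality statement, assume some $s' < s$ has $b_{s'} = a_{j'}$ with $j' \in \{x, \ldots, k+1\}$. Arc-disjointness of the active path together with $b_{k'} = a_x$ rules out $j' = x$, so $j' \ge x+1$, and hence the chain $s' < s \le k' \le x < j'$ delivers $s' \le j'$, which is precisely the numerical precondition I will need. If $j' = k+1$, the structure condition $\ell(b_i) \le i$ applied at position $s'$ gives $\ell(a_{k+1}) \le s' \le k$, immediately contradicting the hypothesis (because $a_x$ is active in $\PassBundle$ $\tau$). Otherwise $j' \in \{x+1, \ldots, k\}$, and I would apply \cref{obs: no new active} again with the shorter sub-path $(a_{j'}, a_{j'+1}, \ldots, a_{k+1})$, settled set $\{a_{j'+1}, \ldots, a_k\}$, and active-path prefix $(\alpha, b_1, \ldots, b_{s'})$; the required absence of $\leftarc{a_j}$ with $j \in \{j'+1, \ldots, k+1\} \subseteq \{x+1, \ldots, k+1\}$ in positions $h < s'$ follows directly from the minimality of $s$ and $h < s' < s$. \cref{obs: no new active} then yields $\ell(a_{k+1}) \le k+1$ before $a_{j'}$ backtracks, while $a_{j'} \in \{a_x, \ldots, a_k\}$ remains active during that window, contradicting the hypothesis once more.

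The main obstacle is the bookkeeping around this case split and verifying the $s' \le j'$ precondition for the recursive application of \cref{obs: no new active}; once the chain $s' < s \le k' \le x < j'$ is identified (with arc-disjointness of the alternating path forcing $j' > x$), the two reductions to \cref{obs: no new active} are essentially automatic, and no new structural fact about the algorithm is required beyond what has already been established in the preceding lemmas.
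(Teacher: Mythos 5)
Your proof is correct and takes essentially the same route as the paper: both parts are reduced to applications of \cref{obs: no new active} on suitable suffixes of $P$, which is exactly what the paper does (the paper even opens with ``This lemma is a corollary of \cref{obs: no new active}''). Your extra case split on $j' = k+1$ (handled directly via the active-path label constraint rather than the observation) is a small refinement that sidesteps a degenerate application the paper glosses over, but it does not change the substance of the argument.
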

\begin{proof}
	This lemma is a corollary of \Cref{obs: no new active}.
	First, we want to show that an index $s$, as specified in the lemma statement, exists.
	For a contradiction, suppose that there is no such index $s$.
	By an application of \Cref{obs: no new active}, such that $x$ corresponds to $f - 1$ in \Cref{obs: no new active}, we obtain that $\ell(a_{k + 1}) \leq k + 1$ before every arc in $\{a_{x}, \ldots, a_{k}\}$ is backtracked yielding a contradiction.

	Then, we prove the latter claim in the lemma statement.
	For a contradiction, suppose that there is an index $s'$ as in the lemma.
	By definition, we have a prefix of the active path $P_\alpha(b_{s'}) = (\alpha, b_1, \ldots, a_{j'})$ such that $\{a_{j' + 1}, \ldots, a_k\}$ is settled.
	Furthermore, there is no $b_h = \leftarc{a_{j}}$ such that $h < s'$ and $s' < h \leq k$ by the lemma statement.
	We can apply \Cref{obs: no new active}, where in the application $j'$ from this proof corresponds to $f-1$ in \Cref{obs: no new active}, to obtain that $\ell(a_{k + 1}) \leq k + 1$ before every arc in $\{a_{f - 1}, \ldots, a_{k}\}$ is backtracked yielding a contradiction.
\end{proof}

\newtext{
\noindent The following claim is one of crucial building blocks in our coming proofs. 
Informally speaking, it gives us a handle to (inductively) argue that once all arcs in $(\alpha, a_1, \ldots, a_k)$ have been searched with a short augmenting path, but $a_{k + 1}$ has not, it must either be the case that (at least) one of them is still active, or $(a_1, \ldots, a_k)$ is settled.
This can then, in turn, be used to argue that before $\alpha$ becomes inactive, it will have a chance to visit $a_{k + 1}$ through $(a_1, \ldots, a_k)$.
Note that in the statement below condition that $\{ a_{i}, \leftarc{a_{i}} \mid 2 \leq i \leq k \}$ are reachable implies that $\{ a_{i}, \leftarc{a_{i}} \mid 2 \leq i \leq k \}$ are settled.
}
\newcommand{\cmax}{\ensuremath{c_{\textrm{max}}}}
\begin{lemma}\label{lemma: non-active cycle}
	Consider an alternating path $P = (a_1, a_2, \ldots, a_{k + 1})$ such that, for a free node $\alpha$, $a_1, \leftarc{a_{k + 1}} \in \structure_\alpha$ and $\{ a_{i}, \leftarc{a_{i}} \mid 2 \leq i \leq k \}$ are reachable.
	The alternating distance from $\alpha$ to $a_1$ does not need to be $1$.	
	Suppose that there is an alternating path $P(a_{x})$, for $2 \leq x \leq k + 1$, from $\alpha$ to $a_x$ such that $a_j, \leftarc{a_j} \not \in P(a_x)$ for all $1 \leq j < x$.
	Then, $\leftarc{a_1}$ is reachable, i.e., there is an alternating path from $\alpha$ to $\leftarc{a_1}$ within $\RD(\structure_\alpha)$.
\end{lemma}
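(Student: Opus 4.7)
My plan is to construct an alternating path from $\alpha$ to $\leftarc{a_1}$ within $\structure_\alpha$ by ``gluing'' a prefix inside $\structure_\alpha$ to the reverse traversal of $P$. The key observation is that $(\leftarc{a_{k+1}}, \leftarc{a_k}, \ldots, \leftarc{a_2}, \leftarc{a_1})$ is itself an alternating path in the graph: consecutive reverse arcs $\leftarc{a_{j+1}}$ and $\leftarc{a_j}$ are joined by the unmatched edge $\{u_{j+1}, v_j\}$ that must be present because $P$ is alternating, and every matched arc in this reverse traversal except $\leftarc{a_1}$ already belongs to $\structure_\alpha$. It therefore suffices to exhibit an alternating path $Q$ inside $\structure_\alpha$ from $\alpha$ to $\leftarc{a_{k+1}}$ that avoids every arc in $\{a_i, \leftarc{a_i} : 1 \le i \le k\}$; concatenating $Q$ with $(\leftarc{a_k}, \leftarc{a_{k-1}}, \ldots, \leftarc{a_1})$ then yields the desired alternating path.

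To produce such a $Q$, I would leverage the hypothesis on $P(a_x)$ together with the structural richness of $\structure_\alpha$, which by assumption contains both directions of every middle arc. I plan to proceed by induction on $k+1-x$. In the base case $x = k+1$, the given $P(a_{k+1})$ already avoids every forbidden arc, and since $\leftarc{a_{k+1}} \in \structure_\alpha$, I would splice $P(a_{k+1})$ with a structural witness path to $\leftarc{a_{k+1}}$ to obtain $Q$. In the inductive step $x < k+1$, the idea is to extend $P(a_x)$ one step forward along $P$ using $a_{x+1} \in \structure_\alpha$ (which holds since $x + 1 \le k$), thereby reducing to an $\alpha$-to-$a_{x+1}$ alternating path satisfying a strengthened avoidance, and then invoke the inductive hypothesis.

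The main obstacle will be controlling the interaction between $P(a_x)$ and the arcs in $\{a_j, \leftarc{a_j} : j > x\}$ that $P(a_x)$ is permitted to contain, since a naive extension cannot preserve disjointness in the presence of such collisions. I expect to handle this by a case analysis on the largest index $y > x$ for which $a_y$ or $\leftarc{a_y}$ appears in $P(a_x)$, and then perform a symmetric-difference-style swap that re-routes through the alternative reverse arc already available in $\structure_\alpha$. The fact that every middle arc is settled and thus available in both directions within $\structure_\alpha$ provides precisely the flexibility needed to bypass any such collision while remaining inside the structure.
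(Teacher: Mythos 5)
Your reduction in the first paragraph is sound: if you could exhibit an alternating path $Q$ inside $\structure_\alpha$ ending at $\leftarc{a_{k+1}}$ and avoiding $\{a_i,\leftarc{a_i}\mid 1\le i\le k\}$, then $Q\circ(\leftarc{a_k},\ldots,\leftarc{a_1})$ would indeed witness the claim. The gap is in producing $Q$, and it shows up already in your base case. The hypothesis hands you $P(a_{k+1})$, a path ending at $a_{k+1}$, not at $\leftarc{a_{k+1}}$; these are not interchangeable. An alternating path into $a_{k+1}$ arrives at the tail $u_{k+1}$ through an \emph{unmatched} edge and must then traverse $a_{k+1}$ forward to $v_{k+1}$; there is no move that lets you ``turn around'' onto $\leftarc{a_k}$, since going backward along $P$ requires arriving at $u_{k+1}$ through the \emph{matched} arc $\leftarc{a_{k+1}}$. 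Your fix---``splice $P(a_{k+1})$ with a structural witness path to $\leftarc{a_{k+1}}$''---is unjustified: the witness to $\leftarc{a_{k+1}}$ is an arbitrary path in $\structure_\alpha$ and may well pass through $a_1$ (or any other forbidden arc), and two alternating paths meeting at a vertex do not splice into an alternating path without a careful first-intersection argument. The same difficulty reappears in your inductive step, and ``symmetric-difference-style swap'' is too vague to certify it goes through.

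The paper's proof avoids building $Q$ at all. It assumes for contradiction that $\ell(\leftarc{a_1})=\infty$ and proves, by induction on $i$, the stronger claim that \emph{every} alternating path within $\structure_\alpha$ to any arc in $\{a_j,\leftarc{a_j}\mid 2\le j\le i\}$ must pass through $a_1$; the hypothesis on $P(a_x)$ then yields the contradiction directly. The engine of that induction is exactly the splicing you need but did not supply: take a path $P(a_{i+1})$ avoiding $a_1$, take the suffix of $P(\leftarc{a_{i+1}})$ starting at $a_1$ (which exists by the inductive hypothesis), locate the \emph{first} arc of $P(a_{i+1})$ that meets that suffix in either orientation, and route through the intersection to reach $\leftarc{a_1}$, using the inductive hypothesis again to rule out collisions with earlier $a_j$'s. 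That asymmetric treatment of $a_j$ versus $\leftarc{a_j}$ (the $\leftarc{a_j}$ case is easy; the $a_j$ case needs the intersection argument) is the missing ingredient in your plan. I'd recommend abandoning the direct construction of $Q$ and adopting the contradiction-plus-``all paths contain $a_1$'' framing.
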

\begin{proof}
	Toward a contradiction, we will assume that $\leftarc{a_1}$ is not reachable, i.e., there is no alternating path from $\alpha$ to $\leftarc{a_1}$ within $\structure_\alpha$.
	The idea of the proof is to show that if $\leftarc{a_1}$ is not reachable, then it is the case that all paths to all arcs $\{ a_j, \leftarc{a_j} \mid 2 \leq j \leq a_{k} \}$ contain arc $a_1$.
	Then, we observe that this is a contradiction with the fact that $P(a_x)$ does not contain $a_1$.
	For an illustration of the inductive step, please refer to \Cref{fig: cycles}.
	
	\paragraph{Base Case:}
	For the base case, suppose for a contradiction that there is an alternating path $P(\leftarc{a_2})$ to $\leftarc{a_2}$ such that $a_1 \not \in P(\leftarc{a_2})$.
	Then, $P(\leftarc{a_2}) \circ (\leftarc{a_1})$ is an alternating path leading to a contradiction with the fact that $\leftarc{a_1}$ is not reachable. Hence, $a_1 \in P(\leftarc{a_2})$.
	\\\\
	Now, suppose for a contradiction that there is an alternating path $P(a_2) = (\gamma, b_1, b_2, \ldots, b_h)$ to the arc $b_h = a_2$ such that $a_1 \not \in P(a_2)$ and $\gamma$ is some free vertex.
	Consider the suffix $P(a_1, \leftarc{a_2}) = (c_1, \ldots, c_h)$ of $P(\leftarc{a_2})$, where $c_1 = a_1$ and $c_{h} = \leftarc{a_2}$.
	Notice that such an alternating path must exist since $a_1$ is on every alternating path to $\leftarc{a_2}$.
	Let $j$ be the smallest index such that $b_j \in P(a_1, \leftarc{a_2})$ or $\leftarc{b_j} \in P(a_1, \leftarc{a_2})$.
	Notice that such a $j$ must exist since $b_h = a_2$. Since $a_1 \not \in P(a_2)$, then $b_j \neq a_1$.
	Consider first the case that $b_j = c_{j'}$ for some $j' > 1$.
	Now, we have that $(\gamma, b_1, \ldots, b_{j - 1}) \circ (c_{j'}, c_{j' + 1}, \ldots, \leftarc{a_2}, \leftarc{a_1})$ is an alternating path.
	Then, consider the case that $b_j = \leftarc{j'}$ for some $j'$.
	We have that $(\gamma, b_1, \ldots, b_{j - 1}) \circ (\leftarc{c_{j'}}, \leftarc{c_{j' + 1}}, \ldots, \leftarc{c_1})$ in an alternating path, which is a contradiction since $\leftarc{c_1} = \leftarc{a_1}$ and we assumed that $\leftarc{a_1}$ is not reachable.

	\paragraph{Inductive Step:}
	Suppose now that the claim holds for all $a_1, \ldots, a_i$, i.e., all paths (in $\structure_\alpha$) to all arcs in $S = \{ a_j, \leftarc{a_j} \mid 1 \leq j \leq i  \}$ contain the arc $a_1$.
	We do a similar case distinction as before.
	We begin with showing that also all paths to $\leftarc{a_{ i + 1 }}$ must contain the arc $a_1$.
	Let $P(\leftarc{a_{ i + 1 }})$ be an alternating path to $\leftarc{a_{ i + 1 }}$.
	First, we observe that if no arc $a_j, \leftarc{a_j} \in S$, for $j \leq i$, is contained in $P(\leftarc{a_{ i + 1 }})$, then $P(\leftarc{a_{ i + 1 }}) \circ (\leftarc{a_{i}}, \ldots, \leftarc{a_1})$ is an alternating path to $\leftarc{a_1}$.
	Therefore, some arc in $S$ must belong to $P(\leftarc{a_{j + 1}})$.
	By the inductive assumption, we know that all paths to all arcs in $S$ contain $a_1$ and hence, any path to $P(\leftarc{a_{ i + 1 }})$ must also contain $a_1$.
	
	Now we proceed to the more involved case that all paths to $a_{i + 1}$ contain the arc $a_1$.
	Let $P(a_{i + 1}) = (\alpha, b_1, b_2, \ldots, b_h)$ be an alternating path to the arc $b_h = a_{i + 1}$ such that $a_1 \not \in P(a_{i + 1})$.
	Furthermore, let $P(\leftarc{a_{i + 1}})$ be an alternating path to $\leftarc{a_{i + 1}}$ and consider the suffix $P(a_1, \leftarc{a_{i + 1}}) = (c_1, \ldots, c_{h'})$ of $P(\leftarc{a_{i + 1}})$, where $c_1 = a_1$ and $c_{h'} = \leftarc{a_{i + 1}}$.
	Such a suffix must exists since $a_1$ is on every alternating path to $\leftarc{a_{i + 1}}$.
	
	Let $j$ be the smallest index such that $b_j \in P(a_1, \leftarc{a_{i + 1}})$ or $\leftarc{b_j} \in P(a_1, \leftarc{a_{i + 1}})$.
	Notice that such a $j$ must exist since $b_h = a_{i + 1}$. Since $a_1 \not \in P(a_{i + 1})$, then $b_j \neq c_1 = a_1$.
	Consider first the case that $b_j = \leftarc{c_{j'}}$ for some $2 \leq {j'} \leq h'$.
	Then, we have that $(\gamma, b_1, \ldots, b_{j - 1}) \circ (\leftarc{c_{j'}}, \leftarc{c_{j' - 1}}, \ldots, \leftarc{a_1})$ is an alternating path to $\leftarc{a_1}$, which is a contradiction with $\leftarc{a_1}$ not being reachable.
	
	Consider then the case $b_j = c_{j'}$.
	Observe that since $a_1 \not \in (b_1, \ldots, b_{j - 1})$ and $a_1 \not \in (c_{2}, \ldots, c_{h'})$, no arc $a_x \in S$ can be contained in $(b_1, \ldots, b_{j - 1}) \circ (c_{j'}, \ldots, c_{h'})$.
	Otherwise, we would have a path to $a_x$ that does not contain $a_1$, which cannot be the case by the inductive assumption.
	Furthermore, since $j$ was the smallest index, we have that $(\gamma, b_1, \ldots, b_{j - 1}) \circ (c_{j'}, \ldots, c_{h' - 1}) \circ (\leftarc{a_{i - 1}}, \leftarc{a_i}, \ldots, \leftarc{a_1})$ is an alternating path to $\leftarc{a_1}$.
\end{proof}

\begin{figure}
	\centering
	\includegraphics[width=0.95\linewidth]{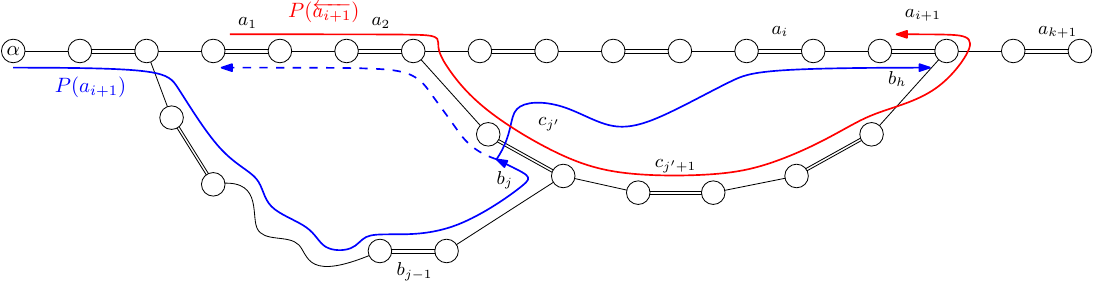}\\\vspace{0.7cm}
	\includegraphics[width=0.95\linewidth]{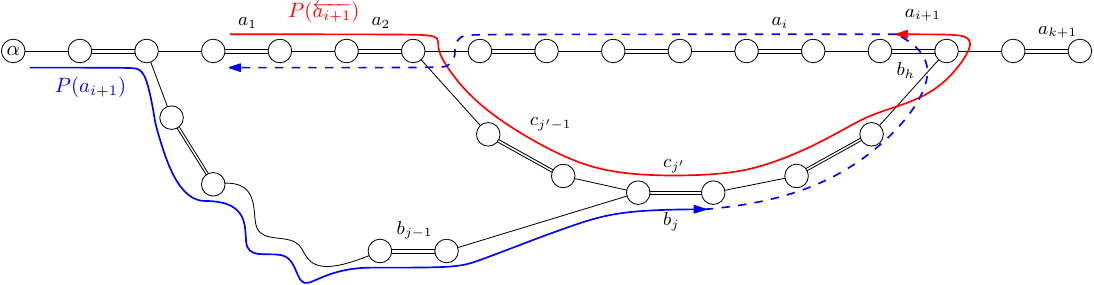}
	\caption{Illustrations for the inductive step of \Cref{lemma: non-active cycle}. The upper picture illustrates the case where the alternating path $P(a_{i + 1}) = (b_1, \ldots, b_h)$ intersects the suffix $(c_1, \ldots, c_{h'})$ alternating path $P(\protect\leftarc{a_{i + 1}})$ at some $\protect\leftarc{b_j} = c_{j'}$. In this case, we can complete the prefix $(b_1, \ldots, b_{j})$ into a path to $\protect\leftarc{a_1}$ using the prefix of $P(\protect\leftarc{a_{i + 1}})$. In the lower, we consider the case that $b_j = c_{j'}$. In this case, we can complete the prefix $(b_1, \ldots, b_{j})$ into an alternating path to $\protect\leftarc{a_1}$ using the suffix of $P(\protect\leftarc{a_{i + 1}})$ and the path $\protect\overleftarrow{a_1, \ldots, a_{i + 1}}$. In the proof, we show that $(b_1, \ldots, b_{j - 1}) \circ (c_{j'}, \ldots c_{h'})$ cannot intersect the path $(a_1, \ldots, a_{i + 1})$.
	}
	\label{fig: cycles}
\end{figure}

\begin{lemma}\label{lemma: tagging}
	Consider an alternating path  $(a_1, \ldots, a_{k + 1})$.
	Suppose that $S = \{a_f, \ldots, a_k\}$ is already settled and $\ell(\leftarc{a_{k + 1}}) < k$ at the beginning of $\PassBundle$ $\tau$. 
	At $\tau$, let $P_\alpha$ be an active alternating path (or its prefix) $(\alpha, b_1, \ldots, b_h)$ such that $b_h = a_x$ for some $f \leq x \leq k$. 
	Furthermore, let $j$ be the largest index such that $j < f$ and at the beginning of the $\PassBundle$ $\tau$ either 
	\begin{enumerate}
		\item[(1)] $a_j$ or $\leftarc{a_j}$ is active and $\ell(a_j) \leq j$, or
		\item[(2)] $\ell(a_{j}) > j$.
	\end{enumerate}
	If such $a_j$ does not exist, then let $j = 0$.
	Then, $\{a_{j + 1}, \ldots, a_{f - 1}\} \cup S$ is also already settled at the beginning of \PassBundle $\tau$.
\end{lemma}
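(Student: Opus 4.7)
The plan is to proceed by downward induction on $i$, showing that $\{a_i, a_{i+1}, \ldots, a_{f-1}\} \cup S$ is settled at the beginning of $\PassBundle$ $\tau$, starting from $i = f$ (which holds vacuously by the hypothesis that $S$ is settled) and going down to $i = j + 1$. For the inductive step, the bound $\ell(a_{i-1}) \le i - 1$ is immediate: since $i - 1 > j$, the two conditions defining $j$ both fail at index $i-1$; in particular, the failure of (2) gives $\ell(a_{i-1}) \le i - 1$, and together with the failure of (1) this also yields that both $a_{i-1}$ and $\leftarc{a_{i-1}}$ are inactive at $\tau$ -- a fact I will use crucially below. The harder half is $\ell(\leftarc{a_{i-1}}) < \infty$, which I plan to obtain by invoking \cref{lemma: non-active cycle} on the alternating path $(a_{i-1}, a_i, \ldots, a_{k+1})$.

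To set up that invocation, I first argue that all the relevant arcs lie in a single structure $\structure_\alpha$ at the beginning of $\tau$. By the inductive hypothesis together with $S$ being settled, every arc in $\{a_i, \ldots, a_{k}\}$ and its reverse has finite label, so iteratively applying \cref{obs: opposing arcs} to adjacent pairs $(a_p, a_{p+1})$ places all of $\{a_p, \leftarc{a_p} : i \le p \le k\}$ inside one structure; since $a_x \in S$ lies in the structure of the given active path, this common structure must be $\structure_\alpha$. One additional application of \cref{obs: opposing arcs} to $(a_k, a_{k+1})$, using the hypothesis $\ell(\leftarc{a_{k+1}}) < k < \infty$, places $\leftarc{a_{k+1}}$ in $\structure_\alpha$, and the analogous argument on $(a_{i-1}, a_i)$ adds $a_{i-1}$ to $\structure_\alpha$ as well.

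It then remains to exhibit an alternating path from $\alpha$ to some $a_w$ with $i \le w \le k+1$ that avoids $a_p$ and $\leftarc{a_p}$ for all $i - 1 \le p < w$: \cref{lemma: non-active cycle} will then produce an alternating path inside $\structure_\alpha$ from $\alpha$ to $\leftarc{a_{i-1}}$, so $\leftarc{a_{i-1}}$ becomes tagged and $\ell(\leftarc{a_{i-1}}) < \infty$. I start from the given active path $(\alpha, b_1, \ldots, b_h)$, $b_h = a_x$, and observe that because every matched arc on an active path is itself active, while every $a_p, \leftarc{a_p}$ with $j < p < f$ is inactive at $\tau$, none of the latter arcs can appear as any $b_s$; hence avoidance is automatic on the entire index range $[i-1, f-1]$. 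I then let $y^*$ be the smallest index in $[f, k+1]$ for which some $b_{s^*}$ equals $a_{y^*}$ or $\leftarc{a_{y^*}}$ (which exists since $b_h = a_x$), take $s^*$ minimal, and consider the prefix $(\alpha, b_1, \ldots, b_{s^*})$, which by construction avoids $a_p, \leftarc{a_p}$ for all $i-1 \le p < y^*$. If $b_{s^*} = a_{y^*}$, I set $w := y^*$ and apply \cref{lemma: non-active cycle} directly; if instead $b_{s^*} = \leftarc{a_{y^*}}$, I append the reversed tail $(\leftarc{a_{y^* - 1}}, \ldots, \leftarc{a_{i-1}})$ to the prefix, which is a legitimate alternating continuation in the graph since reversing $(a_{i-1}, \ldots, a_{y^*})$ yields an alternating path.

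The main obstacle, and the source of most of the bookkeeping, is this last appending step: one has to verify that the concatenation is a valid alternating path, i.e.\ contains no repeated matched arc and never contains both an arc and its reverse. All such collisions are ruled out by combining (i) the inactivity of every arc in $\{a_p, \leftarc{a_p} : j < p < f\}$, which forces these arcs to be absent from any active path, with (ii) the minimality of $y^*$, which keeps $\{a_p, \leftarc{a_p} : f \le p < y^*\}$ off the chosen prefix. Together these two facts exactly cover the index range $[i-1, y^*-1]$ in which the reversed tail lives, completing the inductive step and hence the proof.
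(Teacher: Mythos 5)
Your proof is correct and follows essentially the same route as the paper's: a downward induction on the left endpoint, with the easy case (the reverse arc already has a finite label) handled directly, and the hard case dispatched by applying \cref{lemma: non-active cycle} to the sub-path $(a_{i-1}, \ldots, a_{k+1})$, where the feed-in path is a carefully chosen prefix of $P_\alpha$ up to its first intersection with the settled segment. The only differences are cosmetic: the paper restricts its first-intersection index $x'$ to the range $[s, x]$ (with $s$ the moving induction parameter) and observes inactivity only for $a_{s-1}$, whereas you fix the range $[f, k+1]$ and note upfront that the entire window $a_p, \leftarc{a_p}$ for $j < p < f$ is inactive; since inactive arcs cannot lie on an active path, both choices single out the same index. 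You also prove the same-structure fact by chaining \cref{obs: opposing arcs} directly rather than quoting \cref{lemma: sameStr}, and you make the disjointness check for the reversed tail explicit, but the substance is identical.
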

\begin{proof}
	Notice that if $j = f - 1$, then there is nothing to show.
	Hence, suppose that $j < f - 1$.	
	We want to show that under the assumption that $\{a_s, \ldots, a_k\}$ is already settled at the beginning of $\PassBundle$ $\tau$, for some $s$ such that $j + 1 < s \leq f$, then $\{a_{s - 1}\} \cup \{a_s, \ldots, a_k\}$ is also already settled at the beginning of $\PassBundle$ $\tau$. To that end, we consider several cases.
	
	\noindent{\bf Case: $\leftarc{a_{s-1}}$ is reachable at the beginning of $\PassBundle$ $\tau$.}
		Observe that $j + 1 < s$ implies $j < s - 1$. This in particular implies that $\ell(a_{s-1}) \le s - 1$; if it would hold $\ell(a_{s-1}) > s - 1$, then we would have $j \ge s - 1$, contradicting our choice of $j$. Together with $\leftarc{a_{s-1}}$, being reachable we now have that $a_{s-1}$ is settled.
	\\\\
	\noindent{\bf Case: $\ell(a_{s-1}) < \infty$ at the beginning of $\PassBundle$ $\tau$.}
	We obtain this by an application of \Cref{lemma: non-active cycle} and next, we show that the prerequisites for the lemma are satisfied.

	By \Cref{lemma: sameStr}, after the \PassBundle where $S$ became settled, it cannot be the case that arcs in $\{ a_{s - 1} \} \cup \{ \leftarc{a_{k + 1}} \} \cup S = S'$ belong to different structures after that $\PassBundle$.
	Notice that due to this lemma statement, for any arc $b \in S'$ for which it holds that $b$ is reachable at the beginning of the $\PassBundle$ $\tau$, it is the case that they belong to same structure.
	
	Next, let $s \leq x' \leq x$ be the  smallest index such that $a_{x'} \in P_{\alpha}$ or $\leftarc{a_{x'}} \in P_{\alpha}$. Notice that such $x'$ exists as $a_x \in P_\alpha$.
	Notice that in the case that $\leftarc{a_{x'}} \in P_{\alpha}$, we immediately obtain that there is an alternating path from $\alpha$ to $\leftarc{a_{s - 1}}$ through the active path and $(\leftarc{a_{x' - 1}}, \ldots, \leftarc{a_{s-1}})$. (All the unmatched edges on this path were added to the corresponding structure before $\PassBundle$ $\tau$ via \AlgAugmentStructures.)
	Hence, $\leftarc{a_{s-1}}$ is reachable and $\{a_{s - 1}, \ldots, a_k\}$ settled.
	
	Now, suppose that $a_{x'} \in P_\alpha$.
	In this case we apply \Cref{lemma: non-active cycle} and similarly obtain that $\leftarc{a_{s - 1}}$ is reachable. To apply \Cref{lemma: non-active cycle}:
	\begin{itemize}
		\item We let $\{a_s, \ldots, a_k\}$  from this proof correspond to $\{ a_{i}, \leftarc{a_{i}} \mid 2 \leq i \leq k \}$ from \Cref{lemma: non-active cycle}.
		\item We let $a_{x'}$ from this proof correspond to $a_x$ from \Cref{lemma: non-active cycle}.
		\item We let $a_{s - 1}$ from this proof correspond to $a_1$ from \Cref{lemma: non-active cycle}.
		\item Finally, we observe that $\ell(a_{s-1}) \le s - 1$, as otherwise we would have $j \ge s - 1$ but we assumed $j + 1 < s$. Then, also observe that neither $a_{s - 1}$ nor $\leftarc{a_{s-1}}$ is active, as otherwise and together with $\ell(a_{s-1}) \le s - 1$ we would again have $j \ge s - 1$. This now implies that $a_i \notin P_{\alpha}$ and $\leftarc{a_i} \notin P_{\alpha}$ for every $s - 1 \le i < x'$.
	\end{itemize}
	
	Therefore, in both cases, we have that $\{a_{s - 1}\} \cup \{a_s, \ldots, a_k\}$ is settled.

	We now apply the argumentation inductively as long as there is an $s$ such that $j + 1 < s$ and $(a_s, \ldots, a_k)$ is settled, and the claim follows.
\end{proof}

\begin{lemma}\label{lemma: upd-active}
	Suppose that \cref{invariant: jump} and all of the below holds at the beginning of \PassBundle $\tau$.
	Consider an alternating path  $(a_1, \ldots, a_k, a_{k + 1})$.
	Suppose that $S = \{ a_f, \ldots, a_k \}$ is settled, $\ell(a_{f - 1}) \leq f - 1$, and $\ell(a_{k + 1}) > k + 1$ at least while an arc in $\{a_{f - 1}, \ldots, a_k\}$ is active, where ``is active'' is measured from \PassBundle $\tau$ onward.
	Let $\alpha$ be a free vertex and $P_\alpha = (\alpha, b_1, \ldots, b_{k'})$ be an active path (or its prefix) such that $b_{k'} = a_{x}$ for some $f - 1 \leq x \leq k$ and  $k' \leq x$.
	Also, suppose that either $a_{f - 1}$ or $\leftarc{a_{f - 1}}$ is active. 
	Then, $\{a_{f - 1}\} \cup S$ is settled in the beginning of \PassBundle $\tau$.
\end{lemma}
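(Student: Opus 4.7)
If $\leftarc{a_{f-1}}$ is itself active at the beginning of $\PassBundle$ $\tau$, then $\ell(\leftarc{a_{f-1}}) < \infty$ by definition and we are done; so I assume throughout that $a_{f-1}$ is active while $\leftarc{a_{f-1}}$ is not. Applying \cref{lemma: sameStr} to the path $(a_{f-1}, a_f, \ldots, a_{k+1})$ with settled subpath $S$ and $\ell(a_{f-1}) < \infty$, I get that $\{a_{f-1}\} \cup S$ resides in a common structure; since $P_\alpha$ reaches $a_x \in \{a_{f-1}\} \cup S$, that structure must be $\structure_\alpha$. The goal therefore reduces to exhibiting an alternating path within $\structure_\alpha$ from $\alpha$ to $\leftarc{a_{f-1}}$, which by the tagging rule yields $\ell(\leftarc{a_{f-1}}) < \infty$ and hence settledness of $\{a_{f-1}\} \cup S$.

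Let $i^*$ be the smallest index such that $b_{i^*} \in \{a_{j'}, \leftarc{a_{j'}} \mid f - 1 \leq j' \leq k + 1\}$, and let $j^*$ denote the matched index; such $i^*$ exists since $b_{k'} = a_x$ forces $i^* \leq k'$. I now split into cases on $b_{i^*}$. When $b_{i^*} = \leftarc{a_{j^*}}$, the case $j^* = f - 1$ is excluded (it would make $\leftarc{a_{f-1}}$ active), and for $j^* \geq f$ the concatenation $(\alpha, b_1, \ldots, b_{i^* - 1}, \leftarc{a_{j^*}}, \leftarc{a_{j^* - 1}}, \ldots, \leftarc{a_{f-1}})$ is a valid alternating path in $\structure_\alpha$: the prefix avoids every arc of the suffix by the minimality of $i^*$, the reversed arcs $\leftarc{a_{j^* - 1}}, \ldots, \leftarc{a_f}$ lie in $\structure_\alpha$ because $S$ is settled, and the required unmatched edges already sit in $\Estored$ after the call to \AlgAugmentStructures in the preceding \PassBundle. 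When $b_{i^*} = a_{k+1}$, the active-path property forces $\ell(a_{k+1}) \leq i^* \leq k' \leq x \leq k$, directly contradicting the hypothesis $\ell(a_{k+1}) > k + 1$ (which holds at the beginning of $\tau$ because $a_{f-1}$ is active then).

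The delicate case is $b_{i^*} = a_{f-1}$. When $i^* > f - 1$, the prefix $P_\alpha(a_{f-1})$ satisfies $|P_\alpha(a_{f-1})| = i^* > f - 1 \geq \ell(a_{f-1})$, so \cref{invariant: jump} (assumed to hold at the beginning of $\tau$) directly supplies $\ell(\leftarc{a_{f-1}}) < \infty$. When $i^* \leq f - 1$, the minimality of $i^*$ ensures that no $b_h$ with $h < i^*$ equals $\leftarc{a_{j'}}$ for $f \leq j' \leq k + 1$, so \cref{obs: no new active} applied with the prefix $(\alpha, b_1, \ldots, b_{i^*})$ concludes that $\ell(a_{k+1}) \leq k + 1$ at some point before $a_{f-1}$ is backtracked; but throughout this window $a_{f-1}$ is active, so the hypothesis forces $\ell(a_{k+1}) > k + 1$ instead, a contradiction. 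The remaining case is $b_{i^*} = a_{j^*}$ with $f \leq j^* \leq k$, which I dispatch by applying \cref{lemma: non-active cycle} to the subpath $(a_{f-1}, a_f, \ldots, a_{j^*})$ using $(\alpha, b_1, \ldots, b_{i^*})$ as the alternating path to $a_{j^*}$. Its preconditions hold: $a_{f-1}, \leftarc{a_{j^*}} \in \structure_\alpha$ and $\{a_i, \leftarc{a_i} \mid f \leq i \leq j^* - 1\} \subseteq \structure_\alpha$ follow from $S$ being settled plus \cref{lemma: sameStr}, and the required avoidance of $a_{j'}, \leftarc{a_{j'}}$ for $f - 1 \leq j' < j^*$ along $P_\alpha(b_{i^*})$ is immediate from the minimality of $i^*$; the lemma then produces the desired alternating path from $\alpha$ to $\leftarc{a_{f-1}}$ in $\structure_\alpha$.

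The hard part is precisely the case $b_{i^*} = a_{f-1}$: neither the direct reversal construction nor \cref{lemma: non-active cycle} applies, and its resolution requires simultaneously exploiting \cref{invariant: jump} when the active path reached $a_{f-1}$ via a jump ($i^* > f - 1$) and the growth-restricting hypothesis on $\ell(a_{k+1})$ together with \cref{obs: no new active} in the opposite regime. The bookkeeping step that must be carried out with particular care is checking that the single choice of $i^*$ as the earliest visit to $\{a_{j'}, \leftarc{a_{j'}} \mid f - 1 \leq j' \leq k + 1\}$ simultaneously guarantees disjointness in the reversal construction and the non-revisitation condition needed to invoke \cref{lemma: non-active cycle}.
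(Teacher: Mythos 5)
Your proof is correct, but it takes a genuinely different route from the paper's. The paper pivots on the position of $a_{f-1}$ in $\alpha$'s active path: if the prefix $P_\alpha(a_{f-1})$ has length greater than $f-1$, it invokes \cref{invariant: jump} directly; otherwise it hands $P_\alpha(a_{f-1})$ to \cref{lemma: lastleft}, whose conclusion immediately supplies the reversed arc $\leftarc{a_j}$ needed to build the alternating path to $\leftarc{a_{f-1}}$. You instead pivot on the \emph{first intersection index} $i^*$ of $P_\alpha$ with $\{a_{j'}, \leftarc{a_{j'}} \mid f-1 \le j' \le k+1\}$ and do an exhaustive case split on the nature of $b_{i^*}$, invoking \cref{invariant: jump}, \cref{obs: no new active} (the engine inside \cref{lemma: lastleft}), the explicit reversal construction, and, for the interior case $b_{i^*} = a_{j^*}$ with $f \le j^* \le k$, \cref{lemma: non-active cycle}. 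This last invocation is the main divergence: the paper's argument never touches \cref{lemma: non-active cycle} for this lemma, because once you look at the prefix $P_\alpha(a_{f-1})$ and it is short, \cref{lemma: lastleft} already forces the first intersection to be a reversed arc, so the interior-$a_{j^*}$ scenario never survives. Your approach is thus somewhat heavier than needed but remains sound and more self-contained, re-deriving part of what \cref{lemma: lastleft} encapsulates; the paper's route is more modular, reusing \cref{lemma: lastleft} as a black box. One small bookkeeping remark: you correctly note, as the paper leaves implicit, that \cref{lemma: sameStr} places $a_{f-1}$ in $\structure_\alpha$ and hence on the same active path as $a_x$, which is the tacit justification for speaking of $a_{f-1}$'s position in $P_\alpha$ at all.
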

\begin{proof}
	We consider two cases separately: $\leftarc{a_{f - 1}}$ is active and $a_{f - 1}$ is active.
	\\\\
	\noindent{\bf Case: $\leftarc{a_{f - 1}}$ is active.}
	Since $\leftarc{a_{f - 1}}$ is active, we have that $\leftarc{a_{f - 1}}$ is reachable. By our assumption it holds $\ell(a_{f - 1}) \leq f - 1$. 
	Hence, $S \cup \{a_{f - 1}\}$ is settled. 
	\\\\
	\noindent{\bf Case: $a_{f- 1}$ is active.}
	Now, we want to show that either the algorithm can extend the active path to $a_{k + 1}$ and set $\ell(a_{k + 1}) \leq k + 1$, or that the algorithm has jumped $a_{f - 1}$. 
	We have the following two subcases.
	Let $P_\alpha(a_{f - 1})$ be the prefix of the active path until $a_{f - 1}$.
	If $|P_\alpha(a_{f - 1})| > f - 1$, we have by \cref{invariant: jump} that $\leftarc{a_{f - 1}}$ is reachable, concluding this case.
	Finally, suppose that $|P_\alpha(a_{f - 1})| \leq f - 1$.	
	
	Our goal is to apply \Cref{lemma: lastleft}, where $f-1$ from this proof plays the role of $x$ in \Cref{lemma: lastleft}. For that, let us confirm that we satisfy all the prerequisites:
	\begin{itemize}
		\item By this lemma statement, we have that $\{a_f, \ldots, a_k\}$ is settled and $\ell(a_{k + 1}) > k + 1$ at least while an arc in $\{a_{f - 1}, \ldots, a_k\}$ is active.
		\item By our assumption, we also have $|P_\alpha(a_{f - 1})| \leq f - 1$, which corresponds to the constraints ``$k' \le x$'' of \Cref{lemma: lastleft}.
	\end{itemize}
	Then, by \Cref{lemma: lastleft}, there is an index $s < k'$ such that $b_s = \leftarc{a_j}$ for some $j \geq f$ and there is no $s' < s$ such that $b_{s'} = a_{j'}$ for any $f-1 \leq j' \leq k + 1$. Hence, there is an alternating path $P_\alpha(\leftarc{a_j}) \circ (\leftarc{a_{j - 1}}, \ldots, \leftarc{a_{f - 1}})$.
	\\
	Therefore, $\leftarc{a_{f - 1}}$ is reachable and $\{a_{f - 1}\} \cup S$ is settled.
	
	We have now gone through all the cases and the claim follows.
\end{proof}

\begin{lemma}\label{lemma: settle before passive}
	Consider an alternating path $(a_f, \ldots, a_k, a_{k + 1})$. Suppose that at the beginning of $\PassBundle$ $\taustar$ the following is true:
	\begin{itemize}
		\item \cref{invariant: jump} holds during every $\PassBundle$ so far,
		\item $\ell(a_j) \leq j$ for all $f \leq j \leq k$,
		\item $\ell(a_{k + 1}) > k + 1$, and $k + 1 \leq \maxlen$,
		\item no $a_j$ is active for $f \leq j \leq k$.
	\end{itemize} 
	Then, $\{ a_j, \leftarc{a_j} \mid f \leq j \leq k \}$ is settled in the beginning of $\PassBundle$ $\taustar$.
\end{lemma}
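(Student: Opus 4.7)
The natural plan is a backward induction on $j$ from $k+1$ down to $f$, showing that $\{a_j, \leftarc{a_j}, \ldots, a_k, \leftarc{a_k}\}$ is settled at the start of $\taustar$. The base $j = k+1$ is vacuous, and at each step I need to add $a_j$, i.e., derive $\ell(\leftarc{a_j}) < \infty$ at the start of $\taustar$. The engine will be \cref{lemma: upd-active}, applied to the alternating subpath $(a_j, a_{j+1}, \ldots, a_k, a_{k+1})$ at a well-chosen earlier $\PassBundle$ $\tau_j \le \taustar$: its conclusion gives settlement of $\{a_j\} \cup \{a_{j+1}, \ldots, a_k\}$ at the start of $\tau_j$, and this propagates forward to $\taustar$ because the ``$<\infty$'' status of labels is monotone in time within a phase.

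To choose $\tau_j$, I exploit that $\ell(a_j) \le j$ at $\taustar$ must have been created by a prior label-setting event (through active-path extension) or a tagging event (through an in-structure alternating path to $\leftarc{a_j}$), each of which requires $a_j$ or $\leftarc{a_j}$ to have been active at that time. I let $\tau_j$ be the latest $\PassBundle$ in which $a_j$ or $\leftarc{a_j}$ is active; since $a_j$ is inactive at $\taustar$ by hypothesis, $\tau_j < \taustar$. At $\tau_j$, the active path $P_\alpha$ passing through $a_j$ (or $\leftarc{a_j}$) supplies both the active-arc and active-path-prefix hypotheses of \cref{lemma: upd-active}, with $x = j$ and $k' = |P_\alpha(a_j)| \le \ell(a_j) \le j$. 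The condition that $\ell(a_{k+1}) > k+1$ throughout any $\PassBundle$ in which an arc of $\{a_j, \ldots, a_k\}$ is active follows from the fact that labels only decrease within a phase, so $\ell(a_{k+1}) > k+1$ at $\taustar$ propagates backward to every $\PassBundle$ $\le \taustar$. The jumping invariant at $\tau_j$ is free from the overall inductive proof of \cref{invariant: jump}.

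The main obstacle is to verify the last remaining hypothesis of \cref{lemma: upd-active}: that $S = \{a_{j+1}, \ldots, a_k\}$ is already settled at the start of $\tau_j$ and not merely at $\taustar$. I would handle this by sharpening the choice of $\tau_j$ to the latest $\PassBundle$ at which \emph{any} arc of $\{a_j, a_{j+1}, \ldots, a_k\}$ is active, and within which $a_j$ or $\leftarc{a_j}$ is among those active. From $\tau_j + 1$ onward no arc of $\{a_{j+1}, \ldots, a_k\}$ can be active, and any label-setting event responsible for $\ell(a_{j'}) \le j'$ with $j+1 \le j' \le k$ must have occurred by $\tau_j$ since it requires activity. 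Hence the hypotheses of the lemma being proven hold for the shorter range $[j+1, k]$ already at $\tau_j$, and an outer induction on the size of the range yields settlement of $S$ at the start of $\tau_j$. The subtle sub-case is when $\ell(a_j) \le j$ arose through tagging of $\leftarc{a_j}$ rather than direct activation of $a_j$; this is handled symmetrically, mirroring the two cases already separated in the proof of \cref{lemma: upd-active}. A single application of \cref{lemma: upd-active} at $\tau_j$ then closes the inductive step, and the lemma follows.
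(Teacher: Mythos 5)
Your plan takes a genuinely different route from the paper: a clean backward induction on the index $j$ anchored at $\taustar$ and driven purely by \cref{lemma: upd-active}, whereas the paper runs a \emph{forward} temporal iteration that starts at the \PassBundle in which $\ell(a_k)$ is first reduced, propagates settlement leftward inside that \PassBundle, and alternates between \cref{lemma: tagging} and \cref{lemma: upd-active} depending on whether the arc in question is still active. Unfortunately, the clean version does not go through, and the reason is precisely the case that \cref{lemma: tagging} is there to handle.

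The central gap is the scheduling assumption baked into the choice of $\tau_j$. You need a \PassBundle at which \emph{simultaneously} (i) $a_j$ or $\leftarc{a_j}$ is active, so \cref{lemma: upd-active} can fire with $f-1=j$, and (ii) $\{a_{j+1},\ldots,a_k\}$ is already settled at the \emph{beginning} of that \PassBundle. Nothing in the algorithm guarantees such a \PassBundle exists. Different arcs on the path $(a_f,\ldots,a_k)$ belong to active paths of different free vertices and can be activated and backtracked in an arbitrary interleaving; in particular $a_j$ may become permanently inactive long before $a_{j+1},\ldots,a_k$ acquire their settled status. Your ``sharpened'' $\tau_j$ — the latest \PassBundle at which any arc of $\{a_j,\ldots,a_k\}$ is active \emph{and} within which $a_j$ or $\leftarc{a_j}$ is among them — may simply not exist: at the latest \PassBundle anything in that range is active, $a_j$ might not be among the active arcs. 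Even when such a $\tau_j$ does exist there is an off-by-one problem: the inner reasoning (``no arc of $\{a_{j+1},\ldots,a_k\}$ active from $\tau_j+1$ on, and every label-setting event for those arcs has already happened'') delivers the hypotheses of the statement for the subrange $[j+1,k]$ at $\taustar'=\tau_j+1$, hence settlement at the start of $\tau_j+1$. But \cref{lemma: upd-active}, applied at $\tau_j$, requires $S=\{a_{j+1},\ldots,a_k\}$ settled at the start of $\tau_j$ — one \PassBundle earlier, exactly when some of those arcs may still be active and unsettled. The paper avoids this bind by handling the sub-case ``neither $a_{s-1}$ nor $\leftarc{a_{s-1}}$ is active'' with \cref{lemma: tagging}, which needs no activity of $a_{s-1}$ at all; your sketch never invokes \cref{lemma: tagging} and cannot cover that sub-case with \cref{lemma: upd-active}. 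Separately, a smaller slip: you write $k'=|P_\alpha(a_j)|\le\ell(a_j)\le j$, but the structure invariant is $\ell(a_i)\le i$, i.e.\ $\ell(a_j)\le|P_\alpha(a_j)|$, the reverse inequality; when $|P_\alpha(a_j)|>j$ you must instead fall back on \cref{invariant: jump} to conclude $\ell(\leftarc{a_j})<\infty$ directly, as the paper does inside the proof of \cref{lemma: upd-active}.
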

\begin{proof}
	The goal is to give an inductive proof, starting from showing that $a_k$ gets settled before \PassBundle $\taustar$.
	In $\PassBundle$ $\tau$, let $s$ be the minimum index such that $\{ a_j, \leftarc{a_j} \mid s \leq j \leq k \}$ is settled.
	If such an index does not exist, let $s = k + 1$.
	Denote $S_\tau = \{ a_j, \leftarc{a_j} \mid s \leq j \leq k \}$.
	Our induction maintains the crucial property until all $a_j$ for $f \leq j \leq k$ are settled, it is the case that if $s > f$, then $\ell(a_{s - 1}) > s - 1$.
	
	\paragraph{Base Case:} 
	In the beginning of the phase, we have that $\ell(a_k) = \infty > k$.
	Consider a \PassBundle $\tau$, where an active path $P_{\gamma}(\gamma, b_1, \ldots, b_{k'})$ sets $\ell(a_k) = k' \leq k$, where $\gamma$ is a free vertex.
	Suppose for a contradiction that $\leftarc{a_{k + 1}} \not \in P_{\gamma}$.
	Then, by \Cref{obs: no new active}, we have that before our algorithm backtracks the active arc $a_k$, we have that $\ell(a_{k + 1}) \leq k + 1$, leading to a contradiction. (Note that here we refer to the first backtracking that happens after \PassBundle $\tau$. Moreover, since $a_k$ is not active at the beginning of $\taustar$, this backtracking and labeling $\ell(a_{k + 1}) \leq k + 1$ happens before $\taustar$.)
	Hence, $\leftarc{a_{k + 1}} \in P_{\gamma}$ and $\ell(\leftarc{a_{k + 1}}) < k$.
	Since $a_k$ does not belong to the prefix of $P_{\gamma}$ ending at $\leftarc{a_{k + 1}}$, it follows that $\leftarc{a_{k}}$ is reachable and hence,  $\{a_k\}$ is settled.

	Now it is left to show that there is an index $\jstar < k$ such that $\ell(a_\jstar) > \jstar$ and  $\{a_{\jstar + 1}, \ldots, a_k\}$ is settled during \PassBundle $\tau$.
	Towards that goal, suppose that $\jstar$ is the \emph{largest} index such that $\ell(a_\jstar) > \jstar$.
	If such an index does not exist, let $\jstar = 0$.
	If $\jstar = k - 1$, we have found our desired index.
	
	So, assume that $\jstar < k - 1$.
	Suppose that for some $s$, where $\jstar < s \leq k$, $\{a_s, \ldots, a_k\}$ is settled at $\PassBundle$ $\tau$.
	We want to show that then, either $s = \jstar + 1$ or also $\{a_{s - 1}, \ldots, a_k\}$ is settled during $\PassBundle$ $\tau$.
	Note that $s = \jstar + 1$ concludes our base case, so assume that $s > \jstar + 1$, i.e., $\jstar < s - 1$.
	We have the following cases (notice that $\ell(a_{s - 1}) \leq s - 1$ holds by our choice of $\jstar$).
	\begin{enumerate}
		\item[] {\bf Neither arc $a_{s - 1}$ nor $\leftarc{a_{s - 1}}$ is active and $\ell(a_{s - 1}) \leq s - 1$.}  Recall that the length of the active path $P_{\gamma}$ to $\leftarc{a_{k + 1}}$ is at most $k$ and $a_{s - 1} \notin P_{\gamma}$.
		To apply \Cref{lemma: tagging}:
		\begin{itemize}
			\item Observe that $\{a_{s}, \ldots, a_k\}$ is settled, and hence $s$ in this proof plays the role of $f$ in the statement of \Cref{lemma: tagging}.
			\item We now want to upper-bound $j$ from the statement of \Cref{lemma: tagging}. To that end, let $j < s$ be the largest index such that either $a_{j}$ or $\leftarc{a_{j}}$ is active and $\ell(a_{j}) \le j$, or $\ell(a_{j}) > j$. Note that in the former case we have $j < s - 1$ as neither arc $a_{s - 1}$ nor $\leftarc{a_{s - 1}}$ is active. In the later case we also have $j < s - 1$ as $\jstar < s - 1$. Therefore, it holds that $j < s - 1$.
		\end{itemize}
		Hence, by \Cref{lemma: tagging}, we have that $\{a_{s - 1}\} \cup \{a_{s}, \ldots, a_k\}$ is settled in \PassBundle $\tau$ (not necessarily its beginning). 
		\\\\
		\emph{Remark:} We apply \Cref{lemma: tagging} for the beginning of $\PassBundle$ $\tau + 1$. Moreover, we cannot apply \Cref{lemma: tagging} for the beginning of $\PassBundle$ $\tau$ as only at some point (and after its very beginning) of $\PassBundle$ $\tau$ we have that $\ell(a_k)$ becomes at most $k$.
		Nevertheless, note that ``the beginning of $\PassBundle$ $\htau$'' refers to the time in $\PassBundle$ $\htau$ before any edge on the stream has been seen. Hence, if an edge is settled at the beginning of $\PassBundle$ $\tau + 1$ then it was also settled in $\PassBundle$ $\tau$, though not necessarily at the beginning of $\PassBundle$ $\tau$, which justifies the terminology we use in this proof.
		
		\item[] {\bf Arc $\leftarc{a_{s - 1}}$ or $a_{s - 1}$  is active and $\ell(a_{s - 1}) \leq s - 1$.} 
		Our goal now it so apply \Cref{lemma: upd-active}. To that end, the active path $P_\gamma$ from this proof that has length at most $k$ to $a_k$ serves as the path $P_\alpha$ to $a_x$ in \Cref{lemma: upd-active}. 
		Furthermore, we assumed that \cref{invariant: jump} holds until \PassBundle $\tau$ and from the lemma statement we have that $\ell(a_{k + 1}) > k + 1$ (at least)  until $a_k$ is backtracked and becomes inactive.
		Therefore, by \Cref{lemma: upd-active} where the index $s$ in this proof corresponds to the index $f$ in \Cref{lemma: upd-active}, we have that $\{a_{s - 1}\} \cup \{a_s, \ldots, a_k\}$ is settled in \PassBundle $\tau$.	
	\end{enumerate}
	
	\paragraph{Inductive Step:}
	Suppose then that the inductive claim holds for some $s$, where $f \leq s \leq k$, in $\PassBundle$ $\tau$.
	In other words, $s$ is the minimum index such that $\{a_s, \ldots, a_k\}$ is settled and $\ell(a_{s - 1}) > s - 1$.
	Then, by our lemma statement, we have that there is a \PassBundle $\tau' \leq \tau^*$ such that $\tau' \ge \tau$ and an active path $P_{\gamma}(\gamma, b_1, \ldots, b_{k'})$ sets $\ell(a_{s - 1}) \leq s - 1$ where $\gamma$ is a free vertex.
	From the base case, we have that $\ell(\leftarc{a_{k + 1}}) < k$ by $\PassBundle$ $\tau'$.
	Next, we want to find an index $j < k$ such that $\ell(a_j) > j$ and $\{a_{j + 1}, \ldots, a_k\}$ is settled in \PassBundle $\tau'$.
	The rest of the proof of the inductive step is analogous to the base case. 
	Then, we can repeat the inductive argument until we obtain that $\{a_f, \ldots, a_k\}$ is settled in some $\PassBundle$ $\htau \leq \tau^*$.
\end{proof}

\begin{proof}[Proof of \cref{lemma: jump-proof}]
	The proof is an application of \Cref{lemma: settle before passive}.
	Now, we will show that the prerequisites of the lemma are satisfied.
	Right before $\PassBundle$ $\tau$, it must be the case that no arc in $(b_1, \ldots, b_h)$ is active.
	Furthermore, we have that $\ell(b_j) \leq k + j$ for all $1 \leq j \leq h$ where, in particular, $\ell(b_h) \leq k + h$.
	It is also the case that $\ell(b_{h + 1}) > k + h + 1$.
	Therefore, we satisfy the prerequisites and by \Cref{lemma: settle before passive}, we have that $\{ b_1, \ldots, b_h \}$ is settled at the beginning of $\PassBundle$ $\tau$.	
\end{proof}

\subsection{Proof of \cref{lemma: augment}}
\lemmaaugment*
\begin{proof}
The goal is to prove the following statement inductively, under the assumption that no vertex in $P$ belongs to an augmentation found in this phase so far, as otherwise $P$ would not exist. (Recall that augmentations that our algorithm finds are removed from the current graph in this phase.)

\vspace{3pt}
\begin{minipage}{0.95 \linewidth}
\begin{framed}
	Let $i \leq k$.
	For all $j \leq i$, we have $\ell(a_j) \leq j$, or $\ell(a_j) > j$ and at least one vertex in $(\alpha, a_1, \ldots, a_j) \subset \Path{aug}$ is active.
\end{framed}
\end{minipage}
\vspace{3pt}

\noindent For the proof, recall from \Cref{sec:overview} that the arcs $a_1, \ldots, a_k$ in $P$ correspond to the matched edges of the alternating path.
Furthermore, in our algorithm, only matched arcs (the arcs corresponding to matched edges) have a label.
Consider the for-loop over the input stream in \AlgExtendStructures, \Cref{algo-extend-structures}, and suppose $g = (u, v)$.
For the sake of brevity in this proof, we say that $\alpha$ \emph{considers} the matched arc $a^* = (v, x)$ when the path $P_u$ corresponds to a trivial path or an alternating path that starts at $\alpha$ and ends in a matched arc $(y, u)$.

\paragraph{Base Case:}
In the beginning of the execution, the free vertex $\alpha$ is active with an active path of $(\alpha)$.
Let $g = (u, v)$ be the non-matched arc between $\alpha$ and $a_1$.
Before $\alpha$ becomes inactive, there is a pass where the active path of $\alpha$ is $(\alpha)$ and $\alpha$ considers the arc $\astar = a_1$.
In this case, we have that $P_u = (\alpha)$, $k = 0$, and $h = 0$.
If the label $\ell(a_1) > 1 = k + h + 1$, then $\alpha$ extends its path to $(\alpha, a_1)$ and $\ell(a_1)$ will be updated to $\ell(a_1) \coloneqq k + h + 1 = 1$.
Notice that this update happens whether an overtake takes place or not.
This concludes the base case.

\paragraph{Inductive Step:}
Assume that the inductive hypothesis holds for some $i$.
We will split the analysis into two cases and for both cases, we suppose for a contradiction that there is a \PassBundle such that $\ell(a_{i + 1}) > i + 1$ and no vertex is active in $(\alpha, a_1, \ldots, a_i)$.

First, consider the case that after each arc $a_j$, for $1 \leq j \leq i$, has become inactive and $\alpha$ is active with the trivial path $(\alpha)$.
Now, we apply \Cref{lemma: settle before passive} on the whole path $(a_1, \ldots, a_{i + 1})$ obtaining that $\{a_1, \ldots, a_i\}$ is settled.
Hence, it must be the case that eventually, $\alpha$ considers to extend its active path to $a_{i + 1}$ along the settled path and sets the label $\ell(a_{i + 1}) \leq i + 1$.
Notice that no other free vertex can obtain this settled path $(a_1, \ldots, a_{i})$ without finding an augmentation to $\alpha$.

Second, consider the case that $\alpha$ becomes inactive before all arcs in $(a_1, \ldots, a_i)$ are inactive.
Also in this case, we want to show that $\{a_1, \ldots, a_i\}$ is settled before the last arc is backtracked.
Let either $a_j \in \structure_\gamma$ or $\leftarc{a_j} \in \structure_\gamma$ be the last active arc in $\{a_1, \leftarc{a_1}, \ldots, a_i, \leftarc{a_i} \}$ and consider some \PassBundle $\tau$ where $a_j$ or $\leftarc{a_j}$ is the only active in $\{ a_1, \leftarc{a_1}, \ldots, a_i, \leftarc{a_i} \}$.
We first apply \Cref{lemma: settle before passive} to obtain that $\{a_{j + 1}, \ldots, a_i\}$ is settled at the beginning of \PassBundle $\tau$.
Notice that the prerequisites of \Cref{lemma: settle before passive} are satisfied since $a_{j}$ or $\leftarc{a_{j}}$ is the last active arc and hence, no arc in $\{a_{j + 1}, \leftarc{a_{j + 1}}, \ldots, a_i, \leftarc{a_i} \}$ can be active.
Then, we want to show that also $\{a_j\} \cup \{a_{j + 1}, \ldots, a_i\}$ is settled.
Here, we have several subcases.
\begin{itemize}
	\item Suppose that $\leftarc{a_j}$ is active. Then, $\leftarc{a_j}$ is reachable and hence, $\{a_j\} \cup \{a_{j + 1}, \ldots, a_i\}$ is settled.
	\item Suppose that $a_j$ is active and let $P_\gamma(a_j)$ be the corresponding active path. If $|P_\gamma(a_j)| > j$, then by \Cref{invariant: jump}, we have that $a_j$ is reachable and hence, $\{a_j\} \cup \{a_{j + 1}, \ldots, a_i\}$ is settled.
	\item Suppose that $a_j$ is active and let $P_\gamma(a_j)$ be the corresponding active path. Suppose that $|P_\gamma(a_j)| \leq j$. Then, by \Cref{obs: no new active}, we have that $\ell(a_{i + 1}) \leq i + 1$ before all arcs $\{ a_j, \ldots, a_i \}$ are backtracked. This contradicts with our assumption that $\ell(a_{i + 1}) > i + 1$ until no vertex is active in $(\alpha, a_1, \ldots, a_i)$.
\end{itemize}
Therefore, we have that $\{a_j, \ldots, a_i\}$ is settled at the beginning of \PassBundle $\tau$.
As the last step, we want to apply \Cref{lemma: tagging} to obtain that $\{a_1, \ldots, a_{j - 1}\} \cup \{a_j, \ldots, a_i\}$ is settled in \PassBundle $\tau$.
Towards that end, notice that since $\ell(a_{i}) \leq i$, there must have been a \PassBundle $\tau' < \tau$, such that $a_i$ was active with an active path of length at most $i$.
Therefore, by \Cref{lemma: lastleft}, it must be the case that $\ell(\leftarc{a_{i + 1}}) < i$ in \PassBundle $\tau'$.
Furthermore, since $a_j$ or $\leftarc{a_j}$ is the only active arc in \PassBundle $\tau$, we have that there is no active arc $a_h$ or $\leftarc{a_h}$ for $h < j$ in \PassBundle $\tau$.
We can now apply \Cref{lemma: tagging} to get that $\{a_1, \ldots, a_{j - 1}\} \cup \{a_j, \ldots, a_i\}$ is settled.
Therefore, we have an augmenting path from $\gamma$ to $\alpha$, which will be detected in \Cref{line:ES-check-alpha-beta-merge} of \Cref{algo-extend-structures}.
This implies that the augmenting path $\alpha - \beta$ will be removed from the graph in \PassBundle $\tau$.
\end{proof}


\section{Analysis of Approximation and Pass Complexity}\label{sec:pass}

There are several values that we tie together to provide the approximation analysis. One of them is $\sizelimit$. Another one is the number of phases we have, let that value be $\phases$. In phase $i$ the algorithm makes $\tau_i$ passes; $\tau_i$ is computed implicitly in the following way. A phase is executed as long as there are at least $\delta |M|$ active free vertices. To upper-bound $\tau_i$ from this condition, we use the fact that there are at most $2 |M| / \eps$ different label updates the matched arcs can have in one phase. Then, except for at most $n / \sizelimit$ many active free vertices, each active free node is either reducing a label or backtracking from an arc whose label it has reduced previously.

Then, we show how much improvement in matching is made in a phase. Performing multiple phases gives better approximation, at least as long as the matching we have so far is not a $(1+\eps)$-approximate one.
The number of phases in terms of structure-size and the number of active nodes is analyzed in \cref{lemma:number-of-phases}.

Finally, when the algorithm finds an augmentation between $\alpha$ and $\beta$, it removes $\structure_\alpha$ and $\structure_\beta$. We want to say that $\structure_\alpha$ and $\structure_\beta$ are not too big; the larger a structure is the more potential augmenting paths its removal destroys. Unfortunately, the sizes of $\structure_\alpha$ and $\structure_\beta$ can exceed $\sizelimit$. To see that, notice that although we stop extending a structure $S$ once its size exceeds $\sizelimit$, another structure $S'$ can overtake $S$. This can potentially make the size of $S'$ almost $2 \cdot \sizelimit$. Then, another structure $S''$ can overtake $S'$, making $S''$ of size almost $3 \cdot \sizelimit$, and so on. By \cref{lemma:structuresize}, we show that this type of growth cannot proceed for ``too long''.

In this section we analyze the number of passes required for our algorithm to output a $(1+\eps)$-approximate maximum matching. This analysis at the same time upper-bounds the number $T$ of phases, as well as the maximum structure size $\Ssize$. Recall from \cref{subsec:algorithm} that we set $\maxtau\eqdef 1/\eps^6$ and $\sizelimit\eqdef 1/\eps^4$. 

For our analysis, we introduce the following function.
\begin{definition}\label{definition:delta(k)}
	We define
		$\delta(k) \eqdef \frac{1}{2 k (k + 2)}$.
\end{definition}

We first upper bound the number $T$ of phases our algorithm requires, given that there exists an upper bound $\Ssize$ on the size of structures and the number of active nodes at the end of a phase. 
\begin{lemma}[Upper Bound on Number of Phases]\label{lemma:number-of-phases}
	 Let $\Ssize$ be an upper bound on the structure size and $h(\eps) |M|$ be the maximum number of active nodes at the end of a phase. Then, we need at most $T\leq \frac{1 + 2 \Ssize}{2 \delta(2/\eps) - h(\eps) / \eps}$ phases to get a $(1 + \eps)$-approximate maximum matching.
\end{lemma}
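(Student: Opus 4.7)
The plan is to combine a Hopcroft--Karp-style density argument for short augmenting paths with \cref{lemma: augment} and a structure-size bookkeeping step. As a first step I would show that whenever $|M|$ is not $(1+\eps)$-approximate, the symmetric difference $M \oplus M^*$ contains at least $2\delta(2/\eps)\,|M|$ vertex-disjoint augmenting paths of alternating length at most $2/\eps$. This is a routine double-count: any augmenting path of alternating length strictly greater than $2/\eps$ consumes at least $2/\eps + 2$ edges of $M^*$, so the number of long augmenting paths is at most $|M^*|/(2/\eps+2)$; subtracting this from the total count $|M^*|-|M|$ and plugging in $\delta(2/\eps) = 1/(2\cdot(2/\eps)\cdot(2/\eps+2))$ gives the claim.

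Next I would translate these short paths into progress made by the algorithm. By \cref{lemma: augment}, each short augmenting path either gets resolved during the phase (some vertex of it is removed by an augmentation) or contains an active node at the phase's end. Since there are at most $h(\eps)\,|M|$ active nodes, and since each active node sits inside an active path of alternating length at most $\maxlen = 1/\eps$, a single active node can be associated with at most $O(1/\eps)$ of the vertex-disjoint short augmenting paths. Hence the number of short paths blocked by active nodes is at most $h(\eps)\,|M|/\eps$, so at least $\big(2\delta(2/\eps) - h(\eps)/\eps\big)\,|M|$ of the short paths are resolved through augmentations within the phase.

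The third step accounts for the fact that one augmentation can resolve several short paths simultaneously, because it destroys the two participating structures together with every short-path vertex they happen to contain. Each structure has at most $\Ssize$ vertices and the short augmenting paths are vertex-disjoint, so a single augmentation can be credited for at most $1 + 2\Ssize$ resolved short paths: the augmenting path along which we augment, plus at most $2\Ssize$ additional short paths that contain a vertex of the two removed structures. Consequently the number of augmentations performed per phase is at least
\[
\frac{\big(2\delta(2/\eps) - h(\eps)/\eps\big)\,|M|}{1 + 2\Ssize}.
\]

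To conclude, each augmentation increments $|M|$ by exactly one, and moving from the initial $2$-approximation $|M_0|\ge |M^*|/2$ to a $(1+\eps)$-approximation requires at most $|M^*| - |M^*|/2 \le |M_0|$ augmentations in total. Since $|M|$ only grows across phases, the per-phase lower bound applies with $|M|$ replaced by $|M_0|$, and dividing yields $T \le (1 + 2\Ssize)/\big(2\delta(2/\eps) - h(\eps)/\eps\big)$. I expect the main obstacle to be the third step: establishing the $1 + 2\Ssize$ accounting cleanly requires arguing that every vertex of a resolved-but-not-augmented short path really does belong to a removed structure, and that the $h(\eps)/\eps$ blocking estimate and the structure-based resolution estimate are not implicitly double-counting the same short path.
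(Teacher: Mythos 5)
Your proof is correct and takes essentially the same route as the paper's, which packages the same three counting steps (short augmenting-path density, obtained via the cited certificate \cref{lemma:certificate-for-maximality} of Eggert et al.\ rather than your direct Hopcroft--Karp-style derivation; the $h(\eps)|M|/\eps$ blocking bound from \cref{lemma: augment}; and the $1+2\Ssize$ structure-removal bookkeeping) into \cref{lemma:LB-progress-per-phase} before converting the per-phase progress into a phase count. Your worry about the third step is unfounded: each of the $\cYstar$ disjoint short augmenting paths is either found, has a vertex inside one of the $2X$ removed structures, or survives to the end of the phase and therefore meets an active node by \cref{lemma: augment}, a clean three-way partition with no double-counting, so the inequality $X(1+2\Ssize)+h(\eps)|M|/\eps\ge\cYstar$ goes through directly.
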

The proof of this lemma is given in \cref{sec:upper-bound-number-phases}.
Then, in \cref{sec:upper-bound-structure-size}, we upper bound the size of a structure at any point during an execution of the algorithm.
From this bound, we also almost immediately get a bound on the memory requirement of our algorithm.
\begin{lemma}[Upper Bound on Structure Size]\label{lemma:structuresize}
	Let $\tau$ be the number of $\PassBundle$s our algorithm makes. Let $S_{\alpha}$ be a structure of $\alpha$ at any point of a phase. Then, the number of vertices in $S_\alpha$ is at most $\tau (1/\eps + 1) \cdot \sizelimit$.
\end{lemma}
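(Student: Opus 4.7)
My plan is to induct on the number of PassBundles $t$ that have elapsed within the phase, maintaining as invariant that at every moment during PassBundle $t$ every structure has at most $t(1/\eps+1)\sizelimit$ vertices; applying this at $t=\tau$ yields the lemma. The base case $t=0$ is immediate since each structure starts as a singleton containing only its free node.

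For the inductive step, I would first catalog the ways in which the vertex set of a structure $\structure_\beta$ can change during PassBundle $t$. The only operations that can \emph{increase} $|\structure_\beta|$ are (i) an extension in $\AlgExtendStructures$ appending one new matched arc and its unmatched neighbor (adding at most $2$ vertices), and (ii) an overtake by $\beta$ transferring a chunk $P_{\astar}\cup\cA$ (or $\{\astar\}\cup\cA$ when $\astar$ is inactive) from another structure $\structure_\alpha$. Both require $|\structure_\beta|<\sizelimit$, since otherwise $\beta$ is on hold and ignored by $\AlgExtendStructures$. The routine $\AlgAugmentStructures$ only adds unmatched edges between existing vertices, and being overtaken from or disappearing via a merge only decreases $|\structure_\beta|$. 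Since $\AlgExtendStructures$ processes each structure at most once per PassBundle and an overtake causes $\beta$ to skip the next $|P_{\astar}|$ PassBundles, $\structure_\beta$ experiences at most one growth event per PassBundle.

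To bound the per-PassBundle growth of the running maximum, I would order the growth events chronologically as $e_1,\ldots,e_k$ and let $M_j$ denote the maximum structure size immediately after $e_j$. Immediately before $e_j$, the acquiring structure has size strictly less than $\sizelimit$; the chunk it receives has size at most that of the source, which is at most $M_{j-1}$. Hence $M_j\le\sizelimit+M_{j-1}$. By the inductive hypothesis, $M_0\le(t-1)(1/\eps+1)\sizelimit$, so to close the induction it suffices to bound the number of size-increasing events in a single PassBundle by $\maxlen+1=\lceil 1/\eps\rceil+1$.

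The main obstacle is exactly this count. The intuition is that every size-increasing overtake must consume a nontrivial suffix (at least one matched arc) of its source's active path, and across any chain of size-increasing overtakes the total consumed active-path material cannot exceed $\maxlen$, since active paths are length-capped at $\maxlen$. Formally, I would argue that the subsequence of size-increasing events forms a chain $\beta_1,\beta_2,\ldots$ in which $\beta_{i+1}$ overtakes from the structure currently holding the maximum (which, by the $\sizelimit+M_{j-1}$ bound, must itself have been the recipient of the previous size-increasing event), each step transfers a suffix of length at least $1$, and the cumulative suffix lengths telescope into a single active path of length at most $\maxlen$. This yields $k\le\maxlen+1$ and $M_k\le(t-1)(1/\eps+1)\sizelimit+(\maxlen+1)\sizelimit=t(1/\eps+1)\sizelimit$, closing the induction and proving the lemma.
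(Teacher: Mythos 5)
Your high-level plan matches the paper's: induct over $\PassBundle$s with the invariant that each structure has at most $t(1/\eps+1)\sizelimit$ vertices after $t$ $\PassBundle$s, and prove a per-$\PassBundle$ growth bound of $(\maxlen+1)\sizelimit$ via a chain-of-overtakes argument. Both you and the paper use the fact that non-on-hold structures have fewer than $\sizelimit$ vertices and that a bounded-length chain of overtakes can only accumulate a bounded number of such pieces.

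However, the heart of the proof is bounding the length of the chain, and there your argument has a genuine gap. You track the \emph{running maximum} $M_j$ over all structures and assert that the ``max-increasing events form a chain $\beta_1,\beta_2,\ldots$ in which $\beta_{i+1}$ overtakes from the structure currently holding the maximum.'' This is not justified by the $M_j\le\sizelimit+M_{j-1}$ bound: that inequality only says the source of overtake $e_j$ had size at least $M_j-\sizelimit$, which is consistent with the source being some other large structure rather than the previous maximum holder. Concretely, one can have many pairwise-independent overtakes $A_1\gets B_1$, $A_2\gets B_2$, $\ldots$ within one $\PassBundle$, each marginally raising the running maximum, so the number of max-increasing events is not a priori bounded by $\maxlen+1$. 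Your telescoping intuition (``cumulative suffix lengths telescope into a single active path of length at most $\maxlen$'') is valid only for a genuine chain where each overtake appends to the same growing active path — but that is exactly what is not established. There is a second, smaller gap: an overtake can target a \emph{non-active} arc $\astar$, in which case no suffix of the source's active path is transferred at all, so ``each step transfers a suffix of length at least $1$'' does not hold without further argument.

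The paper sidesteps both issues by fixing a single structure $\structure_{\alpha_0}$ (which suffices, since the lemma is about each structure, not the maximum) and tracing the chain $\alpha_k,\ldots,\alpha_0$ of overtakes that actually contributed new vertices to $\alpha_0$ during the $\PassBundle$. After a WLOG step ensuring each $\alpha_{i-1}$ overtakes part of $\Vbase(\alpha_i)$, the paper shows that whenever $\alpha_{i-1}$ overtakes part of $\Vnew(\alpha_i)$ it must overtake the active head $\astar_i$. The chain length is then bounded by $\maxlen$ not via active-path lengths directly, but via \Cref{obs:overtaking-reduces-head-label}: each overtake in the chain strictly decreases the label $\ell(\afirst)$ of a fixed arc, and labels are nonnegative integers bounded by $\maxlen$. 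This label-monotonicity argument is the key ingredient your proposal is missing; with it, the chain is well-defined and finite, after which the accounting $|\Vbase(\alpha_j)|\le\sizelimit$ for $j<k$ gives the $(\maxlen+1)\sizelimit$ per-$\PassBundle$ increment.
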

We now justify why \cref{lemma:structuresize} is needed, and why an upper-bound on structure size is not $O(\sizelimit)$ trivially.
In the overview of our approach, we introduced the notion of ``putting a structure on hold'' (see \cref{section:on-hold}), which is applied when the number of vertices in a structure is above $\sizelimit$. This is done to obtain a $(1+\eps)$-approximation in the desired number of passes, as discussed next. Assume that we do not impose any limit on the size of structures, and that a structure $S_\alpha$ has size $\exp(1/\eps)$. When our algorithm finds an augmentation containing $\alpha$, it will remove the entire structure $S_\alpha$ from further consideration in the current phase. On the other hand, removal of $S_\alpha$ can remove many potential augmentations not containing $\alpha$. If this number of removed augmentations is $\exp(1/\eps)$, it could mean that the algorithm made little progress in the current phase, i.e., the algorithm potentially found only a $1 / \exp(1/\eps)$-fraction of a maximum matching. Having structures ``being on hold'' aims to avoid this situation. 

Nevertheless, observe that the ``on hold'' operation does not imply that each structure will contain $O(\sizelimit)$ vertices. To see that, consider the following sequence of operations, which can even happen within \emph{the same} phase. $S_{\alpha_1}$ grows to $\sizelimit$ vertices and is set on hold. The next edge arrives and $S_{\alpha_2}$, whose current size is say $\sizelimit - 1$, overtakes most of the structure of $S_{\alpha_1}$ (while $S_{\alpha_1}$ is still on hold). At that point, the new size of $S_{\alpha_2}$ can be almost $\size(S_{\alpha_1}) + \sizelimit - 1 = 2\cdot \sizelimit - 1$. Then, if another edge arrives and $S_{\alpha_3}$ overtakes the majority of $S_{\alpha_2}$, the size of $S_{\alpha_3}$ can be as large as $3\cdot \sizelimit$, and so on. We will now show that, despite this scenario, the size of any structure is upper-bounded by $O(\poly \eps)$.
\begin{observation}\label{obs: memory}
	Our algorithm requires $n \cdot \poly(1/\eps)$ words of memory, where a word corresponds to the maximum amount of memory required to store an edge.
\end{observation}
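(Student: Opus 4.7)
The plan is to account for the memory used by each data structure the algorithm maintains during a single phase, since $\Estored$ is reinitialized at every phase boundary and the other structures persist only by cumulative effect. The objects that consume space are (i) the current matching $M$ together with the labels $\ell(\cdot)$ on its arcs, (ii) the collection of structures $\{\structure_\alpha\}$ at the current instant, and (iii) the accumulated edge set $\Estored$.

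First, I would dispose of the easy items: $M$ has at most $n/2$ edges and each matched arc carries a single integer label, contributing $O(n)$ words. For the instantaneous cost of the structures I would invoke \cref{lemma:structuresize} to bound each $|\structure_\alpha|\le \tau(1/\eps+1)\cdot\sizelimit=\poly(1/\eps)$ in vertices, and then use Property~\ref{prop: disjoint} of \cref{definition:structure} (structures are vertex-disjoint) to conclude that $\sum_\alpha |\structure_\alpha|\le n$. The number of edges held inside a structure of size $s$ is at most $\binom{s}{2}$, so the total edge count stored inside all current structures is bounded by
\[
\sum_\alpha |\structure_\alpha|^2 \;\le\; \bigl(\max_\alpha |\structure_\alpha|\bigr)\cdot\sum_\alpha |\structure_\alpha| \;\le\; n\cdot\poly(1/\eps).
\]

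The main (and only mildly delicate) step is bounding $|\Estored|$, since in principle an edge added early in the phase can linger while many overtake events occur. I would argue pass-by-pass rather than tracking individual edges across overtakes. In one execution of \AlgAugmentStructures, an unmatched edge $\{u,v\}$ enters $\Estored$ only if $u,v$ currently lie in a common $\structure_\alpha$; therefore the number of \emph{new insertions} in that pass is at most $\sum_\alpha\binom{|\structure_\alpha|}{2}$, which by the computation above is $n\cdot\poly(1/\eps)$. Summing over the $\tau=1/\eps^5$ \PassBundle{s} inside the phase yields $|\Estored|\le \tau\cdot n\cdot\poly(1/\eps)=n\cdot\poly(1/\eps)$; adding the at most $n/2$ matched edges ever placed in structures preserves the bound.

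Combining the three contributions gives the claimed $n\cdot\poly(1/\eps)$ total word complexity. The main obstacle I expect is purely bookkeeping: one must be careful that the per-pass insertion count is bounded by the \emph{current} structure sizes (which are controlled by \cref{lemma:structuresize} at every instant of the phase, not merely at its endpoints), so that the aggregate over $\tau$ passes remains $n\cdot\poly(1/\eps)$ even though structures shift and grow through overtaking between consecutive \AlgAugmentStructures passes.
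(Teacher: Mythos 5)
Your proposal is correct and follows essentially the same route as the paper's proof: the matching plus labels cost $O(n)$ words, each structure has $\poly(1/\eps)$ vertices by \cref{lemma:structuresize}, and the stored unmatched edges are bounded via the structure sizes. Your pass-by-pass accounting for $\Estored$ is slightly more explicit than the paper's (which bounds the unmatched edges held at an instant and leaves the accumulation over the $\tau=\poly(1/\eps)$ \PassBundle{s} implicit), but this only adjusts the bound by a $\poly(1/\eps)$ factor and does not change the approach.
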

\begin{proof}
	Notice that in any given time, there are $O(n)$ active free vertices.
	By combining \Cref{lemma:structuresize} and \Cref{lemma: runtime}, the structure $\structure_\alpha$ maintained by any free vertex $\alpha$ contains $\poly 1/\eps$ vertices. In addition, our algorithm maintains all matched edges, in total $O(n)$ of them, and unmatched edges with their endpoints belonging to the \emph{same} structure (see the description of \AlgAugmentStructures). Note that the number of such unmatched edges in $\structure_\alpha$ is upper-bounded by the square of the number of vertices in $\structure_\alpha$. Since there are $O(n)$ free nodes, it implies that our algorithm maintains $O(n \cdot \poly(1/\eps))$ unmatched edges.
	Therefore, the total memory required is $n \cdot \poly(1/\eps)$.
\end{proof}

Finally, in \cref{sec:upper-bound-active-free-nodes}, we upper bound the number of active nodes at the end of each phase.
\begin{lemma}[Upper Bound on Number of Active Nodes]\label{lemma:activenodes}
Let $\tau$ be the number of $\PassBundle$s executed in a phase and $M$ be the matching at the begining of the phase. Then, at the end of that phase there are at most $h(\eps) |M|$ active free nodes, where $h(\eps)$ is upper-bounded as
	\[
		h(\eps) \le \frac{4 + 2/\eps}{\eps \cdot \tau} + \frac{2}{\sizelimit}.
	\]
\end{lemma}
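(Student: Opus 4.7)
The plan is to charge each of the $\tau$ PassBundles spent by a free node that is active at the end of the phase to one of three buckets: being on hold, performing an extension (which reduces some arc's label), or backtracking. Summing these three buckets across all active-at-end free nodes gives $\tau \cdot h(\eps)|M|$ on the left, and I will bound each bucket in turn by combining a vertex-counting argument with the total label-reduction budget of the phase. Note that once a free node becomes inactive it stays inactive, so each active-at-end node really does contribute $\tau$ full PassBundles to this sum.

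For the on-hold bucket, I would use that structures are vertex-disjoint by \cref{definition:structure} and that an on-hold structure has at least $\sizelimit$ vertices, all but the free-node starter of which are matched (since each arc is matched and each structure has exactly one free endpoint, its starter). Hence at any single moment there are at most $2|M|/(\sizelimit-1) \le 2|M|/\sizelimit$ on-hold structures, so summing over the $\tau$ PassBundles the total on-hold contribution across active-at-end free nodes is at most $2\tau|M|/\sizelimit$.

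For the work buckets, I would use that every matched arc's label $\ell(a) \in \{1,\dots,\maxlen\} \cup \{\infty\}$ only decreases throughout the phase, starting from $\infty$ and being at most $\maxlen = \lceil 1/\eps \rceil$ after its first update. Therefore the label of each of the $2|M|$ arcs is updated at most $\maxlen \le 1/\eps + 1$ times, bounding the total number of extensions by roughly $2|M|/\eps$; the number of backtracks performed by any active-at-end free node is at most its number of extensions plus $\maxlen$ (since its active path length is nonnegative, the net of extensions minus backtracks is in $[0,\maxlen]$), yielding a comparable bound summed over all nodes. Together the total work in the phase is at most $(2+1/\eps)|M|/\eps$. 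Combining the three buckets gives $\tau \cdot h(\eps)|M| \le 2\tau|M|/\sizelimit + (2+1/\eps)|M|/\eps$, and dividing by $\tau$ produces the claimed bound on $h(\eps)$.

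The main subtlety will be correctly amortizing \emph{overtakes}. A single overtake reduces the labels of an entire suffix $P_{\astar}$ of length $m$ in one PassBundle and then forces the overtaking free node to skip $m$ subsequent PassBundles during which it performs neither an extension nor a backtrack nor is on hold. I would handle these skipped PassBundles by charging each of them against one of the $m$ label reductions triggered by the overtake, so that the aggregate of overtake-plus-skip PassBundles still fits inside the $2|M|/\eps$ label-reduction budget. Once this amortization is in place, the ``at most one label reduction per PassBundle'' invariant holds in summed form and the global counting argument above goes through unchanged.
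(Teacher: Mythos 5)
Your framework is the paper's: decompose each active-at-end node's $\tau$ PassBundles into on-hold, label-reducing, and backtracking actions, bound on-hold contributions via disjointness and $\sizelimit$, and bound the remaining work by the phase-wide label-reduction budget. The on-hold bucket and the extension count are handled essentially as the paper does.

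The gap is in the backtrack bucket, specifically the claim ``the net of extensions minus backtracks is in $[0,\maxlen]$,'' which yields your bound of (number of backtracks) $\le$ (number of extensions) $+\maxlen$. This accounting treats each extension PassBundle as adding exactly one arc, but \AlgExtendStructures is allowed to \emph{jump}: a single extension PassBundle appends the path $(b_1,\ldots,b_h,\astar)$, growing the active path by $h+1$ arcs while performing only one label reduction (at $\astar$). All $h+1$ arcs must later be removed one per backtrack PassBundle, so backtracks can exceed extensions by roughly a factor $\maxlen$, not an additive $\maxlen$. Your ``main subtlety'' paragraph addresses the amortization for \emph{overtakes} (where the node subsequently skips $|P_{\astar}|$ PassBundles, giving a clean one-skip-per-label-reduction pairing), but jumps are a different mechanism and no PassBundles are skipped after a jump, so that amortization does not apply to them. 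The paper resolves this by explicitly charging each backtracked arc to the label reduction it enabled: an arc is backtracked only if it either had its own label reduced or was jumped over en route to some $\astar$, and any single label reduction is responsible for at most $\maxlen-1$ jumped-over arcs. This gives the $\maxlen^2|M|$ term in the backtrack budget. Without it, your stated total of $(2+1/\eps)|M|/\eps$ does not follow from your premises (and, taken literally, your per-node additive $\maxlen$ slack would inject an extra $\maxlen\cdot h(\eps)|M|$ term into the right-hand side, making the inequality self-referential rather than producing the clean closed form).

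A secondary quibble: if instead you intend ``extensions'' to count \emph{arcs added} rather than extension PassBundles, then indeed (arcs added) $-$ (backtracks) $\ge 0$, but then the label-reduction budget no longer bounds that quantity, because one label reduction can add up to $\maxlen$ arcs via a jump. Either reading needs the paper's jump charge to close the argument.
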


Combining these three bounds, we easily get the following bound on the number of passes. 

\begin{lemma}\label{lemma: runtime}
The number of passes of our algorithm is $\poly(1/\eps)$.
\end{lemma}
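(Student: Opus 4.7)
The plan is to chain the three auxiliary lemmas after substituting the concrete parameter values $\tau = 1/\eps^5$ (or the slightly larger $1/\eps^6$ used in \cref{alg:phase}) and $\sizelimit = 1/\eps^4$ prescribed in \cref{subsec:algorithm}. Each phase consists of $\tau$ many $\PassBundle$s and each $\PassBundle$ performs exactly three passes (one each for \AlgExtendStructures, \AlgEdgeMerge, and \AlgAugmentStructures), so the total pass complexity is $1 + 3 \tau T$, where the extra pass produces the initial greedy matching and $T$ is the number of phases bounded by \cref{lemma:number-of-phases}. Thus it suffices to upper bound $T$ by a polynomial in $1/\eps$.

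First, \cref{lemma:structuresize} immediately gives $\Ssize \leq \tau \cdot (1/\eps + 1) \cdot \sizelimit = \poly(1/\eps)$, so the numerator $1 + 2\Ssize$ in the bound of \cref{lemma:number-of-phases} is $\poly(1/\eps)$. Second, \cref{lemma:activenodes} yields $h(\eps) \leq (2 + 1/\eps)/(\eps \tau) + 2/\sizelimit$; with the chosen parameters both summands are polynomially small in $\eps$, and one verifies by direct substitution that the resulting $h(\eps)/\eps$ is strictly smaller than $2\delta(2/\eps)$ by an additive $\Omega(\eps^2)$. Since $\delta(2/\eps) = 1/(2 \cdot (2/\eps) \cdot (2/\eps + 2)) = \Theta(\eps^2)$ by definition, this certifies that $2\delta(2/\eps) - h(\eps)/\eps = \Omega(\eps^2)$, so the denominator of the bound in \cref{lemma:number-of-phases} is both positive and polynomially bounded away from zero.

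Combining these two estimates inside \cref{lemma:number-of-phases} gives
\[
T \leq \frac{1 + 2\Ssize}{2\delta(2/\eps) - h(\eps)/\eps} = \frac{\poly(1/\eps)}{\Omega(\eps^2)} = \poly(1/\eps),
\]
which in turn yields $1 + 3\tau T = \poly(1/\eps)$ passes in total, as required.

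The only mildly delicate point is verifying that $2\delta(2/\eps) - h(\eps)/\eps$ is indeed strictly positive of order $\Omega(\eps^2)$; this is exactly where the exponents of $1/\eps$ in $\tau$ and $\sizelimit$ matter. Concretely, one needs the $2/\sizelimit$ and $(2+1/\eps)/(\eps \tau)$ contributions to $h(\eps)$ to be of order $o(\eps^3)$ after division by $\eps$, so that they are dominated by the $\Theta(\eps^2)$-term coming from $\delta(2/\eps)$; the displayed parameter choices leave comfortable slack for this. Once this numerical balance is confirmed, the remainder is routine substitution into \cref{lemma:number-of-phases}, and the final pass complexity follows.
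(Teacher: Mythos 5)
Your proposal is correct and follows essentially the same route as the paper: substitute $\tau$ and $\sizelimit$ into \cref{lemma:structuresize} and \cref{lemma:activenodes}, feed the resulting bounds into \cref{lemma:number-of-phases}, and verify that the denominator $2\delta(2/\eps) - h(\eps)/\eps$ is $\Omega(\eps^2)$. One small slip: you write that the two contributions to $h(\eps)$ must be of order $o(\eps^3)$ after division by $\eps$ to be dominated by the $\Theta(\eps^2)$ term, but the actual requirement is only that they be $o(\eps^2)$, and with $\tau = 1/\eps^6$, $\sizelimit = 1/\eps^4$ they come out to $\Theta(\eps^3)$ (not $o(\eps^3)$) -- this still satisfies the weaker $o(\eps^2)$ condition, so your conclusion is unaffected.
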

\begin{proof}
The total number of passes is at most $3T\maxtau$, since each of at most $T$ phases has at most $\maxtau$ $\PassBundle$s, each of which needs 3 passes, one for each of the three operations.

Combining this with \cref{lemma:number-of-phases}, \cref{lemma:structuresize}, and \cref{lemma:activenodes}, which give us upper bounds on $T$, the structure size $\Ssize$, and $h(\eps)$, respectively, we can upper bound the pass complexity by
\begin{equation*}
3T\maxtau\leq 
\frac{3 \maxtau+ 6 \maxtau^2(1/\eps + 1) \cdot \sizelimit}{2 \delta(2/\eps) - \frac{\frac{4 + 2/\eps}{\eps \cdot \maxtau} + \frac{2}{\sizelimit}}{ \eps}}=
\Theta\left(\frac{\maxtau+\maxtau^2\cdot\sizelimit/\eps}{\eps^2-1/(\eps^3\maxtau)-1/(\eps \cdot \sizelimit)}\right).
\end{equation*}
Recalling from \cref{subsec:algorithm} that $\maxtau\eqdef 1/\eps^6$ and  $\sizelimit\eqdef 1/\eps^4$, we have $1/(\eps^3\maxtau) = \eps^3$ and $1/(\eps \cdot \sizelimit) = \eps^3$. Thus, the denominator is lower bounded by $\eps^2-2\eps^3$, which is in $\Omega(\eps^2)$ for $\eps <1/2$. Taken together, we can bound the pass complexity by $O(1/\eps^{19})$, which concludes the proof. 
\end{proof}

\main*
\begin{proof}
	By \Cref{lemma:number-of-phases}, it follows that \AlgPhase yields an $(1 + \eps)$-approximation to Maximum Matching.
	By \Cref{lemma: runtime}, we have that the pass complexity is $\poly(1/\eps)$.
	The memory requirement is satisfied by \Cref{obs: memory}, proving the theorem.
\end{proof}

\subsection{Upper bound on the Number $T$ of Phases: Proof of \cref{lemma:number-of-phases}}
\label{sec:upper-bound-number-phases}

The rough idea of the proof is as follows. First, we observe that having a small number of short augmenting paths is a certificate for a good approximation, as formalized in \cref{lemma:certificate-for-maximality}. We use this observation to show in \cref{lemma:LB-progress-per-phase} that whenever we do not have a good approximation yet, we must find many augmenting paths, that is, we can lower bound the progress we are making in every phase. This then can be used directly to upper bound the number of phases, leading to the proof of \cref{lemma:number-of-phases}.  

\begin{definition}[Inclusion-Maximal Matching]
	Let $M \subseteq E$ be a matching of graph $G = (V, E)$.
	Then, we say that $M$ is inclusion-maximal, or just maximal, if there is no matching $M^*$, such that $M \subsetneq M^*$.
	In other words, no edge $e \in E$ can be added to $M$ such that $M + \{ e \}$ is a matching.
\end{definition}

We first introduce the notion of \emph{$k$-alternating-path} to get a hold on the size of augmenting paths. 
\begin{definition}[$k$-alternating-path]
	We say that $P$ is a \emph{$k$-alternating-path} if $P$ is an alternating path that starts with an unmatched edge and has length \emph{at most} $2k + 1$.
\end{definition}

The following lemma can be found in several papers in different forms. The one we state below is from \cite[Lemma~2]{Eggert2012}. This lemma states that a small number of short augmenting paths is a certificate for a good approximation. 
\begin{lemma}\label{lemma:certificate-for-maximality}
	Let $M$ be an inclusion-maximal matching. Let $\cY$ be an inclusion-maximal set of disjoint $M$-augmenting $k$-alternating-paths such that $|\cY| \le 2 \delta(k) |M|$ (see \cref{definition:delta(k)} for a definition of $\delta(k)$). Then, $M$ is a $(1 + 2/k)$-approximation of a maximum matching.
\end{lemma}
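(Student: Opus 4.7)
The plan is a standard symmetric-difference decomposition argument. I fix a maximum matching $M^*$ and decompose $M \triangle M^*$ into vertex-disjoint alternating paths and cycles; exactly $|M^*| - |M|$ of these components are $M$-augmenting paths, which I split into short ones (length at most $2k+1$, i.e., $k$-alternating-paths) and long ones (length at least $2k+3$). The goal is to show the sum of the short and long counts is at most $(2/k)|M|$.

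For the long augmenting paths, each contains at least $k+1$ edges of $M$, and since the components of $M \triangle M^*$ are vertex-disjoint (hence $M$-edge-disjoint), the number of long augmenting paths is at most $|M|/(k+1)$.

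For the short ones, the maximality of $\cY$ is the key lever: any short $M$-augmenting path vertex-disjoint from every path in $\cY$ could be added to $\cY$, contradicting maximality. Hence every short augmenting path $P$ in $M \triangle M^*$ shares a vertex with some $Q \in \cY$ (or equals such a $Q$). I would charge $P$ to such a shared vertex $v$ according to two cases: if $v$ is matched in $M$, charge $P$ to the unique $M$-edge at $v$, which lies in both $P$ and $Q$; if $v$ is free in $M$, then $v$ must be a free endpoint of both $P$ and $Q$, and I charge $P$ to that endpoint. The vertex-disjointness of the components of $M \triangle M^*$ makes this charging injective. Since each path in $\cY$ contributes at most $k$ edges of $M$ and exactly $2$ free endpoints, the total number of charging slots is $k\,|\cY| + 2\,|\cY| = (k+2)\,|\cY|$, giving at most $(k+2)\,|\cY|$ short augmenting paths in $M \triangle M^*$.

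Plugging the hypothesis $|\cY| \le 2\delta(k)|M| = |M|/(k(k+2))$ into the two bounds yields $|M^*| - |M| \le |M|/(k+1) + |M|/k \le 2|M|/k$, which is the desired $(1+2/k)$-approximation. The main obstacle is the short-path bound: a naive vertex count gives $(2k+2)|\cY|$ (each $Q \in \cY$ has at most $2k+2$ vertices), which is too weak by a factor of $\Theta(k)$ and makes the final arithmetic fail. The careful distinction between internal intersections (charged to $M$-edges of $\cY$, of which there are at most $k|\cY|$) and endpoint intersections (charged to free endpoints of $\cY$, of which there are exactly $2|\cY|$) is what makes the constant $\delta(k) = 1/(2k(k+2))$ tight; up to this technical subtlety, the argument is the one given in Lemma 2 of~\cite{Eggert2012}, which may also simply be cited.
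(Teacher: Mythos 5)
Your proof is correct. Note that the paper itself does not prove this lemma---it simply points to Lemma~2 of~\cite{Eggert2012}---so there is no in-paper argument to compare against; your write-up is a correct reconstruction of that cited symmetric-difference proof, including the decisive refinement of charging short augmenting paths of $M \triangle M^*$ to the $M$-edges and free endpoints of paths in $\cY$ (at most $(k+2)|\cY|$ slots) rather than to all $O(k)|\cY|$ vertices, which is exactly what makes $\delta(k) = 1/(2k(k+2))$ yield the bound $|M|/(k+1) + |M|/k \le 2|M|/k$.
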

\begin{proof}
	We refer a reader to \cite{Eggert2012} for a proof of this claim.
\end{proof}

Next, we use \cref{lemma:certificate-for-maximality} to show that in every phase we must make significant progress, meaning that the matching size increases by a factor. 
\begin{lemma}\label{lemma:LB-progress-per-phase}
	Let $\Ssize$ be an upper bound on the structure size and $h(\eps) |M|$ be the maximum number of active nodes at the end of a phase. Fix a phase. Let $k \ge 3$ be an integer parameter. Let $M$ be the matching at the beginning of that phase, and assume that $M$ is not a $(1 + 2/k)$ approximation of a maximum matching. Then, by the end of the same phase the matching size will increase by factor
	\[
		1 + \frac{2 \delta(k) - h(\eps) / \eps}{1 + 2 \Ssize}.
	\]
	(See \cref{definition:delta(k)} for a definition of $\delta(k)$.)
\end{lemma}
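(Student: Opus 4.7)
The proof will be a counting argument that combines the contrapositive of \cref{lemma:certificate-for-maximality} with \cref{lemma: augment}. First, I would observe that $M$ remains inclusion-maximal throughout the execution of the algorithm: the initial greedy matching is maximal, and augmenting along a path only shrinks the free-vertex set (the endpoints $\alpha,\beta$ become matched and no vertex becomes free), so no edge with two free endpoints can appear after an augmentation. Because $M$ is not a $(1+2/k)$-approximation, the contrapositive of \cref{lemma:certificate-for-maximality} yields an inclusion-maximal family $\cYstar$ of vertex-disjoint $M$-augmenting $k$-alternating-paths with $|\cYstar| > 2\delta(k)|M|$. I assume $k\le\maxlen$ (the regime where \cref{lemma: augment} applies), so each path in $\cYstar$ has alternating length at most $\maxlen$.

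Next, I would classify each $P\in\cYstar$ at the end of the phase as either \emph{destroyed}, meaning some vertex of $P$ has been removed in this phase as part of an algorithm-found augmentation, or \emph{surviving}, meaning all vertices of $P$ are still present; in the surviving case, $P$ remains an undiscovered $M$-augmenting path of length at most $\maxlen$, so by \cref{lemma: augment} it must contain an active node. Let $\mathcal{F}$ be the set of augmentations the algorithm performs in this phase, so the matching grows from $|M|$ to $|M|+|\mathcal{F}|$. Each augmentation is found between two free nodes $\alpha,\beta$ and removes $\structure_\alpha\cup\structure_\beta$ of at most $2\Ssize$ vertices. Since $\cYstar$-paths are vertex-disjoint, a charging that assigns at most one $\cYstar$-path per augmentation to the single new matched edge it contributes and at most $2\Ssize$ $\cYstar$-paths to the disjoint batch of removed structure vertices yields at most $(1+2\Ssize)|\mathcal{F}|$ destroyed paths. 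For the surviving paths, each active free node is the origin of an active path with at most $\maxlen$ matched arcs, hence at most $2\maxlen+1=O(1/\eps)$ vertices; multiplying by the $h(\eps)|M|$ bound on active free nodes (from \cref{lemma:activenodes}) gives at most $h(\eps)|M|/\eps$ active nodes, and since $\cYstar$-paths are disjoint each active node lies in at most one of them, bounding the number of surviving paths by $h(\eps)|M|/\eps$.

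Combining the two bounds,
\[
2\delta(k)|M|\;<\;|\cYstar|\;\le\;(1+2\Ssize)|\mathcal{F}|\,+\,\frac{h(\eps)|M|}{\eps},
\]
which rearranges to $|\mathcal{F}|/|M| > (2\delta(k)-h(\eps)/\eps)/(1+2\Ssize)$. Since the matching size after the phase is $|M|+|\mathcal{F}|$, the multiplicative increase is at least $1+\frac{2\delta(k)-h(\eps)/\eps}{1+2\Ssize}$, as required. I expect the most delicate point to be the charging behind the $1+2\Ssize$ denominator: the naive accounting only gives $2\Ssize|\mathcal{F}|$ destroyed paths, and the extra unit must be squeezed out of the augmentation itself (for instance, by dedicating one $\cYstar$-path to the new matched edge or to one of the two free endpoints of the augmentation that is disjoint from future structures). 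A secondary subtlety is the passage between ``active nodes'' and the ``active free nodes'' bound of \cref{lemma:activenodes}, which costs the factor of $2\maxlen+1$ and is precisely the source of the $1/\eps$ in the numerator.
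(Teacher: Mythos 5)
Your proof is correct and follows essentially the same counting argument as the paper's: invoke \cref{lemma:certificate-for-maximality} contrapositively to obtain $|\cYstar| > 2\delta(k)|M|$, then partition $\cYstar$ into paths destroyed by found augmentations (charged at $1+2\Ssize$ per augmentation) and surviving paths (charged to active paths via \cref{lemma: augment}, giving $h(\eps)|M|/\eps$), and rearrange. Your explicit flagging of the $k\le\maxlen$ assumption and the inclusion-maximality of $M$ simply makes explicit what the paper leaves implicit.
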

\begin{proof}
	Let $\cYstar$ be the \emph{maximum} number of disjoint $M$-augmenting $k$-alternating-paths. From \cref{lemma:certificate-for-maximality} we have
	\begin{equation}\label{eq:cYstar-lower-bound}
		\cYstar > 2 \delta(k) |M|.
	\end{equation}
	
	Let $h(\eps)$ be the number of active nodes at the end of the phase. By \cref{lemma: augment} we have that each augmenting path remaining at the end of the phase intersects an active path.
	Since there are at most $h(\eps) |M|$ active paths and each active path has length at most $\maxlen$, the maximum number of disjoint augmentations appearing at the end of the phase is upper-bounded by $h(\eps) |M| / \eps$.
	
	Now, let $X$ be the number of augmentations found during the phase. Each augmentation found between free nodes $\alpha$ and $\beta$ results in removing the $\alpha$- and $\beta$-structure. Hence, each found augmentation results in potentially ``removing'' $2 \Ssize$ other augmentations. This implies that in total during the phase there are $X$ augmentations found, at most $h(\eps) |M| / \eps$ not found, and at most $2 X \Ssize$ augmentations removed (but those augmentation will re-appear in the next phase). Moreover, it holds
	\[
		X + h(\eps) |M| / \eps + 2 X \Ssize \ge \cYstar \stackrel{\cref{eq:cYstar-lower-bound}}{>} 2 \delta(k) |M|.
	\]
	This implies
	\[
		X > \frac{(2 \delta(k) - h(\eps) / \eps) \cdot |M|}{1 + 2 \Ssize}.
	\]
	Since each augmentation increases the matching size by exactly $1$, the claim follows.
\end{proof}

	From \cref{lemma:LB-progress-per-phase} we have that, as long as the current matching is not $(1+2/k)$-approximate, the matching size increases by a factor of at least $1 + \frac{2 \delta(k) - h(\eps) / \eps}{1 + 2 \Ssize}$. Let $k = 2/\eps$. Assuming that $2 \delta(2/\eps) - h(\eps) / \eps > 0$, within $\frac{1 + 2 \Ssize}{2 \delta(2/\eps) - h(\eps) / \eps}$ phases the current matching would either become a $(1 + \eps)$-approximate maximum one, or its size would double. Since we started with a $2$-approximate maximum matching, that number of phases suffices to construct a $(1+\eps)$-approximate maximum matching.

\subsection{Upper bound $\Ssize$ on Structure Size: Proof of \cref{lemma:structuresize}}
\label{sec:upper-bound-structure-size}

	We start by making the following observation. When a free node $\alpha$ overtakes a set of new vertices $X \notin S_\alpha$ via an edge $\astar$, leading to the structure $S_\alpha'$, the following two things happen: the label of $\astar$ gets reduced by at least $1$; and each alternating path within $S_\alpha'$ from $\alpha$ to each matched arc $a \in X$ goes through $\astar$ (at least until $\AlgAugmentStructures$ is executed in the same $\PassBundle$, \cref{line:AlgAugmentStructures} of \cref{alg:phase}).
	
	In this proof we will also need the following simple structural property.
	\begin{observation}\label{obs:overtaking-reduces-head-label}
		Let $P = (a_1, \ldots, a_k)$ be an active path and $a_i \in P$ a matched arc such that $\ell(a_i) = i$. Then, if an arc $a_j \in P$ for $j < i$ gets overtaken, this overtake leads to reducing a label of $a_i$. Moreover, let $Q = (b_1, \ldots, b_h, a_j, \ldots, a_i, \ldots, a_k)$ be the overtaken active path. Then, the new label of $a_i$ equals $h + i - j + 1$.
	\end{observation}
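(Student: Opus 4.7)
The plan is to apply the ``active arc'' case of the overtaking procedure described in \cref{subsec:algorithm}, read off the new position of $a_i$ in the resulting active path, and then use the overtake precondition together with the structure invariant of \cref{definition:structure} to see that this new position is strictly smaller than the current label $\ell(a_i) = i$.

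First I would note that $a_j$ is an active arc, since it lies on the active path $P$. Hence when some free node $\beta$ with active path $P_\beta = (\beta, b_1, \ldots, b_h)$ overtakes $a_j$, the ``active arc'' case of the overtaking rule applies: the suffix $(a_j, a_{j+1}, \ldots, a_k)$ of $P$ is detached from its current structure and appended to $P_\beta$, yielding exactly the active path $Q = (\beta, b_1, \ldots, b_h, a_j, a_{j+1}, \ldots, a_k)$ stated in the observation. A direct position count in $Q$ places $a_i$ at index $h + (i - j + 1) = h + i - j + 1$. The third clause of \cref{definition:structure} then forces $\ell(a_i) \le h + i - j + 1$, and in line with the relabelling convention used by \AlgExtendStructures (where the assignment $\ell(\astar) \coloneqq k + h$ tightens the label of a newly placed arc to its active-path position), the overtake updates $\ell(a_i)$ to exactly this value.

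To confirm that this is a genuine reduction, I would chain two ingredients already present in the setup. The overtake precondition $|P_\beta| < \ell(\astar) - 1$, instantiated with $\astar = a_j$, gives $h < \ell(a_j) - 1$; meanwhile the structure invariant applied to $a_j$ at its position $j$ in $P$ gives $\ell(a_j) \le j$. Combined, these yield $h \le j - 2$, and substituting into the new position produces $h + i - j + 1 \le i - 1 < i = \ell(a_i)$, which simultaneously certifies that the new label equals $h + i - j + 1$ and that $\ell(a_i)$ has strictly decreased.

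The most delicate point, and the place the argument most easily slips, is the passage from the inequality $\ell(a_i) \le h + i - j + 1$ supplied by \cref{definition:structure} to the \emph{equality} stated in the observation. The overview of overtaking specifies which arcs migrate between structures but does not explicitly state the accompanying label-update rule; before asserting equality I would make that rule explicit, as the natural analogue of the $\ell(\astar) \coloneqq k + h$ assignment in \AlgExtendStructures, and check that it is in fact forced if the invariants of \cref{definition:structure} are to be preserved across the overtake.
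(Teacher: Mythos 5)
Your proof is correct and takes essentially the same approach as the paper's: read off $a_i$'s new position $h+i-j+1$ in $Q$, then chain the overtake precondition $h < \ell(a_j)-1$ with the structure invariant $\ell(a_j)\le j$ to get $h+i-j+1<i$, so the label strictly drops to the new prefix length. The extra care you take in flagging that the label-update rule during an overtake is only implicit in the algorithm description is fair, but the paper's own proof makes the same implicit assumption and simply asserts that the label is set to the new position.
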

	\begin{proof}
		Before the overtake, we have $\ell(a_j) \le j$. Therefore, $h < \ell(a_j) - 1 \le j - 1$. This implies that $h + i - j + 1 < i$, and hence the new label of $a_i$ is set to the length of the prefix of $Q$ ending at $a_i$. This length equals $h + i - j + 1$ as desired.
	\end{proof}
		
	Consider a $\PassBundle$ $p$ and the execution of $\AlgExtendStructures$. By design of our algorithm, if an arc $a$ is active at the beginning of $p$, it remains active until the end of the execution of $\AlgExtendStructures$. Only possibly at the end of $\AlgExtendStructures$ the algorithm backtracks on $a$, casting it inactive.
	
	Let $A$ be an upper bound on the number of vertices in each structure at the beginning of $p$. For each free node $\alpha$, let $\Vbase(\alpha)$ be the set of vertices owned by $\alpha$ at the beginning of $p$, and let $\Vnew(\alpha)$ be the set of vertices added to $\alpha$'s structure after overtaking or an extension.
	
	Now we would like to upper bound $|\Vnew(\alpha)|$. Let $\alpha_0 = \alpha$ and, for arbitrarily large $k$, let $\alpha_k, \ldots, \alpha_0$ be the free nodes that in that order participated in a sequence of overtakes leading to $\Vnew(\alpha_0)$, where $\alpha_{i - 1}$ overtook a part of the structure of $\alpha_i$, for each $i = 1, \ldots, k$.
	Moreover, assume that when $\alpha_{i - 1}$ overtakes a part of $\alpha_{i}$, it also overtakes a part of $\Vbase(\alpha_i)$. Otherwise, if $\alpha_{i - 1}$ overtakes only a part of $\Vnew(\alpha_i) \subseteq \Vnew(\alpha_{i + 1}) \cup \Vbase(\alpha_{i + 1})$, we could as well think that $\alpha_{i - 1}$ overtook directly a part owned by $\alpha_{i + 1}$ and hence avoid having $\alpha_i$ in the chain of overtakes.
	
	This last property is especially important for us as we want to say that each time an overtake is performed, then a label of a specific active arc is reduced by at least one. 
	In more detail, let $\astar_i$ be the head of the active path originating at $\alpha_i$ \emph{before} $\alpha_i$ performs the overtake.
Since $\alpha_{i - 1}$ overtakes a part of $\Vbase(\alpha_i)$, we have that if $\alpha_{i - 1}$ overtakes a part of $\Vnew(\alpha_i)$, then it overtakes $\astar_i$ as well. As a reminder, this is the case since: (1) there is an alternating path to each matched arc in $\Vnew(\alpha_i)$ that goes through $\astar_i$; and (2) in order to overtake a part of $\Vbase(\alpha_i)$, $\alpha_{i - 1}$ has to overtake an arc in $\Vbase(\alpha_i)$ (there is an alternating path from $\alpha_i$ to each matched arc in $\Vbase(\alpha_i)$ that does not intersect $\Vnew(\alpha_i)$).
After this overtake, the active path originating at $\alpha_{i-1}$ contains both $\astar_{i-1}$ and $\astar_i$.

	This implies that $\alpha_0$ can overtake parts of $\Vnew(\alpha_j)$ for $j \le \maxlen$, but not for any $j > \maxlen$. More formally, assume that $\alpha_0$ overtakes parts of $\Vnew(\alpha_{\jbig})$, where $\jbig > \maxlen$ is the largest in this chain of overtakes for which $\Vnew(\alpha_{\jbig}) \neq \emptyset$. Let $\afirst$ be the very first arc that was overtaken by $\alpha_{\jbig}$ -- hence, the label of $\afirst$ at that moment was equal the length of that active path. Then, since each overtake in the chain reduces the label of $\afirst$, as shown by \cref{obs:overtaking-reduces-head-label}, $\ell(\afirst)$ would become negative, which would contradict properties of our algorithm.
	
	Observe that $\alpha_i$ overtakes a part of $S_{\alpha_{i + 1}}$ via $\astar_i$. Also, observe that $S_{\alpha_j}$ for all $j < k$ is not on hold, and hence $|\Vbase(\alpha_j)| \le \sizelimit$. However, it might be the case that $|\Vbase(\alpha_k)| = A$.

	Each vertex from $\Vbase(\alpha_j)$, for $j > 0$, that is overtaken by $\alpha_{j - 1}$, belongs to $\Vnew(\alpha_{j - 1})$. Also, observe that $\Vnew(\alpha_{i - 1}) \subseteq \Vnew(\alpha_{i}) \cup \Vbase(\alpha_{i})$, and thus $\Vnew(\alpha_0) \subseteq \bigcup_{j = 1}^{1 + 1/\eps} \Vbase(\alpha_j)$. Hence, we conclude that $|\Vbase(\alpha) \cup \Vnew(\alpha)| \le (\maxlen + 1) \cdot \sizelimit + A$. If the number of $\PassBundle$s made in a phase is $\tau$, then by an induction the size of a structure is upper bounded by $\tau (\maxlen + 1) \sizelimit$ at any point of the phase.

\subsection{Upper Bound on Number of Active Free Nodes: Proof of \cref{lemma:activenodes}}
\label{sec:upper-bound-active-free-nodes}
Let $\tau$ be the number of $\PassBundle$s our algorithm makes and let $h(\eps) |M|$ be the number of active nodes at the end of a phase. We upper-bound $h(\eps)$ as a function of $\tau$. The high-level idea of our approach is as follows.
During a $\PassBundle$, each active free node is either (1) on hold, or (2) reduces labels of at least one arc~\footnote{Labels of multiple arcs can be reduced during an overtake (see \cref{operation:overtaking}).}, or (3) backtracks, or (4) waits for another $\PassBundle$ in case another free node overtook its active arc. For each step (2) or (4), a label of at least one matched arc reduces by at least one.
For each step (3), a node backtracks on exactly one matched arc $a$. The algorithm either reduced the label of $a$ previously (and now is backtracking), or jumped over $a$ to reduce the label of some other arc.
Nevertheless, the algorithm jumps over at most $\maxlen - 1$ arcs to reduce a label of another arc. So, we can use the fact that total number of label-changes is at most $2 \maxlen \cdot |M|$ (the factor $2$ is coming from the fact that each of the $|M|$ matched edges is treated as two arcs) to upper-bound the number of steps (2)-(4) by $(1 + 1 + \maxlen) \cdot 2 \maxlen \cdot |M|$.
However, this argument does not apply to the nodes on hold. Luckily, we can upper-bound the number of such nodes by using the value $\sizelimit$.

Recall that $\sizelimit$ is a lower-bound on the number of vertices (excluding the corresponding free node) a structure on hold has. We now argue that such a structure contains at least $\sizelimit / 2$ matched edges. Observe that by a construction, each vertex in a structure $S_\alpha$, except $\alpha$ itself, is incident to a matched edge. In addition, no two matched edges share a vertex. This implies that $S_\alpha$ contains at least $\sizelimit / 2$ matched edges. Hence, the number of free nodes that are on hold at the beginning of any $\PassBundle$ is at most $|M| / (\sizelimit / 2)$.
So, during each $\PassBundle$, at least $h(\eps) |M| - 2 |M| / \sizelimit$ many active free nodes perform one of the steps (2)-(4).
Since the total number of these steps is $(2 + \maxlen) \cdot 2 \maxlen \cdot |M|$, by replacing $\maxlen$ with $1/\eps$ we get that
\[
	\rb{h(\eps) |M| - 2 |M| / \sizelimit} \tau \le 4|M| / \eps + 2|M| / \eps^2,
\]
implying
\[
	\rb{h(\eps) - 2 / \sizelimit} \tau \le 4 / \eps + 2/\eps^2.
\]
This now results in
\[
	h(\eps) \le \frac{4 + 2/\eps}{\eps \cdot \tau} + \frac{2}{\sizelimit}
\]
and concludes the proof.


\section{A General Framework}
\label{section:general-framework}

\newcommand{\Taggregate}{T_{\rm{aggregate}}}
\newcommand{\Tedgescan}{T_{\rm{edge-traverse}}}
\newcommand{\Tpropagate}{T_{\rm{propagate}}}
\newcommand{\Aaggregate}{A_{\rm{aggregate}}}
\newcommand{\Aedgescan}{A_{\rm{edge-traverse}}}
\newcommand{\Apropagate}{A_{\rm{propagate}}}

\newcommand{\Cmatching}{C_{\rm{matching}}}
\newcommand{\cMM}{c_{\rm{MM}}}

In this section we describe how to generalize our algorithm to other computation models. We begin by describing what procedures our framework requires the access to. The input contains a graph $G$ and an approximation parameter $\eps$.
\begin{enumerate}[(i)]
	\item\label{item:matching} Let $\Amatching$ be an algorithm that computes a $\cMM$-approximate maximum matching in time $\Tmatching(G, \eps)$.
	\item\label{item:edge-scan} Let $\Aedgescan$ be an algorithm that traverses all the edges of a graph in time $\Tedgescan(G, \eps)$.
	\item\label{item:propagate} Let $G$ be partitioned into (disjoint) connected components (these components will be structures in our case) each of size $\poly 1/\eps$. Let $\Apropagate$ be a method that propagates an update of size $\poly 1/\eps$ across the entire component. Let $\Tpropagate(G, \eps)$ be an upper-bound of the time $\Apropagate$ needs for this operation. Moreover, let $\Apropagate$ be such that these updates can be carried over in parallel across all the components.
	\item\label{item:aggregate} Let $G$ be partitioned into (disjoint) connected components (these components will be structures in our case) each of size $\poly 1/\eps$. Let $\Aaggregate$ be a method that aggregates an the entire component to a single working unit. We assume that each working unit (e.g., a machine, or local memory) has enough memory to store at least two components. Let $\Taggregate(G, \eps)$ be an upper-bound of the time $\Aaggregate$ needs for this operation. Moreover, let $\Aaggregate$ be such that these aggregations can be carried over in parallel across all the components.
	\item\label{item:storage} Let there be a storage, called \storage, containing $\Omega(n \poly 1/\eps)$ words available to the algorithm. Let \storage be such that each of its memory cells has a size of at least $\Omega(\poly 1 / \eps)$ words, e.g., a vertex can store $\Omega(\poly 1 / \eps)$ words of data in addition to its id.
	
		\storage can also be distributed in a sense that each matched edge and each vertex needs access to $\poly 1/\eps$ memory cells, but a vertex/edge does not need a direct access to memory cells of other vertices/edges.
\end{enumerate}
In a distributed/parallel setting, the aforementioned ``time'' should be understood as the number of rounds. All the times listed above are a function of $G$ and $\eps$, but for the sake of brevity we drop these parameters in the rest of this section.
\begin{theorem}
\label{theorem:framework}
	Let $G$ be a graph on $n$ vertices and let $\eps \in (0, 1/2)$ be a parameter. Given the access to \storage as described in \eqref{item:storage}, and to algorithms described in \eqref{item:matching} for $\cMM \in O(1)$, \eqref{item:edge-scan}, \eqref{item:propagate} and \eqref{item:aggregate}, there exists an algorithm that outputs a $(1 + 2 \eps)$-approximate maximum matching in $G$ in time $O((\Tmatching + \Tedgescan + \Tpropagate + \Taggregate) \cdot \poly 1/\eps)$.
\end{theorem}


We prove \cref{theorem:framework} by constructing a framework that simulates our streaming algorithm. The most involved part is simulating \AlgExtendStructures, which is described in \cref{section:general-AlgExtendStructures}. That is the only algorithm which our framework does not simulate exactly.

\paragraph{Remark:} As pointed out by \cite[arXiv Version 2, Footnote 4]{huang20221}, some of our procedures are stated in a form ``until matching-size $< X$'', which leaves the impression that an algorithm needs to know the global matching size. This is particularly important for models such as \CONGEST, where learning the size of a global matching might be round-wise costly.
Nevertheless, our proofs provide explicit upper bounds on how many times such loops will be executed. Hence, if an upper bound is $1/\eps^k$, a repeat-until loop can be replaced by a for-loop executed $1/\eps^k$ times. Moreover, this adjustment does not affect the proofs. We decided to phrase the algorithms using repeat-until loops as we believe it benefits the ease of readability.

\subsection{Simulation of \cref{alg:main}}
\label{sec:framework-init-with-O(1)-MM}

Recall that \cref{line:main-alg-maximal-matching} of \cref{alg:main} finds a maximal matching in a given graph. However, our framework only has access to a $\cMM$-approximate maximum matching. So, instead of executing a maximal matching, the framework invokes $\Amatching$ as a simulation of \cref{line:main-alg-maximal-matching} of \cref{alg:main}. We now discuss how this affects the pass-complexity.

Our analysis, and in particular the one at the end of \cref{sec:upper-bound-number-phases}, uses the initial matching size to upper-bound the number of phases. Concretely, until the current matching is a $(1+\eps)$-approximate one, the analysis states that the matching size doubles within the number of phases stated at the end of \cref{sec:upper-bound-number-phases}. Let that number of phases be $P$. Hence, our framework requires $\log \cMM \cdot P$ phases.

The framework executes the rest of \cref{alg:main} directly.

\subsection{Simulation of \cref{alg:phase}} 
\cref{line:reset-label,line: active} are executed by appropriately marking matched edges in \storage. \cref{line:on-hold,line:not-on-hold} are executed by invoking $\Apropagate$ on each structure. In the sequel we describe how to simulate \AlgExtendStructures, \AlgEdgeMerge, and \AlgAugmentStructures.

\subsection{Simulation of \AlgAugmentStructures} 
To execute this method, our framework has to know (1) whether a vertex is incident to a matched edge, and (2) what structure (if any) a vertex belongs to. This information is maintained in \storage, but also can be obtained via $\Apropagate$. A single traversal by $\Aedgescan$ over the unmatched edges suffices to add them to the corresponding structure. 

\begin{lemma}
	Given the access to \storage as described in \eqref{item:storage}, and to algorithm $\Aedgescan$ as described in \eqref{item:edge-scan} and to algorithm $\Apropagate$ as described in \eqref{item:propagate}, there is an algorithm that simulates \AlgAugmentStructures in time $O(\Tedgescan + \Tpropagate)$.
\end{lemma}

\subsection{Simulation of $\AlgMerge(\mathcal{P}, g)$}
Given an unmatched arc $g = (u, v)$, the algorithm first fetches from \storage the structures containing $u$ and $v$; let $\structure_\alpha$ and $\structure_\beta$ be those structures, containing $u$ and $v$ respectively. Then, the algorithm finds an alternating path from $\alpha$ to $u$ and from $\beta$ to $v$ by invoking $\Aaggregate$ and performing the search in-place. The augmentation is performed along $g$ and those two alternating paths via $\Apropagate$. Moreover, given the guarantees on $\Aaggregate$ and $\Apropagate$, the simulations of these method can be performed for \emph{multiple} arcs $g$ in parallel, as long as the corresponding structures do not overlap.

\begin{lemma}\label{lemma:simulate-AlgMerge}
	Given the access to \storage (as described in \eqref{item:storage}), and to algorithms $\Apropagate$ (as described in \eqref{item:propagate}) and $\Aaggregate$ (as described in \eqref{item:aggregate}), there is an algorithm that simulates an invocation $\AlgMerge(\mathcal{P}, g)$ in time $O(\Tpropagate + \Taggregate)$. In the same time can be simulated even multiple invocations of $\AlgMerge$ in parallel, as long as the structures containing endpoints of arcs $g$ used in those invocations are distinct.
\end{lemma}

\subsection{Simulation of $\AlgOvertake(P_u, g, \astar)$}
\label{section:simulate-AlgOvertake}
Given an unmatched arc $g = (u, v)$, the algorithm first invokes $\Aaggregate$ to aggregate the structure containing $u$ and the structure containing $v$. $\AlgOvertake(P_u, g, \astar)$ is performed in-place then.

\begin{lemma}\label{lemma:simulate-AlgOvertake}
	Given the access to \storage as described in \eqref{item:storage}, and to algorithm $\Aaggregate$ as described in \eqref{item:aggregate}, there is an algorithm that simulates $\AlgOvertake(P_u, g, \astar)$ in time $O(\Taggregate)$. In the same time can be simulated multiple invocations of $\AlgOvertake$ in parallel, as long as the structures containing endpoints of arcs $g$ used in those invocations are distinct.
\end{lemma}

\subsection{(An Almost) Simulation of \AlgEdgeMerge} 
\label{section:simulation-AlgEdgeMerge}

Our goal is to reduce the simulation of \AlgEdgeMerge to the task of finding an $O(1)$-approximate maximum matching. 

\subsubsection{Graph representation}
\label{section:graph-representation-simulation-AlgEdgeMerge}
Given the current state of structures, labels and matched edges, we now describe how to build a graph $H$ that serves to simulate \AlgEdgeMerge. 
\begin{itemize}
	\item $V(H)$ is the set of free nodes that have not been removed in this phase.
	\item For two free nodes $\alpha$ and $\beta$ in $V(H)$, there is an edge $\{\alpha, \beta\}$ in $H$ if there is an alternating path between $\alpha$ and $\beta$ within $\RD(\structure_\alpha \cup \structure_\beta \cup \{g\})$, where $g$ is an unmatched edge. Record $g$ along with the edge $\{\alpha, \beta\}$ in $E(H)$. In case of ties, record any such $g$.
\end{itemize}

\newcommand{\Erem}{E_{\rm{rem}}}
\newcommand{\cArem}{\cA_{\rm{rem}}}
\newcommand{\Hfinal}{H_{\rm{final}}}
\subsubsection{Algorithm}
The general-framework simulation of \AlgEdgeMerge is provided as \cref{alg:framework-AlgEdgeMerge}.
\begin{algorithm}[h]
  \Input{$G$: a graph \\
		Free nodes \\
		The current matching $M$ \\
		A label for each matched arc
		}
		\tcc{We assume we are given an access to $\cMM$-approximate maximum matching algorithm.}
		Let $q \eqdef \eps^{31} / (d \cdot \log \cMM \cdot \cMM)$, where $d$ is an absolute constant. \label{line:framework-define-q}
		
		\Repeat{$|M_H| < q \cdot |M|$ \label{line:framework-AlgEdgeMerge-until-condition}} {
			Build a graph $H$ as described in \cref{section:graph-representation-simulation-AlgEdgeMerge}.
		
			Let $M_H$ be a $\cMM$-approximate maximum matching in $H$.
			
			\For{each $e = \{\alpha, \beta\} \in M_H$} {
				Perform an augmentation between $\alpha$ and $\beta$ by simulating $\AlgMerge(\mathcal{P}, g)$, where $g$ is an unmatched edge leading to the augmentation.
			}
		}

		Build a graph $H$ as described in \cref{section:graph-representation-simulation-AlgEdgeMerge}. \label{line:H-final}
		
		\For{each $e = \{\alpha, \beta\} \in H$ \label{line:loop-g-to-remove-from-G}} {
			Remove \emph{each} edge $g$ in $G$ corresponding to $e$. Note that there might be multiple such edges $g$ as there might be multiples ways to construct an augmentation between $\alpha$ and $\beta$.\label{line:remove-g-from-G}
		}

  \caption{An almost simulation of \AlgEdgeMerge.}
	\label{alg:framework-AlgEdgeMerge}
\end{algorithm}

We point out that one of the main reasons in removing the edges in the for-loop on \cref{line:loop-g-to-remove-from-G} is to ensure that \cref{obs: opposing arcs} holds.

We now analyze the correctness and the running time of this simulation.
\paragraph{Correctness.}
Let $\Erem$ be the set of edges $g$ removed from $G$ on \cref{line:remove-g-from-G}.
Observe that our algorithm \AlgEdgeMerge does not impose any ordering of the edges in the stream. So, in particular, assume that the stream first lists unmatched edges $g$ recorded on $M_H$ (in any order), and then the stream presents all the remaining edges of $E(G \setminus \Erem)$. It is now easy to see that our simulation described above is equivalent to \AlgEdgeMerge executed on such stream.
By removing $\Erem$ from $G$ on \cref{line:remove-g-from-G} we assure that no other augmentation is possible after our simulation by \cref{alg:framework-AlgEdgeMerge} is over.~\footnote{Note that if instead of $O(1)$-approximate maximum matching we used maximal matching to compute $M_H$, the removal of edges on \cref{line:remove-g-from-G} would not be needed to assure that no other augmentation is possible than those our simulation finds.}

We also note that no edge in $\Erem$ belongs to a structure. As such, removing $\Erem$ does not affect properties of structures.

Removing $\Erem$ ensures that \cref{obs: opposing arcs} holds.

\paragraph{Running time.}
The edge set $E(H)$ is not created explicitly, but on-the-fly as follows. First, unmatched edges that do not have endpoints in distinct structures are removed from the input to $\Amatching$. Then, when $\Amatching$ accesses an edge $\{u, v\} \in E(G)$ such that an augmentation is possible, the framework replaces that edge by $\{\alpha, \beta\}$ such that $u \in \structure_\alpha$ and $v \in \structure_\beta$.

To simulate the loop on \cref{line:loop-g-to-remove-from-G} it is not even necessary to create $H$, but rather to remove each edge from $G$ over which is possible augmentation via the corresponding structures.

Observe that each iteration of the repeat-loop \cref{alg:framework-AlgEdgeMerge} increases the matching size our algorithm maintains by $|M_H|$; each edge in $M_H$ leads to an augmentation in $G$. Also, recall that $|M|$ is a $\cMM$-approximate maximum matching; see \cref{sec:framework-init-with-O(1)-MM} for more details. Therefore, the total sum of $|M_H|$ across all iterations does not exceed $(\cMM - 1) \cdot |M|$.
In addition, in each iteration, except the very last one, of the repeat-loop it holds that $|M_H| \ge q \cdot |M|$. Therefore, the number of repeat-loop iterations is at most $(\cMM - 1) / q$.

This discussion together with \cref{lemma:simulate-AlgMerge} lead to the following claim.
\begin{lemma}
	Given the access to \storage as described in \eqref{item:storage}, and to algorithms described in \eqref{item:matching}, \eqref{item:edge-scan}, \eqref{item:propagate} and \eqref{item:aggregate}, \cref{alg:framework-AlgEdgeMerge} can be implemented in time $O((\Tmatching + \Tedgescan + \Tpropagate + \Taggregate) \cdot \cMM / q)$, where $q$ is defined on \cref{line:framework-define-q}.
	If $\cMM \in O(1)$, then $q \in \poly(1/\eps)$.
\end{lemma}

\paragraph{Approximation guarantee.}
\cref{alg:framework-AlgEdgeMerge} removes certain edges, the set $\Erem$, from $G$. Our aim now is to upper-bound the number of augmentations in $G$ that are lost due to the removal of $\Erem$.

Let $\Hfinal$ be the graph $H$ constructed on \cref{line:H-final}. Given the condition on \cref{line:framework-AlgEdgeMerge-until-condition}, the maximum matching in $\Hfinal$ is at most $\cMM \cdot q \cdot |M|$. From \cref{lemma:structuresize} and our setup of parameters, a structure contains at most $Y \eqdef 2/\eps^{11}$ vertices, for $0 < \eps \le 1/2$.

We now claim that removing $\Erem$ destroys at most $2 \cdot \cMM \cdot q \cdot |M| \cdot Y$ vertex-disjoint augmentations in $G$. To see that, observe that the minimum vertex cover in $\Hfinal$ has size at most $2 \cdot \cMM \cdot q \cdot |M|$.
This further implies that the minimum vertex cover in $\Erem$ has size at most $2 \cdot \cMM \cdot q \cdot |M| \cdot Y$. The maximum number of vertex-disjoint augmentations destroyed by removing $\Erem$ is upper-bounded by the maximum matching size in $\Erem$ -- each such augmentation contains at least one edge $g$ in $\Erem$ and no other destroyed augmentation contains an endpoint of $g$. Hence, the number of destroyed augmentations is upper-bounded by $2 \cdot \cMM \cdot q \cdot |M| \cdot Y$.

By \cref{lemma: runtime} and its proofs, there exists an absolute constant $c$ such that our streaming algorithm makes at most $c / \eps^{19}$ passes. Taking into account our discussion in \cref{sec:framework-init-with-O(1)-MM}, it corresponds to $\log \cMM \cdot c / \eps^{19}$ passes simulated by the framework. Therefore, if in each almost-simulation of \AlgEdgeMerge by \cref{alg:framework-AlgEdgeMerge} are destroyed $2 \cdot \cMM \cdot q \cdot |M| \cdot Y$ augmentations, the simulation of \cref{alg:main} misses at most
\[
	\frac{2 \cdot c \cdot \log \cMM \cdot \cMM \cdot q \cdot |M| \cdot Y}{\eps^{19}} = \frac{4 \cdot c \cdot \log \cMM \cdot \cMM \cdot q \cdot |M|}{\eps^{30}}
\]
augmentations. By letting $q = \eps^{31} / (4 \cdot c \cdot \log \cMM \cdot \cMM)$ we conclude that our general framework outputs $(1 + 2 \eps)$-approximate maximum matching if it uses \cref{alg:framework-AlgEdgeMerge} instead of \AlgEdgeMerge.

\begin{lemma}
	Using \cref{alg:framework-AlgEdgeMerge} in place of \AlgEdgeMerge in \cref{alg:main} outputs a $(1 + 2\eps)$-approximate maximum matching.
\end{lemma}

\subsection{(An Almost) Simulation of \AlgExtendStructures} 
\label{section:general-AlgExtendStructures}
Finally, it remains to discuss the implementation of \AlgExtendStructures. Ideally, we want to reduce this task to one of finding a maximal matching. As we will see, instead of simulating \AlgExtendStructures entirely, we will simulate it for all but a tiny fraction of free nodes.

\subsubsection{Graph representation}
\label{section:graph-representation}
First, the framework invokes $\Aaggregate$. Then, for each vertex $u$ of each structure, the framework computes a path $P_u$ as described on \cref{line:extend-structure-P_u} of \AlgExtendStructures (\cref{algo-extend-structures}). In case of ties, a shortest $P_u$ is chosen. Also, for every vertex $u \in \structure_\alpha$ for each free node $\alpha$, the framework computes whether that is an alternating path in $\structure_\alpha$ that originates at $\alpha$ and ends by a matched arc whose head is $u$. This information will be useful to check whether there is an augmentation between $\alpha$ and some other free node. Finally, the framework invokes $\Apropagate$ to notify each vertex $u$ in the structure of $P_u$.

Second, the framework builds a graph $H$ that serves to almost simulate \AlgExtendStructures. 
\begin{itemize}
	\item $V(H)$ is the union of the free nodes \emph{and} the matched edges that \emph{do not} yet belong to any structure.
	\item Let $\alpha$ be a free node such that $\alpha$ is not on hold, $\alpha$ has not been augmented (this condition corresponds to \cref{line:ES-check-augmented} of \cref{algo-extend-structures} evaluating to $\false$) and $\alpha$'s active path has not been already extended in this invocation of $\AlgExtendStructures$ (this condition corresponds to \cref{line:ES-check-if-already-extended} of \cref{algo-extend-structures} evaluating to $\false$).
	
	For a free node $\beta$, there is an edge $\{\alpha, \beta\}$ in $H$ if:
	\begin{itemize}
		\item $\astar$ defined on \cref{line:ES-define-astar} of \cref{algo-extend-structures} is in $\structure_\beta$; and
		\item \cref{line:ES-check-astar-label} of \cref{algo-extend-structures} evaluates to $\true$ for a path $P_u$ originating at $\alpha$. Note that we allow $\alpha = \beta$, i.e., $H$ can have self-loops.
	\end{itemize}
		
	In analogous way, for a free node $\alpha$ and a matched edge $e$ that does not belong to any structure yet, there is an edge $\{\alpha, e\}$ in $H$ if:
	\begin{itemize}
		\item $\astar$ defined on \cref{line:ES-define-astar} of \cref{algo-extend-structures} is such that $\RD(\astar) = e$; and 
		\item \cref{line:ES-check-astar-label} of \cref{algo-extend-structures} evaluates to $\true$ for a path $P_u$ originating at $\alpha$.
	\end{itemize}
	In addition, for each edge $\{x, y\}$ in $H$ record the unmatched arc $g$ (in case of ties, choose any $g$) and $\astar$ that led to adding $\{x, y\}$ to $H$. These arcs $g$ and $\astar$ corresponds to $g$ and $\astar$ in \AlgExtendStructures.
\end{itemize}

\subsubsection{Algorithm}
We are now ready to state an almost simulation of \AlgExtendStructures (see \cref{alg:framework-AlgExtendStructure}) that will lead to our general framework.
\begin{algorithm}[h]
  \Input{$G$: a graph \\
		Free nodes \\
		The current matching $M$ \\
		A label for each matched arc
		}
		
		\tcc{We assume we are given an access to $\cMM$-approximate maximum matching algorithm.}
		
		Let $r \eqdef \eps^{32} / (d \cdot \log \cMM \cdot \cMM)$, where $d$ is an absolute positive constant. \label{line:framework-define-r}
		
		\Repeat{$|M_H| < r \cdot |M|$} {
			Build a graph $H$ as described in \cref{section:graph-representation}.
			
			Let $SL$ be the set of vertices in $H$ having self-loops.
			
			Let $\hH$ be obtained by removing $SL$ from $H$.
			
			Let $M_\hH$ be a $\cMM$-approximate maximum matching in $\hH$ obtained by executing $\Amatching$.
			
			Let $M_H := \{\{\alpha, \alpha\}\ |\ \alpha \in SL\} \cup M_\hH$. \label{line:matching-in-H}
			
			\For{each $e \in M_H$} {
				Let $g = (u, v)$ be the unmatched arc and $\astar$ be the matched arc recorded with $e$.
				
				Let $\structure_\alpha$ be the structure containing $u$.
				
				\If{$v \in \structure_\beta \neq \structure_\alpha$, and there is an augmentation in $\RD(\structure_\alpha \cup \structure_\beta \cup \{g\})$ \label{line:framework-if-augmentation}} {
					Simulate $\AlgMerge(\mathcal{P}, g)$. \label{line:simulation-AlgEdgeMerge-invoke-AlgMerge}
				}
				\Else {
					Simulate $\AlgOvertake(P_u, g, \astar)$. \label{line:simulation-AlgEdgeMerge-invoke-AlgOvertake}
				}
			}
		}

		Simulate \AlgEdgeMerge as described in \cref{section:simulation-AlgEdgeMerge}.\label{line:framework-AlgEdgeMerge}

  \caption{An almost simulation of \AlgExtendStructures.}
	\label{alg:framework-AlgExtendStructure}
\end{algorithm}

\subsubsection{Running time} The framework builds $H$ described in \cref{section:graph-representation} on-the-fly. To decide whether an unmatched edge $\{u, v\}$ belongs to $H$, the algorithm test whether \cref{line:ES-check-astar-label} of \cref{algo-extend-structures} evaluates to $\true$ for $P_u$ and the matched arc $\astar$ (if any) whose tail is $v$. $P_u$ is computed as described in the beginning of \cref{section:graph-representation}. If it evaluates to $\true$, to $H$ is added either $\{\alpha, \beta\}$ such that $u \in \structure_\alpha$ and $\astar \in \structure_\beta$, or $\{\alpha, e\}$ is added to $H$ in case $v \in \structure_\alpha$ and $e = \RD(\astar)$ does not belong to any structure.

To test whether \cref{line:framework-if-augmentation} of \cref{alg:framework-AlgExtendStructure} evaluates to $\true$, it suffices to check whether there is an alternating path in $\structure_\beta$ that originates at $\beta$ and ends with a matched arc whose head is $v$. As a reminder, that information is pre-computed for each vertex, as described in \cref{section:graph-representation}.

Now, we upper-bound the number of loop iterations. Each iteration of the main loop of \cref{alg:framework-AlgExtendStructure} updates/reduces or removes at least $|M_H|$ matched arcs. Label-reduction occurs via \cref{line:simulation-AlgEdgeMerge-invoke-AlgOvertake}, while edge-removal occurs via \cref{line:simulation-AlgEdgeMerge-invoke-AlgMerge}.
First, there can be at most $2 \maxlen |M|$ different label-updates; $\maxlen$ per each of the $2 |M|$ matched arcs.  
Second, in each iteration, except the very last one, of the repeat-until loop of \cref{alg:framework-AlgExtendStructure} it holds that $|M_H| \ge r |M|$.
Therefore, the repeat-until loop of \cref{alg:framework-AlgExtendStructure} executes at most $(2 \maxlen |M| + |M|) / (r |M|) = (2/\eps + 1) / r$ times. Here, $+|M|$ in $2 \maxlen |M| + |M|$ upper-bounds the number of removed matched edges.

\begin{lemma}
	Given the access to \storage (as described in \eqref{item:storage}), and to algorithms described in \eqref{item:matching}, \eqref{item:edge-scan}, \eqref{item:propagate} and \eqref{item:aggregate}, \cref{alg:framework-AlgExtendStructure} can be executed in time $O((\Tmatching + \Tedgescan + \Tpropagate + \Taggregate) / (\eps \cdot r))$, where $r$ is defined on \cref{line:framework-define-r}. If $\Amatching$ is an $O(1)$-approximate maximum matching algorithm, then $r \in \poly(1/\eps)$.
\end{lemma}


\subsubsection{Approximation guarantee}
The primary goal of \AlgExtendStructures is to extend active paths of a maximal (not necessary maximum) number of distinct free nodes with respect to a given ordering of arcs. \cref{alg:framework-AlgExtendStructure} does not achieve the same guarantee. As a consequence of such behavior of \cref{alg:framework-AlgExtendStructure}, \AlgBacktrack potentially reduces some active paths although those active paths can be extended, leading to some augmentations not be found. In this section we upper-bound the number of such ``lost'' augmentations.

We point out that the repeat-until loop of \cref{alg:framework-AlgExtendStructure} might also miss to find certain augmentations, e.g., like those found by \cref{line:ES-check-alpha-beta-merge} of \AlgExtendStructures. To remedy that, \cref{alg:framework-AlgExtendStructure} invokes the simulation of \AlgEdgeMerge on \cref{line:framework-AlgEdgeMerge}. We note that this simulation is only for a closer resemblance between \cref{alg:framework-AlgExtendStructure} and \AlgExtendStructures (\cref{algo-extend-structures}), as \AlgEdgeMerge is invoked as a part of the simulation of \cref{alg:phase} anyways.

\paragraph{Upper-bounding missed $\AlgOvertake$ invocations.}
The repeat-until loop of \cref{alg:framework-AlgExtendStructure} terminates when a $\cMM$-approximate maximum matching in $H$ becomes smaller than $r \cdot |M|$. This implies that by removing at most $2 \cdot \cMM \cdot r \cdot |M|$ structures no structure would execute \AlgOvertake. From \cref{lemma:structuresize} and our setup of parameters, a structure contains at most $2/\eps^{11}$ matched arcs, for $0 < \eps \le 1/2$. Hence, $2 \cdot \cMM \cdot r \cdot |M|$ structures correspond to at most $\maxlen \cdot 2/\eps^{11} \cdot 2 \cdot \cMM \cdot r\cdot  |M|$ distinct label updates. Since each invocation of \AlgOvertake reduces a label, a termination of the repeat-until loop misses at most $4r \cdot \cMM /\eps^{12} \cdot |M|$ extensions of active paths.

By \cref{lemma: runtime} and its proofs, there exists an absolute constant $c$ such that our streaming algorithm makes at most $c / \eps^{19}$ passes. Taking into account our discussion in \cref{sec:framework-init-with-O(1)-MM}, it corresponds to $\log \cMM \cdot c / \eps^{19}$ passes simulated by the framework. Therefore, if in each almost-simulation of \AlgExtendStructures by \cref{alg:framework-AlgExtendStructure} are missed $4r \cdot \cMM /\eps^{12} \cdot |M|$ extensions of active paths that potentially lead to augmentations, the simulation of \cref{alg:main} misses at most $4 c \cdot r \cdot \log \cMM \cdot \cMM /\eps^{31} \cdot |M|$ augmentations. By letting $r = \eps^{32} / (4 c \cdot \log \cMM \cdot \cMM)$ we conclude that using \cref{alg:framework-AlgExtendStructure} instead of \AlgExtendStructures outputs a $(1 + 2 \eps)$-approximate maximum matching.

\begin{lemma}
	Using \cref{alg:framework-AlgExtendStructure} in place of \AlgExtendStructures in \cref{alg:main} outputs a $(1 + 2\eps)$-approximate maximum matching.
\end{lemma}

\section{Proof of \cref{lemma: overtaking}}\label{sec:proof-of-lemma-overtaking}
	\lemmaPreserve*

\noindent First, consider the case that $\astar \in \structure_{\beta}$, for $\alpha \neq \beta$, is an inactive arc.
	We start by showing that both structures remain connected.

	
	\paragraph{$\structure_\beta$ remains connected.}
	Consider an arc $d_q \in \structure_\beta$ that remains in $S_\beta$ after the overtake.
	By the definition of overtaking, there exists an alternating path $P(d_q) = (\beta, d_1, d_2, \ldots, d_q) \in \structure_\beta$ that does not include $\astar$; otherwise, $d_q$ would be added to $\structure_\alpha$.
	Now, we conclude that there is also an alternating path to each $d_i$ (a prefix of $P(d_q)$), $1 \leq i \leq q$, that does not contain $\astar$ and hence, the whole path $P(d_q)$ remains in $\structure_\beta$ after the overtake.
	Thus, $\structure_\beta$ remains connected in the case that $\astar$ is inactive.
	
	\paragraph{$\structure_\alpha$ remains connected.}	
	Let $d_q$ be an overtaken arc by $\structure_\alpha$ from $\structure_\beta$.
	Consider $d_q$ before the overtake and let $P(d_q) = (\beta, d_1, \ldots, d_q)$ be an alternating path in $\structure_\beta$.
	Suppose that $\astar = d_i \in P(d_q)$ for some $1 \leq i < q$; notice that $\astar \in P(d_q)$ as otherwise it cannot be the case that $\alpha$ takes the arc $d_q$ from $\beta$.
	We now argue that unless there is an alternating path between $\alpha$ and $\beta$, then $\alpha$ takes all arcs $d_i, d_{i + 1}, \ldots, d_q$ into $S_\alpha$ and we obtain connectivity.
	
	Towards a contradiction, assume that $d_j$ is not overtaken, for some $i < j < q$. Then, there is an alternating path $P(d_j) \in \structure_\beta$ to $d_j$ such that $d_i = \astar \not \in P(d_j)$. 
	Let $j'$ be the index $i < j' \le q$ such that $d_{j'}$ or $\leftarc{d_{j'}}$ is the first intersection between $P(d_j)$ and edges $\{d_i, \ldots, d_q\}$, where ``first'' is taken with respect to $P(d_j)$.
	Now, we consider two cases: $d_{j'}$ or $\leftarc{d_{j'}}$ is the intersection. In case $d_{j'}$ is the intersection, then there is an alternating path to $d_q$ within $\structure_\beta$ that does not contain $\astar$ (the path goes along $P(d_j)$ until $d_{j'}$ and then along $d_{j'+1}, \ldots, d_q$), contradicting that $\alpha$ overtook $d_q$. In case $\leftarc{d_{j'}}$ is the intersection, then there is an augmenting path between $\alpha$ and $\beta$ that goes from $\beta$ to $\leftarc{d_{j'}}$ along $P(d_j)$, and then to $\alpha$ along $\leftarc{d_{j' - 1}}, \ldots, \leftarc{d_{i}}$, contradicting the presumption of $\AlgOvertake$ (c.f. \Cref{section:AlgOvertake}) that there is no alternating path between $\alpha$ and $\beta$ in $\RD(\structure_\alpha \cup \structure' \cup \{g\})$.
	Hence, $\structure_\alpha$ is connected after the overtake.
	
	Then, consider the case, where $\astar$ in an active arc of $S_\beta$. 
	Let $P_{\astar}$ be the suffix of the active path $P_\beta$ of $\beta$ starting at $\astar$.	
	The connectivity follows from applying the above argumentation sequentially to each arc in $P_\astar$.

	\paragraph{Lengths of active paths; maintaining \cref{invariant:active-path-length}.}
	Now, we want to show that the lengths of the active paths are at most $\maxlen$.	
	This is clear for $\structure_\beta$ since its active path does not grow during an overtake.
	
	If $\astar$ is \emph{not} an active arc of $\structure_\beta$, the active path of $\structure_\alpha$ grows by just one, and since the algorithm only extends an active path in case its length is smaller than $\maxlen$, the resulting path is of length at most $\maxlen$.
	
	Then, consider the second case where $\astar$ is active and let $P_\beta = (\beta, d_1, d_2, \ldots, d_q)$ be the active path of $\beta$. Let $P_u$ be a path on the input of $\AlgOvertake$.
	Let $\astar = d_i$ for some $1 \leq i \leq q$ and let $P_{\astar}$ be the suffix of $P_\beta$ starting at $\astar$ (this is the same definition as of $P_{\astar}$ in \cref{section:AlgOvertake}).
	Notice that since $|P_\beta| \leq \maxlen$ and \cref{invariant:active-path-length} holds, it is the case that $|P_\astar| \leq \maxlen - (i - 1) = \maxlen - i + 1$.
	By \cref{invariant:active-path-length} we have that $\ell(\astar) \leq i$, and by \cref{algo-extend-structures} $\alpha$ only invokes $\AlgOvertake$ if $|P_u| < i$.
	The new active path of $\alpha$ is $P_u \circ P_\astar$ and we have that $|P_u \circ P_\astar| \leq (i - 1) + \maxlen - i + 1 = \maxlen$.
	Hence, from this arguments and from the label update performed by Step~\eqref{item:update-labels} in \cref{section:AlgOvertake}, we obtain the desired bounds for the length of the new active paths and for the labels of its matched arcs.
	
	\paragraph{Property \eqref{prop:unmatched-arcs} of \cref{definition:structure}.}
		Note that each unmatched arc added to $\structure_\alpha$ is done by adding alternating paths to $\structure_\alpha$ that start and end with matched arcs. Hence, this property is preserved for $\structure_\alpha$.		
		Step~\eqref{item:clean-unmatched} in \cref{section:AlgOvertake} ensures that \cref{definition:structure}-\ref{prop:unmatched-arcs} holds for $\structure_\beta$ as well.

	\paragraph{Disjointness.}
	Finally, we want to show that the structures remain vertex disjoint. Consider the state of $\structure_\alpha$ and $\structure_\beta$ after the overtake.	
	Since a matched arc added to $\structure_\alpha$ is removed from $\structure_\beta$ and no matched arc is added to $\structure_\beta$, it holds that no matched arc belongs to both $\structure_\alpha$ and $\structure_\beta$.	
	However, we also show that if a matched arc $a \in \structure_\alpha$ it cannot be the case that $\leftarc{a} \in \structure_\beta$.
	We prove that if that was the case, then the algorithm could find an augmenting path from $\alpha$ to $\beta$, which is assumed not to exist.
	
	Towards a contradiction, assume that $d_q \in \structure_\alpha$ and $\leftarc{d_q} \in \structure_\beta$. Let $P(d_q)$ be an alternating path to $d_q$ from $\alpha$ within $\structure_\alpha$ and $P(\leftarc{d_q})$ be an alternating path from $\beta$ to $\leftarc{d_q}$ within $\structure_\beta$; from our discussion on connectivity of $\structure_\alpha$ and $\structure_\beta$ it follows that such alternating paths exist.
	Let $b$ be the first arc along $P(d_q)$ such that $b \in P(d_q)$ and $\leftarc{b} \in P(\leftarc{d_q})$; note that such $b$ exists as those two alternating paths intersect at $d_q$. Then, there is an augmenting path between $\alpha$ and $\beta$ that goes to $b$ along $P(d_q)$, and then from $\leftarc{b}$ to $\beta $ along $P(\leftarc{d_q})$, again contradicting there is no alternating path between $\alpha$ and $\beta$ in $\RD(\structure_\alpha \cup \structure' \cup \{g\})$. 
	
	This now implies that the vertices induced by \emph{matched} arcs of $\structure_\alpha$ and the vertices induced by \emph{matched} arcs of $\structure_\beta$ are disjoint. Since Property~\eqref{prop:unmatched-arcs} of \cref{definition:structure} holds for both $\structure_\alpha$ and $\structure_\beta$ after the updates by $\AlgOvertake$, as shown above, we conclude that the updated structures are vertex disjoint.

\paragraph{The case that $\astar \not \in \structure_\beta$ for $\alpha \neq \beta$.}
Finally, we note that we did not explicitly consider the case where $\astar$ is not part of any structure or $\astar \in \structure_\alpha$ at the time when $\AlgOvertake$ is invoked. If $\astar \in \structure_\alpha$, note that connectedness of $\structure_\alpha$ follows directly, while $\ell(\astar) \le \maxlen$ is guaranteed by \cref{line:ES-check-astar-label} of \cref{algo-extend-structures}. This guarantees that \cref{invariant:active-path-length} is maintained. Disjointness of structures follows trivially. Finally, property \eqref{prop:unmatched-arcs} of \cref{definition:structure} follows from the fact that the newly added unmatched edge to $\structure_\alpha$ is a part of an alternating path ending by $\astar$.

The case where $\astar$ does not belong to any structure follows from analogous arguments.


\subsection*{Acknowledgments}
We thank MohammadTaghi Hajiaghayi and Peilin Zhong for their helpful feedback on the exposition of this writeup. We thank Mohsen Ghaffari and Christoph Grunau for their insightful discussions. In particular, we thank Mohsen Ghaffari for suggesting how to use $O(1)$-approximate maximum matchings in our analysis in \cref{section:simulation-AlgEdgeMerge}.
We also thank anonymous reviewers for their valuable comments.

\bibliographystyle{alpha}
\bibliography{ref}

\newcommand{\etalchar}[1]{$^{#1}$}
\begin{thebibliography}{KMNFT20}

\bibitem[AB21]{Assadi2021}
Sepehr Assadi and Soheil Behnezhad.
\newblock {Beating Two-Thirds For Random-Order Streaming Matching}.
\newblock In {\em International Colloquium on Automata, Languagesand
  Programming (ICALP)}, 2021.

\bibitem[ABB{\etalchar{+}}19]{assadi2019coresets}
Sepehr Assadi, MohammadHossein Bateni, Aaron Bernstein, Vahab Mirrokni, and
  Cliff Stein.
\newblock {Coresets Meet EDCS: Algorithms for Matching and Vertex Cover on
  Massive Graphs}.
\newblock In {\em Proceedings of the ACM-SIAM Symposium on Discrete Algorithms
  (SODA)}, pages 1616--1635. SIAM, 2019.

\bibitem[AG11]{ahn2011linear}
Kook~Jin Ahn and Sudipto Guha.
\newblock {Linear Programming in the Semi-Streaming Model with Application to
  the Maximum Matching Problem}.
\newblock In {\em International Colloquium on Automata, Languages, and
  Programming (ICALP)}, pages 526--538, 2011.

\bibitem[AG18]{Ahn2018}
Kook~Jin Ahn and Sudipto Guha.
\newblock {Access to Data and Number of Iterations: Dual Primal Algorithms for
  Maximum Matching under Resource Constraints}.
\newblock {\em ACM Trans. Parallel Comput.}, 4(4):1--40, 2018.

\bibitem[AJJ{\etalchar{+}}22]{assadi2022semi}
Sepehr Assadi, Arun Jambulapati, Yujia Jin, Aaron Sidford, and Kevin Tian.
\newblock Semi-streaming bipartite matching in fewer passes and optimal space.
\newblock In {\em Proceedings of the 2022 Annual ACM-SIAM Symposium on Discrete
  Algorithms (SODA)}, pages 627--669. SIAM, 2022.

\bibitem[AKL17]{assadi2017estimating}
Sepehr Assadi, Sanjeev Khanna, and Yang Li.
\newblock {On Estimating Maximum Matching Size in Graph Streams}.
\newblock In {\em the Proceedings of the Symposium on Discrete Algorithms
  (SODA)}, pages 1723--1742, 2017.

\bibitem[AKLY16]{assadi2016maximum}
Sepehr Assadi, Sanjeev Khanna, Yang Li, and Grigory Yaroslavtsev.
\newblock {Maximum Matchings in Dynamic Graph Streams and the Simultaneous
  Communication Model}.
\newblock In {\em the Proceedings of the Symposium on Discrete Algorithms
  (SODA)}, pages 1345--1364, 2016.

\bibitem[ALT21]{assadi2021auction}
Sepehr Assadi, S~Cliff Liu, and Robert~E Tarjan.
\newblock An auction algorithm for bipartite matching in streaming and
  massively parallel computation models.
\newblock In {\em Symposium on Simplicity in Algorithms (SOSA)}, pages
  165--171. SIAM, 2021.

\bibitem[AN21]{assadi2021GraphStreamingLowerBounds}
Sepehr Assadi and Vishvajeet N.
\newblock {Graph Streaming Lower Bounds for Parameter Estimation and Property
  Testing via a Streaming {XOR} Lemma}.
\newblock {\em CoRR}, abs/2104.04908, 2021.

\bibitem[Ass22]{assadi2022two}
Sepehr Assadi.
\newblock A two-pass (conditional) lower bound for semi-streaming maximum
  matching.
\newblock In {\em Proceedings of the 2022 Annual ACM-SIAM Symposium on Discrete
  Algorithms (SODA)}, pages 708--742. SIAM, 2022.

\bibitem[Ber88]{bertsekas1988auction}
Dimitri~P Bertsekas.
\newblock The auction algorithm: A distributed relaxation method for the
  assignment problem.
\newblock {\em Annals of operations research}, 14(1):105--123, 1988.

\bibitem[Ber20]{Bernstein2020}
Aaron Bernstein.
\newblock {Improved Bounds for Matching in Random-order Streams}.
\newblock In {\em International Colloquium on Automata, Languages, and
  Programming (ICALP)}, 2020.

\bibitem[BHH19]{behnezhad2019exponentially}
Soheil Behnezhad, Mohammad~Taghi Hajiaghayi, and David~G Harris.
\newblock Exponentially faster massively parallel maximal matching.
\newblock In {\em 2019 IEEE 60th Annual Symposium on Foundations of Computer
  Science (FOCS)}, pages 1637--1649. IEEE, 2019.

\bibitem[BK19]{Bhattacharya2019}
Sayan Bhattacharya and Janardhan Kulkarni.
\newblock {Deterministically Maintaining a $(2 + \varepsilon)$-Approximate
  Minimum Vertex Cover in $O(1/\varepsilon^2)$ Amortized Update Time}.
\newblock In {\em Proceedings of the Symposium on Discrete Algorithms (SODA)},
  2019.

\bibitem[BK22]{Behnezhad2022}
Soheil Behnezhad and Sanjeev Khanna.
\newblock {New Trade-Offs for Fully Dynamic Matching via Hierarchical EDCS}.
\newblock In {\em Proceedings of the Symposium on Discrete Algorithms}, 2022.

\bibitem[bKK20]{khalilK20}
Lidiya~Khalidah binti Khalil and Christian Konrad.
\newblock {Constructing Large Matchings via Query Access to a Maximal Matching
  Oracle}.
\newblock In {\em {IARCS} Conference on Foundations of Software Technology and
  Theoretical Computer Science (FSTTCS)}, volume 182, pages 26:1--26:15, 2020.

\bibitem[B{\L}M19]{Behnezhad2019}
Soheil Behnezhad, Jakub {\L}acki, and Vahab Mirrokni.
\newblock {Fully Dynamic Matching:Beating 2-Approximation in
  $\Delta^\varepsilon$ Update Time}.
\newblock In {\em Proceedings Symposium on Discrete Algorithms (SODA)}, 2019.

\bibitem[BS15]{bury2015sublinear}
Marc Bury and Chris Schwiegelshohn.
\newblock {Sublinear Estimation of Weighted Matchings in Dynamic Data Streams}.
\newblock In {\em the Proceedings of the European Symposium on Algorithms
  (ESA)}, pages 263--274. Springer, 2015.

\bibitem[BS16]{bernstein2016faster}
Aaron Bernstein and Cliff Stein.
\newblock {Faster Fully Dynamic Matchings with Small Approximation Ratios}.
\newblock In {\em the Proceedings of the Symposium on Discrete Algorithms
  (SODA)}, pages 692--711, 2016.

\bibitem[BST19]{buchbinder2019online}
Niv Buchbinder, Danny Segev, and Yevgeny Tkach.
\newblock {Online Algorithms for Maximum Cardinality Matching with Edge
  Arrivals}.
\newblock {\em Algorithmica}, 81(5):1781--1799, 2019.

\bibitem[CCE{\etalchar{+}}16]{chitnis2016kernelization}
Rajesh Chitnis, Graham Cormode, Hossein Esfandiari, MohammadTaghi Hajiaghayi,
  Andrew McGregor, Morteza Monemizadeh, and Sofya Vorotnikova.
\newblock {Kernelization via Sampling with Applications to Finding Matchings
  and Related Problems in Dynamic Graph Streams}.
\newblock In {\em the Proceedings of the Symposium on Discrete Algorithms
  (SODA)}, pages 1326--1344, 2016.

\bibitem[C{\L}M{\etalchar{+}}19]{czumaj2019round}
Artur Czumaj, Jakub {\L}acki, Aleksander Madry, Slobodan Mitrovic, Krzysztof
  Onak, and Piotr Sankowski.
\newblock Round compression for parallel matching algorithms.
\newblock {\em SIAM Journal on Computing}, 49(5):STOC18--1, 2019.

\bibitem[CTV15]{chiplunkar2015randomized}
Ashish Chiplunkar, Sumedh Tirodkar, and Sundar Vishwanathan.
\newblock {On Randomized Algorithms for Matching in the Online Preemptive
  Model}.
\newblock In {\em the Proceedings of the European Symposium on Algorithms
  (ESA)}, pages 325--336, 2015.

\bibitem[DH03]{drake2003improved}
Doratha~E. Drake and Stefan Hougardy.
\newblock {Improved Linear Time Approximation Algorithms for Weighted
  Matchings}.
\newblock In {\em Approximation, Randomization, and Combinatorial Optimization:
  Algorithms and Techniques (APPROX/RANDOM)}, pages 14--23. Springer, 2003.

\bibitem[DP14]{duan2014linear}
Ran Duan and Seth Pettie.
\newblock {Linear-Time Approximation for Maximum Weight Matching}.
\newblock {\em Journal of the ACM (JACM)}, 61(1):1--23, 2014.

\bibitem[Edm65a]{edmonds1965maximum}
Jack Edmonds.
\newblock {Maximum Matching and a Polyhedron with $0,1$-Vertices}.
\newblock {\em Journal of Research of the National Bureau of Standards B},
  69(125-130):55--56, 1965.

\bibitem[Edm65b]{edmonds_blossom}
Jack Edmonds.
\newblock {Paths, Trees, and Flowers}.
\newblock {\em Canadian Journal of Mathematics}, 17:449–467, 1965.

\bibitem[EHL{\etalchar{+}}18]{esfandiari2018streaming}
Hossein Esfandiari, Mohammadtaghi Hajiaghayi, Vahid Liaghat, Morteza
  Monemizadeh, and Krzysztof Onak.
\newblock {Streaming Algorithms for Estimating the Matching Size in Planar
  Graphs and Beyond}.
\newblock {\em ACM Transactions on Algorithms (TALG)}, 14(4):1--23, 2018.

\bibitem[EHM16]{esfandiari2016finding}
Hossein Esfandiari, MohammadTaghi Hajiaghayi, and Morteza Monemizadeh.
\newblock {Finding Large Matchings in Semi-Streaming}.
\newblock In {\em International Conference on Data Mining Workshops (ICDMW)},
  pages 608--614, 2016.

\bibitem[EKMS12]{Eggert2012}
Sebastian Eggert, Lasse Kliemann, Peter Munstermann, and Anand Srivastav.
\newblock {Bipartite Matching in the Semi-Streaming Model}.
\newblock {\em Algorithmica}, pages 490--508, 2012.

\bibitem[ELMS11]{epstein2011improved}
Leah Epstein, Asaf Levin, Juli{\'a}n Mestre, and Danny Segev.
\newblock {Improved Approximation Guarantees for Weighted Matching in the
  Semi-streaming Model}.
\newblock {\em SIAM Journal on Discrete Mathematics}, 25(3):1251--1265, 2011.

\bibitem[ELSW13]{epstein2013}
Leah Epstein, Asaf Levin, Danny Segev, and Oren Weimann.
\newblock {Improved Bounds for Online Preemptive Matching}.
\newblock In {\em 30th International Symposium on Theoretical Aspects of
  Computer Science (STACS)}, volume~20, pages 389--399, 2013.

\bibitem[FKM{\etalchar{+}}04]{feigenbaum2005graph}
Joan Feigenbaum, Sampath Kannan, Andrew McGregor, Siddharth Suri, and Jian
  Zhang.
\newblock {On Graph Problems in a Semi-Streaming Model}.
\newblock In {\em International Colloquium on Automata, Languages and
  Programming (ICALP)}, volume 3142, pages 531--543, 2004.

\bibitem[FKM{\etalchar{+}}05]{feigenbaum2005graphDistances}
Joan Feigenbaum, Sampath Kannan, Andrew McGregor, Siddharth Suri, and Jian
  Zhang.
\newblock {Graph Distances in the Streaming Model: the Value of Space}.
\newblock In {\em the Proceedings of the Symposium on Discrete Algorithms
  (SODA)}, volume~5, pages 745--754, 2005.

\bibitem[FS21]{feldman2021maximum}
Moran Feldman and Ariel Szarf.
\newblock Maximum matching sans maximal matching: A new approach for finding
  maximum matchings in the data stream model.
\newblock {\em arXiv preprint arXiv:2109.05946}, 2021.

\bibitem[Gab90]{gabow1990data}
Harold~N. Gabow.
\newblock {Data Structures for Weighted Matching and Nearest Common Ancestors
  with Linking}.
\newblock In {\em the Proceedings of the Symposium on Discrete Algorithms
  (SODA)}, pages 434--443, 1990.

\bibitem[GGK{\etalchar{+}}18]{ghaffari2018improved}
Mohsen Ghaffari, Themis Gouleakis, Christian Konrad, Slobodan Mitrovi{\'c}, and
  Ronitt Rubinfeld.
\newblock {Improved Massively Parallel Computation Algorithms for MIS,
  Matching, and Vertex Cover}.
\newblock In {\em Proceedings of the 2018 ACM Symposium on Principles of
  Distributed Computing (PODC)}, pages 129--138, 2018.

\bibitem[GKK12]{goel2012communication}
Ashish Goel, Michael Kapralov, and Sanjeev Khanna.
\newblock {On the Communication and Streaming Complexity of Maximum Bipartite
  Matching}.
\newblock In {\em the Proceedings of the Symposium on Discrete Algorithms
  (SODA)}, pages 468--485, 2012.

\bibitem[GKM{\etalchar{+}}19]{GamlathKMSW19}
Buddhima Gamlath, Michael Kapralov, Andreas Maggiori, Ola Svensson, and David
  Wajc.
\newblock {Online Matching with General Arrivals}.
\newblock In {\em the Proceedings of the Symposium on Foundations of Computer
  Science (FOCS)}, pages 26--37, 2019.

\bibitem[GKMS19]{gamlath2019weighted}
Buddhima Gamlath, Sagar Kale, Slobodan Mitrovic, and Ola Svensson.
\newblock Weighted matchings via unweighted augmentations.
\newblock In {\em Proceedings of the 2019 ACM Symposium on Principles of
  Distributed Computing}, pages 491--500, 2019.

\bibitem[GU19]{ghaffari2019sparsifying}
Mohsen Ghaffari and Jara Uitto.
\newblock {Sparsifying Distributed Algorithms with Ramifications in Massively
  Parallel Computation and Centralized Local Computation}.
\newblock In {\em Proceedings of the Thirtieth Annual ACM-SIAM Symposium on
  Discrete Algorithms (SODA)}, pages 1636--1653. SIAM, 2019.

\bibitem[GW19]{GhaffariW19}
Mohsen Ghaffari and David Wajc.
\newblock {Simplified and Space-Optimal Semi-Streaming (2+epsilon)-Approximate
  Matching}.
\newblock In {\em Symposium on Simplicity in Algorithms (SOSA)}, volume~69,
  pages 13:1--13:8, 2019.

\bibitem[HHS21]{Hanauer2021}
Kathrin Hanauer, Monika Henzinger, and Christian Schulz.
\newblock {Recent Advances in Fully Dynamic Graph Algorithms}.
\newblock {\em CoRR}, 2021.

\bibitem[HK73]{hopcroft1973n}
John~E. Hopcroft and Richard~M. Karp.
\newblock {An $n^{5/2}$ Algorithm for Maximum Matchings in Bipartite Graphs}.
\newblock {\em SIAM Journal on Computing}, 2(4):225--231, 1973.

\bibitem[HS22]{huang20221}
Shang-En Huang and Hsin-Hao Su.
\newblock $(1-\epsilon)$-approximate maximum weighted matching in distributed,
  parallel, and semi-streaming settings.
\newblock {\em arXiv preprint arXiv:2212.14425}, 2022.

\bibitem[Kap13]{kapralov2013better}
Michael Kapralov.
\newblock {Better Bounds for Matchings in the Streaming Model}.
\newblock In {\em the Proceedings of the Symposium on Discrete Algorithms
  (SODA)}, pages 1679--1697, 2013.

\bibitem[Kap21]{kapralov2021space}
Michael Kapralov.
\newblock {Space Lower Bounds for Approximating Maximum Matching in the Edge
  Arrival Model}.
\newblock In {\em the Proceedings of the Symposium on Discrete Algorithms
  (SODA)}, pages 1874--1893, 2021.

\bibitem[KKS14]{kapralov2014approximating}
Michael Kapralov, Sanjeev Khanna, and Madhu Sudan.
\newblock {Approximating Matching Size from Random Streams}.
\newblock In {\em the Proceedings of the Symposium on Discrete Algorithms
  (SODA)}, pages 734--751, 2014.

\bibitem[KMM12]{konrad2012maximum}
Christian Konrad, Fr{\'e}d{\'e}ric Magniez, and Claire Mathieu.
\newblock {Maximum Matching in Semi-Streaming with Few Passes}.
\newblock In {\em Approximation, Randomization, and Combinatorial Optimization:
  Algorithms and Techniques (APPROX/RANDOM)}, pages 231--242. Springer, 2012.

\bibitem[KMNFT20]{kapralov2020space}
Michael Kapralov, Slobodan Mitrovi{\'c}, Ashkan Norouzi-Fard, and Jakab Tardos.
\newblock Space efficient approximation to maximum matching size from uniform
  edge samples.
\newblock In {\em Proceedings of the Fourteenth Annual ACM-SIAM Symposium on
  Discrete Algorithms}, pages 1753--1772. SIAM, 2020.

\bibitem[KN21]{konrad2021two}
Christian Konrad and Kheeran~K Naidu.
\newblock On two-pass streaming algorithms for maximum bipartite matching.
\newblock {\em arXiv preprint arXiv:2107.07841}, 2021.

\bibitem[Kon15]{konrad2015maximum}
Christian Konrad.
\newblock {Maximum Matching in Turnstile Streams}.
\newblock In {\em the Proceedings of the European Symposium on Algorithms
  (ESA)}, pages 840--852, 2015.

\bibitem[KS95]{Kalantari1995}
Bahman Kalantari and Ali Shokoufandeh.
\newblock {Approximation Schemes For Maximum Cardinality Matching}.
\newblock Technical report, Rutgers University, 1995.

\bibitem[KT17]{kale2017maximum}
Sagar Kale and Sumedh Tirodkar.
\newblock {Maximum Matching in Two, Three, and a Few More Passes Over Graph
  Streams}.
\newblock In {\em Approximation, Randomization, and Combinatorial Optimization.
  Algorithms and Techniques (APPROX/RANDOM)}, volume~81, pages 15:1--15:21,
  2017.

\bibitem[KVV90]{karp1990optimal}
Richard~M. Karp, Umesh~V. Vazirani, and Vijay~V. Vazirani.
\newblock {An Optimal Algorithm for On-line Bipartite Matching}.
\newblock In {\em the Proceedings of the Symposium on Theory of Computing
  (STOC)}, pages 352--358, 1990.

\bibitem[LPSP15]{lotker2015improved}
Zvi Lotker, Boaz Patt-Shamir, and Seth Pettie.
\newblock {Improved Distributed Approximate Matching}.
\newblock {\em Journal of the ACM (JACM)}, 62(5):1--17, 2015.

\bibitem[McG05]{mcgregor2005finding}
Andrew McGregor.
\newblock {Finding Graph Matchings in Data Streams}.
\newblock In {\em Approximation, Randomization and Combinatorial Optimization:
  Algorithms and Techniques (APPROX/RANDOM)}, pages 170--181. Springer, 2005.

\bibitem[McG14]{mcgregor2014graph}
Andrew McGregor.
\newblock {Graph Stream Algorithms: A Survey}.
\newblock {\em ACM SIGMOD Record}, 43(1):9--20, 2014.

\bibitem[MP80]{munro1980selection}
J.~Ian Munro and Mike~S. Paterson.
\newblock {Selection and Sorting with Limited Storage}.
\newblock {\em Theoretical Computer Science}, 12(3):315--323, 1980.

\bibitem[Mut05]{muthukrishnan2005data}
Shanmugavelayutham Muthukrishnan.
\newblock {\em Data streams: Algorithms and applications}.
\newblock Now Publishers Inc, 2005.

\bibitem[MV80]{micali1980v}
Silvio Micali and Vijay~V. Vazirani.
\newblock {An $O(\sqrt{ \vert V \vert} \vert E \vert)$ Algorithm for Finding
  Maximum Matching in General Graphs}.
\newblock In {\em the Proceedings of the Symposium on Foundations of Computer
  Science (FOCS)}, pages 17--27, 1980.

\bibitem[MY11]{mahdian2011online}
Mohammad Mahdian and Qiqi Yan.
\newblock {Online Bipartite Matching with Random Arrivals: An Approach Based on
  Strongly Factor-Revealing LPs}.
\newblock In {\em the Proceedings of Symposium on Theory of Computing (STOC)},
  pages 597--606, 2011.

\bibitem[Ona18]{onak2018round}
Krzysztof Onak.
\newblock {Round Compression for Parallel Graph Algorithms in Strongly
  Sublinear Space}.
\newblock {\em arXiv preprint arXiv:1807.08745}, 2018.

\bibitem[Pre99]{preis1999linear}
Robert Preis.
\newblock {Linear Time $1/2$-Approximation Algorithm for Maximum Weighted
  Matching in General Graphs}.
\newblock In {\em the Proceedings of the Symposium on Theoretical Aspects of
  Computer Science (STACS)}, pages 259--269, 1999.

\bibitem[PS17]{paz20172+}
Ami Paz and Gregory Schwartzman.
\newblock {A $(2+\varepsilon)$-Approximation for Maximum Weight Matching in the
  Semi-Streaming Model}.
\newblock In {\em the Proceedings of the Symposium on Discrete Algorithms
  (SODA)}, pages 2153--2161, 2017.

\bibitem[Sch03]{schrijver2003combinatorial}
Alexander Schrijver.
\newblock {\em {Combinatorial Optimization: Polyhedra and Efficiency}},
  volume~24.
\newblock Springer Science \& Business Media, 2003.

\bibitem[Sol16]{solomon2016fully}
Shay Solomon.
\newblock {Fully Dynamic Maximal Matching in Constant Update Time}.
\newblock In {\em the Proceedings of the Symposium on Foundations of Computer
  Science (FOCS)}, pages 325--334, 2016.

\bibitem[Tir18]{tirodkar2018deterministic}
Sumedh Tirodkar.
\newblock {Deterministic Algorithms for Maximum Matching on General Graphs in
  the Semi-Streaming Model}.
\newblock In {\em the Proceedings of the Conference on Foundations of Software
  Technology and Theoretical Computer Science (FSTTCS)}, 2018.

\bibitem[Zel12]{zelke2012weighted}
Mariano Zelke.
\newblock {Weighted Matching in the Semi-Streaming Model}.
\newblock {\em Algorithmica}, 62(1-2):1--20, 2012.

\end{thebibliography}

\end{document}